\newtheorem{thm}{Theorem}[section]
\newtheorem{lemma}[thm]{Lemma}
\newtheorem{definition}[thm]{Definition}
\newtheorem{remark}[thm]{Remark}
\newcommand{\inner}[1]{\langle #1 \rangle}
\newcommand{\abs}[1]{\lvert#1\rvert}
\newcommand{\sgn}{{\rm sign}}
\newcommand{\argmin}{\operatorname*{argmin}}
\newcommand\extrafootertext[1]{%
    \bgroup
    \renewcommand\thefootnote{\fnsymbol{footnote}}%
    \renewcommand\thempfootnote{\fnsymbol{mpfootnote}}%
    \footnotetext[0]{#1}%
    \egroup
}
\numberwithin{equation}{section}
\title{On the Performance of Amplitude-Based Models for Low-Rank Matrix Recovery}
\author{%
  Huanmin~Ge and Zhiqiang~Xu
}
\date{}
\begin{document}

\maketitle

\begin{abstract}
In this paper, we focus on low-rank phase retrieval, which aims to reconstruct a matrix $\bm{X}_0\in \mathbb{R}^{n\times m}$ with ${\rm rank}(\bm{X}_0)\le r$ from noise-corrupted amplitude measurements $\bm{y}=|\mathcal{A}(\bm{X}_0)|+\bm{\eta}$, where $\mathcal{A}:\mathbb{R}^{n\times m}\rightarrow \mathbb{R}^{p}$ is a linear map and $\bm{\eta}\in \mathbb{R}^p$ is the noise vector. We first examine the rank-constrained nonlinear least-squares model $\hat{\bm{X}}\in \mathop{\argmin}\limits_{\substack{\bm{X}\in \mathbb{R}^{n\times m},\mathrm{rank}(\bm{X})\le r}}\||\mathcal{A}(\bm{X})|-\bm{y}\|_2^2$ to estimate $\bm{X}_0$, and demonstrate that the reconstruction error satisfies
$\min\{\|\hat{\bm{X}}-\bm{X}_0\|_F, \|\hat{\bm{X}}+\bm{X}_0\|_F\}\lesssim \frac{\|\bm{\eta}\|_2}{\sqrt{p}}$
with high probability, provided $\mathcal{A}$ is a Gaussian measurement ensemble and $p\gtrsim (m+n)r$. We also prove that the error bound $\frac{\|\bm{\eta}\|_2}{\sqrt{p}}$ is tight up to a constant. Furthermore, we relax the rank constraint to a nuclear-norm constraint. Hence, we propose the Lasso model for low-rank phase retrieval, i.e., the constrained nuclear-norm model and the unconstrained version. We also establish comparable theoretical guarantees for these models. To achieve this, we introduce a strong restricted isometry property (SRIP) for the linear map $\mathcal{A}$, analogous to the strong RIP in phase retrieval.
This work provides a unified treatment that extends existing results in both phase retrieval and  low-rank matrix recovery from rank-one measurements.
\end{abstract}

{\textbf{Keywords:} Phase retrieval, Low-rank  matrix recovery,
Restricted isometry property}
\extrafootertext{H.~Ge is with the School of Sports Engineering, Beijing Sport University, Beijing, 100084, China (e-mail: gehuanmin@bsu.edu.cn). }
\extrafootertext{Z.~Xu (Corresponding author) is with   State Key Laboratory of Mathematical Sciences, Academy of Mathematics and Systems Science, Chinese Academy of Sciences, Beijing 100190, China;    School of Mathematical Sciences, University of Chinese Academy of Sciences, Beijing 100049, China. (e-mail: xuzq@lsec.cc.ac.cn).}
\extrafootertext{Huanmin Ge  supported by the NSFC (No.12371094), Beijing Natural Science Foundation (No.1232020). Zhiqiang Xu is supported by the National
Science Fund for Distinguished Young Scholars (12025108) and NSFC (12471361, 12021001, 12288201).}

\begin{center}
\section{Introduction}
\end{center}

\subsection{Problem Setup}

\subsubsection{Phase retrieval}
Phase retrieval  has received much recent attention in the applied mathematics and statistics, with numerous applications spanning fields such as X-ray crystallography \cite{haeusele2023advanced, pfeiffer2006phase}, coherent diffraction imaging \cite{bacca2019super, song2022fast}, and quantum mechanics \cite{corbett2006pauli}.
The  goal of phase retrieval is to recover the target signal $\bm{x}_0\in \mathbb{F}^{n}$ with $\mathbb{F}\in\{\mathbb{R}, \mathbb{C}\}$, from  the magnitude of linear observations
\begin{align*}
y_i=\abs{\langle\bm{a}_i,\bm{x}_0\rangle}+\eta_i,\ \ \ \ \ \ i\in [p]:=\{1,2,\ldots,p\}
\end{align*}
where $\bm{a}_i\in \mathbb{F}^n $ are the given measurement vectors and $\bm{\eta}=(\eta_1, \eta_2 , \ldots , \eta_p)^{\top}\in \mathbb{R}^{p}$ is a noise vector.
Here, $^\top$ denotes the operation of taking the transpose of a matrix or vector.
 An  intuitive method to recovery  $\bm{x}_0$ is  through the nonlinear least squares model
\begin{align}\label{pro:squavector}
\argmin_{\bm{x}\in \mathbb{F}^{n}}\||\bm{A}\bm{x}|-\bm{y}\|_2^2,
\end{align}
where $\bm{A}:=[\bm{a}_1, \ldots, \bm{a}_p]^{*}\in \mathbb{F}^{p\times n}$ and $\bm{y}:=(y_1,\ldots,y_p)^{\top}$.
Throughout this paper, $^*$
  denotes the conjugate transpose when $\mathbb{F} = \mathbb{C}$ and the transpose when $\mathbb{F} = \mathbb{R}$.
Numerous theoretical results on phase retrieval, have been extensively introduced and studied in recent years.
Particularly, the problem of finding the minimal measurement number for phase retrieval
 has received much recent attention \cite{balan2006signal, conca2015algebraic, wang2019generalized}. Concurrently,
the estimation performance of the nonlinear least squares model \eqref{pro:squavector} for phase retrieval has been investigated and further developed (see \cite{huang2020estimation,xia2024performance}).

\subsubsection{Low-rank phase retrieval}
In this paper, we focus on low-rank phase retrieval, where the goal is to estimate a low-rank matrix $\bm{X}_0\in {\mathbb R}^{n\times m}$ from noisy amplitude-based measurements, which are described by
\begin{align*}
\bm{y}=|\mathcal{A}(\bm{X}_0)|+\bm{\eta},
\end{align*}
Here, $\mathcal{A}:{\mathbb R}^{n\times m}\rightarrow {\mathbb R}^p$ is a  linear operator defined as
\[
{\mathcal A}(\bm{X}_0)=(\langle \bm{A}_1, \bm{X}_0\rangle, \langle \bm{A}_2, \bm{X}_0\rangle,\ldots,\langle \bm{A}_p, \bm{X}_0\rangle)^{\top}.
\]
In this context,  $\langle \bm{A}_i, \bm{X}_0 \rangle = \mathrm{Tr}(\bm{A}_i^{\top}\bm{X}_0)$ denotes the standard inner product, with $\mathrm{Tr}(\cdot)$ being the trace operator.

Low-rank phase retrieval (LRPR), initially introduced in \cite{lowrank21}, received further attention in \cite{lee2021phase,pmlr-v97-nayer19a, 9050801, 9537806, 9912351}.
Low-rank phase retrieval  has emerged as a pivotal technique for facilitating rapid and cost-effective dynamic phaseless imaging across various domains, including
dynamic astronomical imaging \cite{butala2007monte} and Fourier ptychography \cite{chen2020fast, eckert2016algorithmic, jagatap2019sample}.


It is noteworthy that  the amplitude-based low-rank phase retrieval formulation subsumes a broad class of phase retrieval problems, including several well-studied special cases:

\begin{enumerate}[label=(\roman*)]
\item When $m=1$, specifically with $\bm{X}_0=\bm{x}_0\in \mathbb{R}^{n}$ and $\bm{A}_i=\bm{a}_i\in \mathbb{R}^n$ for all $i\in [p]$, the problem reduces to the classical vector phase retrieval problem.
\item When $m = n$, with symmetric $\bm{A}_i\in \mathbb{R}^{n\times n}$ and $\bm{X}_0 = \bm{x}_0 \bm{x}_0^{\top} \in \mathbb{R}^{n \times n}$, the problem simplifies to the generalized phase retrieval problem \cite{ huang2021almost,wang2019generalized}.
\item In cases where $\bm{A}_i = \bm{a}_i \bm{a}_i^{\top}\in \mathbb{R}^{n\times n}$ and $\bm{X}_0 = \bm{x}_0 \bm{x}_0^{\top} \in \mathbb{R}^{n \times n}$, with $\bm{a}_i, \bm{x}_0\in \mathbb{R}^n$ for $i \in [p]$, the problem corresponds to phase retrieval from quadratic measurements of the form $y_i = \abs{\inner{\bm{a}_i, \bm{x}_0 }}^2 + \eta_i$.
\item When each $\bm{A}_i$ (for $i\in [p]$) represents an orthogonal projection and $\bm{X}_0 = \bm{x}_0 \bm{x}_0^{\top} \in \mathbb{R}^{n \times n}$, the problem becomes the fusion frame (projection) phase retrieval problem \cite{cahill2013phase,edidin2017projections}.
    \item     When $\bm{A}_i = \bm{a}_i \bm{a}_i^{\top} \in \mathbb{R}^{n\times n}$ for $i \in  [p]$ and $\bm{X}_0 \in  \mathbb{R}^{n\times n}$ is positive semidefinite, the problem reduces to the symmetric ROP ( rank-one projection
model for low-rank matrix recovery) problem \cite{ROP}.
\end{enumerate}

When the target matrix $\bm{X}_0\in \mathbb{R}^{n\times m}$  is low-rank (i.e., $\mathrm{rank}(\bm{X}_0) \leq r$), a natural approach to recover $\bm{X}_0$ is the rank-constrained least squares model
\begin{align}\label{pro:squarank}
\mathop{\argmin}_{\bm{X} \in \mathbb{R}^{n\times m}, \mathrm{rank}(\bm{X})\leq r} \ \ \ \||\mathcal{A}(\bm{X})|-\bm{y}\|_2.
\end{align}
Leveraging the relationship between matrix rank and nuclear norm in optimization, we propose a nuclear norm constrained least squares model to recover $\bm{X}_0$:
 \begin{align}\label{leastproblemnu}
\mathop{\argmin}_{\bm{X} \in \mathbb{R}^{n\times m}} \ \ \ \||\mathcal{A}(\bm{X})|-\bm{y}\|_2 \ \ \ \ \text{s.t.} \ \ \ \  \|\bm{X}\|_{*}\leq R,
\end{align}
where $\|\bm{X}\|_{*}$  denotes the nuclear norm of $\bm{X}$, and $R$ is a parameter controlling the desired rank of the solution. This formulation is analogous to the Lasso model.

In scenarios where the noise is bounded, i.e., $\|\bm{\eta}\|_2\leq \epsilon$, we can also employ the constrained nuclear norm minimization model to recover the low-rank matrix $\bm{X}_0$:
\begin{align}\label{leastproblemcon}
\mathop{\argmin}_{\bm{X} \in \mathbb{R}^{n \times m}} \ \ \ \ \|\bm{X}\|_{*} \quad \text{s.t.} \quad \||\mathcal{A}(\bm{X})| - \bm{y}\|_2 \leq \epsilon.
\end{align}
 The corresponding unconstrained approach is given by
\begin{align}\label{leastproblem}
\mathop{\argmin}_{\bm{X} \in \mathbb{R}^{n\times m}} \||\mathcal{A}(\bm{X})|-\bm{y}\|_2^2+\lambda\|\bm{X}\|_{*}.
\end{align}

\subsection{Our Contributions}

This paper aims to assess the performance of established amplitude-based models for recovering low-rank matrices. Specifically, we provide theoretical guarantees for the proposed models in the matrix recovery problem, where $\mathcal{A}: \mathbb{R}^{n \times m} \rightarrow \mathbb{R}^p$
represents a Gaussian measurement ensemble. The definition of the Gaussian measurement ensemble is given below:
\begin{definition}\label{def:G}
The linear operator $\mathcal{A}: \mathbb{R}^{n \times m} \rightarrow \mathbb{R}^p$ defined as
\[
\mathcal{A}(\bm{X}_0) = (\langle \bm{A}_1, \bm{X}_0 \rangle, \langle \bm{A}_2, \bm{X}_0 \rangle, \ldots, \langle \bm{A}_p, \bm{X}_0 \rangle)^{\top}, \text{ for } \bm{X}_0 \in \mathbb{R}^{n \times m},
\]
 is referred to as a Gaussian measurement ensemble if $\{\bm{A}_i\}_{i=1}^p$
is a set of independent random matrices, where each $\bm{A}_i \in \mathbb{R}^{n \times m}$
has entries independently drawn from the standard normal distribution $\mathcal{N}(0,1)$.
\end{definition}

Before presenting our contributions, we will first introduce some notations.
We use the notation $a\lesssim b$ to denote $a\leq C b$ for $a,b\in \mathbb{R}$, where $C$ is an absolute constant. Similarly, $a \gtrsim b$ denotes $a \geq C b$.
Additionally, $\|\cdot\|_F$ represents the Frobenius norm.

\subsubsection{ Rank-constrained Least Squares Model}
We  establish the reconstruction error  of the rank-constrained least squares model \eqref{pro:squarank} in the following theorem.
\begin{thm}\label{thm:squarank}
Let \(\mathcal{A} : \mathbb{R}^{n \times m} \rightarrow \mathbb{R}^p\) be a Gaussian measurement ensemble. Assume that \(p \gtrsim (m+n)r\). For all  matrix \(\bm{X}_0 \in \mathbb{R}^{n \times m}\) with \(\mathrm{rank}(\bm{X}_0) \leq r\), consider the measurement \(\bm{y} = |\mathcal{A}(\bm{X}_0)| + \bm{\eta}\),
where $\bm{\eta}\in {\mathbb R}^p$ is the noise vector. Let \(\hat{\bm{X}}^r \in \mathbb{R}^{m \times n}\) be any solution to \eqref{pro:squarank}, i.e.,
\begin{align*}
\hat{\bm{X}}^{r}\in \mathop{\argmin}\limits_{\bm{X} \in \mathbb{R}^{n\times m}, \mathrm{rank}(\bm{X})\leq r} \ \||\mathcal{A}(\bm{X})|-\bm{y}\|_2.
\end{align*}
Then
\begin{equation}\label{eq:upper}
\min\{\|\hat{\bm{X}}^r-\bm{X}_0\|_F, \|\hat{\bm{X}}^r+\bm{X}_0\|_F\}\lesssim \frac{\|\bm{\eta}\|_2}{\sqrt{p}}
\end{equation}
holds with probability at least $1-2\exp(-cp)$, where $c$ is a positive constant.
\end{thm}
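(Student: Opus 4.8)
The plan is to derive \eqref{eq:upper} from the optimality of $\hat{\bm{X}}^r$ together with a lower ``isometry'' estimate for the amplitude map restricted to low-rank matrices, i.e.\ the strong restricted isometry property (SRIP). First I would note that $\bm{X}_0$ is itself feasible for \eqref{pro:squarank} because $\mathrm{rank}(\bm{X}_0)\le r$. Hence optimality of $\hat{\bm{X}}^r$ together with $\bm{y}=|\mathcal{A}(\bm{X}_0)|+\bm{\eta}$ gives $\||\mathcal{A}(\hat{\bm{X}}^r)|-\bm{y}\|_2\le \||\mathcal{A}(\bm{X}_0)|-\bm{y}\|_2=\|\bm{\eta}\|_2$, and the triangle inequality yields
\[
\big\||\mathcal{A}(\hat{\bm{X}}^r)|-|\mathcal{A}(\bm{X}_0)|\big\|_2\le \big\||\mathcal{A}(\hat{\bm{X}}^r)|-\bm{y}\big\|_2+\big\|\bm{y}-|\mathcal{A}(\bm{X}_0)|\big\|_2\le 2\|\bm{\eta}\|_2 .
\]
The whole statement then reduces to the SRIP lower bound
\[
\big\||\mathcal{A}(\bm{Z}_1)|-|\mathcal{A}(\bm{Z}_2)|\big\|_2\ \gtrsim\ \sqrt{p}\,\min\{\|\bm{Z}_1-\bm{Z}_2\|_F,\ \|\bm{Z}_1+\bm{Z}_2\|_F\},
\]
holding uniformly over all $\bm{Z}_1,\bm{Z}_2$ of rank at most $r$; applying it with $\bm{Z}_1=\hat{\bm{X}}^r$ and $\bm{Z}_2=\bm{X}_0$ and dividing by $\sqrt{p}$ is exactly \eqref{eq:upper}.

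Second, I would prove the expectation version of this bound. For a single Gaussian $\bm{A}$, set $g=\langle\bm{A},\bm{Z}_1\rangle$ and $h=\langle\bm{A},\bm{Z}_2\rangle$, a centered Gaussian pair with $\mathbb{E}g^2=\|\bm{Z}_1\|_F^2$, $\mathbb{E}h^2=\|\bm{Z}_2\|_F^2$ and $\mathbb{E}[gh]=\langle\bm{Z}_1,\bm{Z}_2\rangle$. Then $\mathbb{E}\big[(|g|-|h|)^2\big]=\|\bm{Z}_1\|_F^2+\|\bm{Z}_2\|_F^2-2\,\mathbb{E}|gh|$, and the closed form for $\mathbb{E}|gh|$ of a bivariate normal (which is strictly smaller than $\|\bm{Z}_1\|_F\|\bm{Z}_2\|_F$ unless the two are perfectly aligned) lets me show
\[
\mathbb{E}\big[(|g|-|h|)^2\big]\ \ge\ \Big(1-\tfrac{2}{\pi}\Big)\min\{\|\bm{Z}_1-\bm{Z}_2\|_F^2,\ \|\bm{Z}_1+\bm{Z}_2\|_F^2\}.
\]
This is the pointwise engine of the SRIP; equivalently one may use the identity $(|g|-|h|)^2=\min\{(g-h)^2,(g+h)^2\}$ to read off the same lower bound.

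Third, I would upgrade this to a uniform high-probability statement by concentration and a covering argument. Each summand $(|\langle\bm{A}_i,\bm{Z}_1\rangle|-|\langle\bm{A}_i,\bm{Z}_2\rangle|)^2$ is subexponential with norm controlled by $\|\bm{Z}_1\|_F^2+\|\bm{Z}_2\|_F^2$, so for a fixed normalized pair the empirical average $\tfrac1p\sum_i(\cdots)$ concentrates around its mean by Bernstein's inequality. After normalizing by $\min\{\|\bm{Z}_1-\bm{Z}_2\|_F,\|\bm{Z}_1+\bm{Z}_2\|_F\}$, I would take an $\epsilon$-net of the set of rank-$\le 2r$ matrices (whose cardinality is $\exp(C(m+n)r)$), apply a union bound over the net, and extend from the net to all low-rank pairs using that the amplitude map is $1$-Lipschitz and $\|\mathcal{A}\|$ is bounded on low-rank matrices with high probability. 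The assumption $p\gtrsim(m+n)r$ makes the net exponent $C(m+n)r$ negligible against the concentration exponent $cp$, yielding failure probability $2\exp(-cp)$ as claimed.

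The main obstacle is this last step. The summands are nonsmooth (through the absolute values and the induced $\min$), so a standard RIP/chaining bound for the \emph{upper} side does not directly deliver the \emph{lower} side that I need; the lower isometry must instead be obtained by a small-ball/Paley--Zygmund mechanism that converts the expectation bound into a probabilistic lower bound, made uniform over the non-convex low-rank manifold. Simultaneously controlling the $\min$, the intrinsic sign ambiguity (encoded by the pair $\bm{Z}_1-\bm{Z}_2$, $\bm{Z}_1+\bm{Z}_2$), and the non-convexity of the rank constraint is the delicate part, and it is precisely here that the SRIP developed in the paper carries the technical weight.
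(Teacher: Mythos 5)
Your proposal is correct in outline, and its noise-handling is genuinely different from (and in this instance cleaner than) the paper's. The paper does not use your triangle-inequality reduction: it expands the square via Lemma \ref{lemma3.1}, producing the data-dependent sign-pattern partition $T_1,\dots,T_4$ of $[p]$, lower-bounds the quadratic term by the subset-SRIP of Lemma \ref{spaceSRIP}, and then must separately control the noise cross term $\sum_{i\in T}\eta_i\langle\bm{A}_i,\bm{H}\rangle$ uniformly over all subsets $T\subset[p]$ and all $\bm{H}\in N_{2r}^{*}$ (Lemma \ref{lem:upper1}, a Gaussian-width plus Lipschitz-concentration estimate with a union bound over the $2^p$ subsets), finally solving a quadratic inequality; the two probabilistic events account for the $1-2\exp(-cp)$. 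Your route removes the cross term at the outset via $\||\mathcal{A}(\hat{\bm{X}}^r)|-|\mathcal{A}(\bm{X}_0)|\|_2\le 2\|\bm{\eta}\|_2$, so only the uniform amplitude lower isometry remains — and, a point worth making explicit, that bound need not be re-proved by your Bernstein/net/small-ball program: it follows directly from the paper's Lemma \ref{spaceSRIP} applied at order $2r$. Indeed, with $\bm{H}^{\mp}=\hat{\bm{X}}^r\mp\bm{X}_0$, the identity $\bigl||a|-|b|\bigr|=\min\{|a-b|,|a+b|\}$ (which you note in its squared form) gives
\[
\||\mathcal{A}(\hat{\bm{X}}^r)|-|\mathcal{A}(\bm{X}_0)|\|_2^2=\sum_{i=1}^p\min\bigl\{\langle\bm{A}_i,\bm{H}^-\rangle^2,\ \langle\bm{A}_i,\bm{H}^+\rangle^2\bigr\}\ \ge\ \sum_{i\in I}\langle\bm{A}_i,\bm{H}^{\pm}\rangle^2\ \ge\ p\,\theta_-\min\bigl\{\|\bm{H}^-\|_F^2,\ \|\bm{H}^+\|_F^2\bigr\},
\]
where $I$ is whichever cell of the induced two-cell partition has cardinality at least $p/2$ (pigeonhole, as in Remark \ref{remark3.2}) — absorbing exactly this data-dependent index set is what the ``$\min$ over $\#I\ge p/2$'' in Definition \ref{def:SRIPM} is built for. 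This completes the theorem on the SRIP event alone, with probability $1-\exp(-cp)$, slightly better than stated. What the paper's heavier machinery buys is reusability: the expansion of Lemma \ref{lemma3.1} together with the cross-term supremum of Lemma \ref{lem:upper1} is what powers Theorems \ref{thm:rankmin}--\ref{thm:unell2}, in particular the unconstrained Lasso, where the penalty $\lambda\|\bm{X}\|_{*}$ blocks your triangle-inequality shortcut. One caution: your stand-alone sketch of the lower isometry (the expectation bound with constant $1-2/\pi$, Bernstein for the subexponential summands, net extension, Paley--Zygmund for the lower side) is the standard small-ball route and looks feasible, but as written it is only a program — you yourself flag the uniform lower side as the delicate step — so a complete writeup should either carry that out in full or simply invoke Lemma \ref{spaceSRIP} as above.
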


Naturally, one may question the tightness of the upper bound presented in (\ref{eq:upper}). In Theorem \ref{thm:squarankone}, we demonstrate that this upper bound is indeed tight up to a constant factor.

\begin{thm}\label{thm:squarankone}
Let $\mathcal{A}:\mathbb{R}^{n\times m}\rightarrow \mathbb{R}^p$ be a Gaussian measurement ensemble.
 Assume that $p\gtrsim (m+n)r$ and the noise vector $\bm{\eta}=(\eta_1,\ldots,\eta_p)^\top\in {\mathbb R}^p$  satisfies
\begin{align}\label{e:noiseone}
\Big|\sum_{i=1}^p\eta_i\Big| \geq C_0\sqrt{p}\|\bm{\eta}\|_2
\end{align}
for some absolute constant $C_0\in (0,1)$.
Then for any fixed  $\bm{X}_0\in {\mathbb R}^{n\times m}$ with $\mathrm{rank}(\bm{X}_0)\leq r$,  $\|\bm{X}_0\|_F\geq \frac{\|\bm{\eta}\|_2}{\sqrt{p}}$
and $\bm{y}=|\mathcal{A}(\bm{X}_0)|+\bm{\eta}$, the solution  $\hat{\bm{X}}^{r}\in \mathbb{R}^{n\times m}$ to \eqref{pro:squarank}
 satisfies
 \begin{align}\label{eq:lower}
\min\{\|\hat{\bm{X}}^{r}-\bm{X}_0\|_F, \|\hat{\bm{X}}^{r}+\bm{X}_0\|_F\}\gtrsim \frac{\|\bm{\eta}\|_2}{\sqrt{p}}
\end{align}
 with probability at least $1-5\exp(-cp)$, where $c>0$ is an  absolute constant.
\end{thm}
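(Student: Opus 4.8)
The plan is to exhibit, for the fixed signal $\bm{X}_0$, a rescaled competitor that provably achieves a strictly smaller objective value than $\bm{X}_0$ itself, and then to argue that this forced gap is incompatible with the minimizer lying within distance $o(\|\bm{\eta}\|_2/\sqrt{p})$ of $\pm\bm{X}_0$. Set $\bm{v} := |\mathcal{A}(\bm{X}_0)|$, so that $\bm{y} = \bm{v} + \bm{\eta}$ and $|\mathcal{A}(s\bm{X}_0)| = s\bm{v}$ for every $s \geq 0$. Since each $s\bm{X}_0$ is feasible for \eqref{pro:squarank}, minimizing $\|(s-1)\bm{v} - \bm{\eta}\|_2^2$ over $s \geq 0$ shows that the optimal value $M := \||\mathcal{A}(\hat{\bm{X}}^r)| - \bm{y}\|_2$ satisfies
\begin{equation*}
M^2 \;\leq\; \|\bm{\eta}\|_2^2 - \frac{\langle \bm{v}, \bm{\eta}\rangle^2}{\|\bm{v}\|_2^2},
\end{equation*}
where I use that the unconstrained optimizer $s^\ast = 1 + \langle\bm{v},\bm{\eta}\rangle/\|\bm{v}\|_2^2$ is nonnegative. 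The hypothesis $\|\bm{X}_0\|_F \geq \|\bm{\eta}\|_2/\sqrt{p}$ ensures $\|\bm{v}\|_2 \gtrsim \|\bm{\eta}\|_2$ and hence $s^\ast \geq 0$ in the relevant regime; in the degenerate case $s^\ast < 0$ the choice $s = 0$ already yields objective $\|\bm{v} + \bm{\eta}\|_2 < \|\bm{\eta}\|_2$, so the conclusion only improves.

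The crux is to show $\langle\bm{v},\bm{\eta}\rangle^2/\|\bm{v}\|_2^2 \gtrsim \|\bm{\eta}\|_2^2$, which turns the displayed bound into $M \leq \sqrt{1 - c}\,\|\bm{\eta}\|_2$ for an absolute $c \in (0,1)$. Here the coherent-noise assumption \eqref{e:noiseone} enters. Since the entries $v_i = |\langle \bm{A}_i, \bm{X}_0\rangle|$ are i.i.d.\ half-normal with mean $\mu = \sqrt{2/\pi}\,\|\bm{X}_0\|_F$, I would decompose $\bm{v} = \mu\bm{1} + (\bm{v} - \mu\bm{1})$ and write $\langle\bm{v},\bm{\eta}\rangle = \mu\sum_{i=1}^p \eta_i + \sum_{i=1}^p (v_i - \mu)\eta_i$. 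The first term has magnitude $\geq \mu C_0\sqrt{p}\,\|\bm{\eta}\|_2$ by \eqref{e:noiseone}, while the second, conditional on the fixed $\bm{\eta}$, is a weighted sum of independent mean-zero sub-Gaussian variables with sub-Gaussian parameter of order $\|\bm{X}_0\|_F\,\|\bm{\eta}\|_2$; a Hoeffding-type tail bound makes it at most half of the first term with probability $1 - 2\exp(-c'p)$, and crucially the factors $\|\bm{X}_0\|_F$ cancel so that $c'$ is absolute. Combining this with $\|\bm{v}\|_2^2 = \|\mathcal{A}(\bm{X}_0)\|_2^2 \asymp p\,\|\bm{X}_0\|_F^2$, which follows from the restricted isometry property of Gaussian ensembles for rank-$\le 2r$ matrices once $p \gtrsim (m+n)r$ (as already used for Theorem \ref{thm:squarank}), delivers the claimed lower bound on $\langle\bm{v},\bm{\eta}\rangle^2/\|\bm{v}\|_2^2$.

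Finally, I would lower-bound the objective at any rank-$\le r$ matrix $\bm{X}$ in terms of its distance to $\pm\bm{X}_0$: choosing the sign $\epsilon\in\{\pm1\}$ achieving $\delta := \min\{\|\bm{X}-\bm{X}_0\|_F, \|\bm{X}+\bm{X}_0\|_F\}$ and using $|\mathcal{A}(\epsilon\bm{X}_0)| = \bm{v}$ together with the reverse triangle inequality and $\big\|\,|\mathcal{A}(\bm{X})| - |\mathcal{A}(\epsilon\bm{X}_0)|\,\big\|_2 \leq \|\mathcal{A}(\bm{X} - \epsilon\bm{X}_0)\|_2 \lesssim \sqrt{p}\,\delta$ (the last step being the RIP upper bound on the rank-$\le 2r$ matrix $\bm{X}-\epsilon\bm{X}_0$), I obtain $\||\mathcal{A}(\bm{X})| - \bm{y}\|_2 \geq \|\bm{\eta}\|_2 - C_1\sqrt{p}\,\delta$. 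Applying this to $\bm{X} = \hat{\bm{X}}^r$ and invoking $\||\mathcal{A}(\hat{\bm{X}}^r)| - \bm{y}\|_2 = M \le \sqrt{1-c}\,\|\bm{\eta}\|_2$ gives $C_1\sqrt{p}\,\delta \geq (1-\sqrt{1-c})\|\bm{\eta}\|_2$, i.e.\ $\delta \gtrsim \|\bm{\eta}\|_2/\sqrt{p}$; a union bound over the RIP event and the tail bound for the fluctuation term accounts for the stated probability $1 - 5\exp(-cp)$. I expect the middle step to be the main obstacle: controlling $\langle\bm{v},\bm{\eta}\rangle$ for the \emph{fixed}, merely coherent noise forces one to separate the deterministic alignment of $\bm{v}$ with $\bm{1}$ guaranteed by \eqref{e:noiseone} from the random fluctuation of $\bm{v}$, sharply enough that the $\|\bm{X}_0\|_F$-dependence cancels and the failure probability is exponentially small in $p$.
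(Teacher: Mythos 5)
Your argument is correct, but it takes a genuinely different route from the paper's proof of Theorem \ref{thm:squarankone}. The paper first disposes of $\hat{\bm{X}}^r=\bm{0}$ using $\|\bm{X}_0\|_F\ge \|\bm{\eta}\|_2/\sqrt{p}$, then factorizes $\hat{\bm{X}}^r=\sum_{i=1}^r\hat{\bm{u}}_i\hat{\bm{v}}_i^{\top}$, observes that the factors solve \eqref{pro:squarankv}, and extracts from the subgradient optimality conditions the exact stationarity identity $\sum_{i=1}^{p}\bigl(|\langle\bm{A}_i,\hat{\bm{X}}^r\rangle|-y_i\bigr)\,|\langle\bm{A}_i,\hat{\bm{X}}^r\rangle|=0$; the coherence hypothesis \eqref{e:noiseone} then enters through Lemma \ref{lem:unbound} (II), a lower bound $\bigl|\sum_{i}\eta_i|\langle\bm{A}_i,\bm{X}\rangle|\bigr|\gtrsim\sqrt{p}\,\|\bm{\eta}\|_2$ that must hold \emph{uniformly} over unit-Frobenius rank-$r$ matrices (hence an $\epsilon$-net argument), while Lemma \ref{spaceSRIP} bounds the other side of the identity and $\|\hat{\bm{X}}^r\|_F$ cancels. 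You bypass optimality conditions entirely: scanning the feasible ray $\{s\bm{X}_0: s\ge 0\}$ gives the variational bound $M^2\le \|\bm{\eta}\|_2^2-\langle\bm{v},\bm{\eta}\rangle^2/\|\bm{v}\|_2^2$ on the optimal value (writing $\bm{v}=|\mathcal{A}(\bm{X}_0)|$), concentration is needed only at the \emph{fixed} matrix $\bm{X}_0$ (your decomposition $\bm{v}=\mu\bm{1}+(\bm{v}-\mu\bm{1})$ with half-normal mean $\mu=\sqrt{2/\pi}\,\|\bm{X}_0\|_F$, plus sub-Gaussian Hoeffding as in Lemma \ref{lem:HTI}, with the $\|\bm{X}_0\|_F$ factors cancelling exactly as you note), and the resulting constant-factor gap $M\le\sqrt{1-c}\,\|\bm{\eta}\|_2$ is converted into $\delta\gtrsim \|\bm{\eta}\|_2/\sqrt{p}$ by the reverse triangle inequality together with the uniform RIP upper bound on rank-$\le 2r$ differences --- the only place where uniformity over the random minimizer is needed. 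Your treatment of the degenerate case $s^{\ast}<0$ is also quantitatively sound, not merely a strictness remark: there $\|\bm{v}+\bm{\eta}\|_2^2\le\|\bm{\eta}\|_2^2-\|\bm{v}\|_2^2\le \delta_{2r}\|\bm{\eta}\|_2^2$ once $\|\bm{v}\|_2^2\ge(1-\delta_{2r})\,p\,\|\bm{X}_0\|_F^2$, and this is exactly where the hypothesis $\|\bm{X}_0\|_F\ge\|\bm{\eta}\|_2/\sqrt{p}$ is consumed, paralleling its role in the paper's zero-minimizer case. The paper's route buys the stationarity identity and the uniform Lemma \ref{lem:unbound}, which is reused in Theorem \ref{thm:squaranks}; yours is more elementary --- no bilinear parametrization, no net argument beyond the RIP event itself, and an explicit gap constant of order $C_0^2$ --- and the union of the fixed-matrix concentration \eqref{e:ceninequality}, the Hoeffding tail, and the RIP event matches the stated $1-5\exp(-cp)$ probability budget.
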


  Theorem \ref{thm:squarankone} establishes that the bound $\frac{\|\bm{\eta}\|_2}{\sqrt{p}}$
 is tight for a general noise vector $\bm{\eta}$ satisfying certain conditions, including a relatively large $\abs{\sum_{i=1}^p \eta_i}$.
 Inspired by the findings presented in \cite{xia2024performance}, we establish an enhanced bound in the following theorem. This improved bound applies to noise vectors $\bm{\eta}$ that meet certain criteria, particularly when the absolute sum of their components, $\abs{\sum_{i=1}^p \eta_i}$, is relatively small. Our refinement offers a more nuanced and accurate description of how the error bound behaves under various noise conditions.

\begin{thm}\label{thm:squaranks}
Let $\mathcal{A}:\mathbb{R}^{n\times m}\rightarrow \mathbb{R}^p$ be a Gaussian measurement ensemble.
Assume that
$p\gtrsim (m+n)r$ and the noise vector $\bm{\eta}=(\eta_1,\ldots,\eta_p)^\top\in {\mathbb R}^p$  satisfies
\begin{align}\label{e:noise}
\|\bm{\eta}\|_2\lesssim \sqrt{p} \ \  \mathrm{and } \ \ \Big|{\sum_{i=1}^p\eta_i}\Big| \lesssim \sqrt{p\log p}.
\end{align}
Then,  for all $\bm{X_0}\in {\mathbb R}^{n\times m}$ with $\mathrm{rank}(\bm{X}_0)\leq r$ and  $\bm{y}=|\mathcal{A}(\bm{X}_0)|+\bm{\eta}$, the solution  $\hat{\bm{X}}^{r}\in \mathbb{R}^{n\times m}$ to \eqref{pro:squarank} satisfies
 \begin{align*}
\min\{\|\hat{\bm{X}}^{r}-\bm{X}_0\|_F, \|\hat{\bm{X}}^{r}+\bm{X}_0\|_F\}
\lesssim \max\bigg\{ \|\bm{X}_0\|_F,\sqrt{\frac{(m+n+1)r\log p}{p}}\bigg\}\sqrt{\frac{(m+n+1)r\log p}{p}}
\end{align*}
 with probability at least $1-3\exp(-cp)-4\exp(-(m+n+1)r\log p)$, where $c>0$ is an  absolute constant.
\end{thm}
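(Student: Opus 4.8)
The plan is to reduce the estimate to a single sharp bound on a noise cross-term, exactly as in the proof of Theorem \ref{thm:squarank}, but replacing the crude Cauchy--Schwarz step by a uniform control that exploits \emph{both} hypotheses in \eqref{e:noise}. Write $\bm v := \abs{\mathcal{A}(\hat{\bm X}^r)} - \abs{\mathcal{A}(\bm X_0)}$. Since $\bm X_0$ is feasible for \eqref{pro:squarank} and $\bm y = \abs{\mathcal{A}(\bm X_0)} + \bm\eta$, optimality of $\hat{\bm X}^r$ gives $\normt{\abs{\mathcal{A}(\hat{\bm X}^r)} - \bm y} \le \normt{\bm\eta}$; squaring $\normt{\bm v - \bm\eta}^2 \le \normt{\bm\eta}^2$ yields the basic inequality $\normt{\bm v}^2 \le 2\inner{\bm v,\bm\eta}$. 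Choosing the sign that attains the minimum, I set $\bm h := \hat{\bm X}^r - \bm X_0$, a matrix of rank at most $2r$, and invoke the strong restricted isometry property (SRIP) established for $\mathcal{A}$ to obtain $\normt{\bm v}^2 \gtrsim p\,\|\bm h\|_F^2$. Everything then reduces to showing $\inner{\bm v,\bm\eta} \lesssim p\,\|\bm h\|_F\,\max\{\|\bm X_0\|_F,\tau\}\,\tau$ with $\tau := \sqrt{(m+n+1)r\log p/p}$; cancelling one factor of $\|\bm h\|_F$ against the left-hand side then gives the claim.

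To control the cross-term I would split the noise into its mean and fluctuation, $\bm\eta = \bar\eta\,\mathbf{1} + \tilde{\bm\eta}$ with $\bar\eta = \tfrac1p\sum_i \eta_i$, so that $\inner{\bm v,\bm\eta} = \bar\eta \sum_i v_i + \inner{\bm v,\tilde{\bm\eta}}$. For the \emph{bias} term I would use the uniform concentration $\sum_i \abs{\inner{\bm A_i,\bm Z}} \approx \sqrt{2/\pi}\,p\,\|\bm Z\|_F$ over low-rank $\bm Z$, which gives $\sum_i v_i \approx \sqrt{2/\pi}\,p\,(\|\hat{\bm X}^r\|_F - \|\bm X_0\|_F)$; together with $\abs{\bar\eta} \lesssim \sqrt{\log p/p}$ coming from the second bound in \eqref{e:noise}, this is exactly where the magnitude gap $\|\hat{\bm X}^r\|_F - \|\bm X_0\|_F$ must be quantified. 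For the \emph{fluctuation} term I would decompose each coordinate as $v_i = \sgn(\inner{\bm A_i,\bm X_0})\inner{\bm A_i,\bm h} + w_i$, where $w_i$ is supported on the sign-flip set $S = \{\, i : \abs{\inner{\bm A_i,\bm h}} > \abs{\inner{\bm A_i,\bm X_0}}\,\}$ and obeys $\abs{w_i} \le 2\abs{\inner{\bm A_i,\bm h}}$. Because $\sum_i\tilde\eta_i = 0$ and $\mathbb{E}\,\sgn(\inner{\bm A_i,\bm X_0})\bm A_i = \sqrt{2/\pi}\,\bm X_0/\|\bm X_0\|_F$, the linear part contracted with $\tilde{\bm\eta}$ is mean-zero; a chaining/covering argument over rank-$\le r$ signals $\bm X_0$ and rank-$\le 2r$ directions, with sub-Gaussian increments of scale $\normt{\tilde{\bm\eta}} \le \normt{\bm\eta} \lesssim \sqrt p$, produces the factors $(m+n+1)r$ and $\log p$.

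The decisive step is the sharp, uniform control of the sign-flip contribution $\sum_i \eta_i w_i$ and the extraction of the factor $\max\{\|\bm X_0\|_F,\tau\}$. A flip forces $\abs{\inner{\bm A_i,\bm h}} > \abs{\inner{\bm A_i,\bm X_0}}$, and a Gaussian computation shows the expected density of $S$ is $\lesssim \|\bm h\|_F/\|\bm X_0\|_F$, saturating at a constant once $\|\bm h\|_F \gtrsim \|\bm X_0\|_F$; this rarity, together with $\abs{w_i}\le 2\abs{\inner{\bm A_i,\bm h}}$ and $\normt{\bm\eta}\lesssim\sqrt p$, is what converts a naive $\|\bm h\|_F\,p\,\tau$ estimate into the sharper $\|\bm h\|_F\,p\,\max\{\|\bm X_0\|_F,\tau\}\,\tau$, by treating separately the regimes $\|\bm h\|_F \lesssim \|\bm X_0\|_F$ and $\|\bm h\|_F \gtrsim \|\bm X_0\|_F$ (equivalently, a peeling argument over dyadic scales of $\|\bm h\|_F$). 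Inserting the three contributions into $p\,\|\bm h\|_F^2 \lesssim \normt{\bm v}^2 \le 2\inner{\bm v,\bm\eta}$ yields the stated bound; the probability $1-3\exp(-cp)-4\exp(-(m+n+1)r\log p)$ reflects, respectively, the SRIP event together with the $\sqrt p$-scale sub-Gaussian tails, and the union bound over the two low-rank nets.

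The main obstacle is precisely this uniform sharpness. The straightforward estimates — Cauchy--Schwarz on $\inner{\bm v,\tilde{\bm\eta}}$, or the rank-$2r$ operator-norm bound $\inner{\bm v,\tilde{\bm\eta}}\le \|\bm h\|_F\,p\,\tau$ — beat Theorem \ref{thm:squarank} only by a single factor of $\tau$ and cannot reach the quadratic scaling $\tau^2$ that the statement asserts when $\|\bm X_0\|_F \lesssim \tau$. Closing this gap requires genuinely second-order information: that $\|\hat{\bm X}^r\|_F - \|\bm X_0\|_F$ and the component of $\bm h$ along $\bm X_0$ are of order $\|\bm h\|_F^2/\|\bm X_0\|_F$, not $\|\bm h\|_F$, in the regime $\|\bm h\|_F \lesssim \|\bm X_0\|_F$, so that the bias term contributes at the level $\|\bm h\|_F^2$ and is absorbed into $p\,\|\bm h\|_F^2$. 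Making this rigorous and \emph{uniform} over all rank-$\le r$ matrices $\bm X_0$, while simultaneously controlling the sign-flip set through chaining, is the technical heart of the argument and is where the two hypotheses in \eqref{e:noise} play their distinct and essential roles.
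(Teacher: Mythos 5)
Your opening steps match the paper's: the basic optimality inequality $\normt{\bm v}^2\le 2\inner{\bm v,\bm\eta}$ with $\bm v=\abs{\mathcal{A}(\hat{\bm X}^r)}-\abs{\mathcal{A}(\bm X_0)}$ is exactly \eqref{e:ineq1}, and the SRIP lower bound $\normt{\bm v}^2\gtrsim p\|\bm h\|_F^2$ is the paper's combination of Lemma \ref{lemma3.1} and Lemma \ref{spaceSRIP} (modulo that the relevant sign is dictated by whichever of $T_1\cup T_2$, $T_3\cup T_4$ has size at least $p/2$, not chosen freely). The genuine gap is your ``decisive step'': the uniform cross-term bound $\inner{\bm v,\bm\eta}\lesssim p\,\|\bm h\|_F\,\max\{\|\bm X_0\|_F,\tau\}\,\tau$, with $\tau:=\sqrt{(m+n+1)r\log p/p}$, engineered so that one factor of $\|\bm h\|_F$ cancels. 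This is not merely unproven in your sketch (the flip-set density, the second-order control of $\|\hat{\bm X}^r\|_F-\|\bm X_0\|_F$, and the chaining/peeling are all asserted rather than established, and your own density estimate $\|\bm h\|_F/\|\bm X_0\|_F$ degenerates exactly in the regime $\|\bm X_0\|_F\lesssim\tau$ where the extra factor is needed); it appears to be false. Take $\bm X_0=\bm 0$ and generic mean-zero noise with $\normt{\bm\eta}\asymp\sqrt p$, admissible under \eqref{e:noise}: your bound would force $\|\hat{\bm X}^r\|_F\lesssim\tau^2$, whereas the objective at rank-one $\bm X$ with $\|\bm X\|_F=t$ behaves like $pt^2-2t\sup_{\bm U}\sum_i\eta_i\abs{\inner{\bm A_i,\bm U}}+\normt{\bm\eta}^2$, and the supremum is of order $\normt{\bm\eta}\sqrt{m+n}$, so the minimizer has norm $\gtrsim\sqrt{(m+n)/p}$, which dwarfs $\tau^2$ for large $p$. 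You correctly sensed this tension (``cannot reach the quadratic scaling $\tau^2$''), but the resolution is that the paper's own argument never reaches that rate either: tracking its final display, what is proved is $\|\hat{\bm X}^r\mp\bm X_0\|_F^2\lesssim\tau(\tau+\|\bm X_0\|_F)$, i.e.\ the error is controlled by the \emph{square root} of the displayed quantity, consistent with the vector-case results \cite{huang2020estimation,xia2024performance} that the theorem generalizes.

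The idea your proposal is missing is that no factor of $\|\bm h\|_F$ need be extracted from the noise term at all. The paper bounds the right-hand side of \eqref{e:ineq1} by $\abs{\sum_i 2\eta_i\abs{\inner{\bm A_i,\hat{\bm X}^r}}}+\abs{\sum_i 2\eta_i\abs{\inner{\bm A_i,\bm X_0}}}$ and invokes Lemma \ref{lem:unbound}(I) --- essentially a net-plus-Hoeffding version of your bias/fluctuation split, with mean $\sqrt{2/\pi}\,\sum_i\eta_i$ and fluctuation $\sqrt{(m+n+1)r\log p}\,\normt{\bm\eta}$ --- to obtain $p\theta_-\|\bm h\|_F^2\le C\Lambda\big(\|\hat{\bm X}^r\|_F+\|\bm X_0\|_F\big)$, where $\Lambda:=\sqrt{(m+n+1)r\log p}\,\normt{\bm\eta}+\abs{\sum_i\eta_i}$. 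It then closes self-referentially: since $(\|\hat{\bm X}^r\|_F-\|\bm X_0\|_F)^2\le\|\bm h\|_F^2$, the elementary implication $(a-b)^2\le l(a+b)\Rightarrow a\le 3b+l$ (borrowed from \cite{xia2024performance}) yields $\|\hat{\bm X}^r\|_F\le 3\|\bm X_0\|_F+C\Lambda/(p\theta_-)$, and substituting this back gives $\|\bm h\|_F^2\lesssim(\Lambda/p)\big(\Lambda/p+\|\bm X_0\|_F\big)$ with $\Lambda/p\lesssim\tau$ under \eqref{e:noise}. No linearization, sign-flip sets, chaining over $\bm X_0$, or dyadic peeling is required; the entire ``technical heart'' you identify is bypassed, precisely because the paper settles for the square-root rate that this simple self-bounding argument delivers.
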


\begin{remark}
Both Theorem \ref{thm:squarankone} and Theorem \ref{thm:squaranks} stipulate specific conditions that the noise must satisfy. As elucidated in \cite{xia2024performance}, a substantial class of noise vectors indeed meets these criteria. For instance, if we consider $\eta_i \in [a,b]$
  where $0<a < b$, then $\bm{\eta}$ satisfies conditions in Theorem  \ref{thm:squaranks}. Furthermore, if $\eta_i$ is independently drawn from a normal distribution $\mathcal{N}(0,\sigma^2)$, then $\bm{\eta}$ satisfies condition in Theorem \ref{thm:squaranks}  with high probability. For a more comprehensive discussion, we refer the reader to  \cite{xia2024performance}.
\end{remark}
\subsubsection{Nuclear Norm-based Models}
We present the following theorem, which establishes the performance of the nuclear norm constrained least squares model \eqref{leastproblemnu}.
\begin{thm}\label{thm:rankmin}
Let \(\mathcal{A} : \mathbb{R}^{n \times m} \rightarrow \mathbb{R}^p\) be a Gaussian measurement ensemble. Assume that \(p \gtrsim (m+n)r\). For all  matrix \(\bm{X}_0 \in \mathbb{R}^{n \times m}\) with \(\mathrm{rank}(\bm{X}_0) \leq r\), consider the measurement \(\bm{y} = |\mathcal{A}(\bm{X}_0)| + \bm{\eta}\), where $\bm{\eta}\in {\mathbb R}^p$ is the noise vector. Let \(\hat{\bm{X}}^{\ell_2} \in \mathbb{R}^{n \times m}\) be any solution to  \eqref{leastproblemnu}, i.e.,
\begin{align*}
\hat{\bm{X}}^{\ell_2}\in \mathop{\argmin}\limits_{\bm{X} \in \mathbb{R}^{n \times m}} \ \ \||\mathcal{A}(\bm{X})| - \bm{y}\|_2 \ \ \ \ \text{s.t.} \ \ \ \ \|\bm{X}\|_{*} \leq R,
\end{align*}
where   $R:=\|\bm{X}_0\|_{*}$.
Then
\[
\min\{\|\hat{\bm{X}}^{\ell_2} - \bm{X}_0\|_F, \|\hat{\bm{X}}^{\ell_2} + \bm{X}_0\|_F\} \lesssim \frac{\|\bm{\eta}\|_2}{\sqrt{p}}
\]
holds with probability at least \(1 - 2\exp(-cp)\), where \(c\) is a positive constant.
\end{thm}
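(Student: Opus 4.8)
The plan is to follow the three-move template of the proof of Theorem~\ref{thm:squarank}---feasibility/optimality, a triangle inequality, and the strong restricted isometry property (SRIP)---but with the exact low-rank structure of the error replaced by a cone condition forced by the nuclear-norm constraint. The choice $R=\|\bm{X}_0\|_*$ makes $\bm{X}_0$ feasible for \eqref{leastproblemnu}, so optimality of $\hat{\bm{X}}^{\ell_2}$ gives $\||\mathcal{A}(\hat{\bm{X}}^{\ell_2})|-\bm{y}\|_2\le\||\mathcal{A}(\bm{X}_0)|-\bm{y}\|_2=\|\bm{\eta}\|_2$, whence the triangle inequality yields
\[
\||\mathcal{A}(\hat{\bm{X}}^{\ell_2})|-|\mathcal{A}(\bm{X}_0)|\|_2\le\||\mathcal{A}(\hat{\bm{X}}^{\ell_2})|-\bm{y}\|_2+\|\bm{y}-|\mathcal{A}(\bm{X}_0)|\|_2\le 2\|\bm{\eta}\|_2.
\]

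Next I would install the cone condition. Pick $s\in\{+1,-1\}$ with $\|\hat{\bm{X}}^{\ell_2}-s\bm{X}_0\|_F=\min\{\|\hat{\bm{X}}^{\ell_2}-\bm{X}_0\|_F,\|\hat{\bm{X}}^{\ell_2}+\bm{X}_0\|_F\}$ and set $\bm{H}:=\hat{\bm{X}}^{\ell_2}-s\bm{X}_0$; note $|\mathcal{A}(s\bm{X}_0)|=|\mathcal{A}(\bm{X}_0)|$ and $\|s\bm{X}_0\|_*=R$. Let $T$ be the tangent space to the rank-$r$ matrices at $s\bm{X}_0$ and decompose $\bm{H}=\bm{H}_T+\bm{H}_{T^\perp}$. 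The standard nuclear-norm inequality $\|s\bm{X}_0+\bm{H}\|_*\ge\|s\bm{X}_0\|_*+\|\bm{H}_{T^\perp}\|_*-\|\bm{H}_T\|_*$, combined with $\|\hat{\bm{X}}^{\ell_2}\|_*\le R=\|s\bm{X}_0\|_*$, forces $\|\bm{H}_{T^\perp}\|_*\le\|\bm{H}_T\|_*$. Since $\mathrm{rank}(\bm{H}_T)\le 2r$,
\[
\|\bm{H}\|_*\le 2\|\bm{H}_T\|_*\le 2\sqrt{2r}\,\|\bm{H}_T\|_F\le 2\sqrt{2r}\,\|\bm{H}\|_F,
\]
so $\bm{H}$ lies in the cone $\mathcal{C}_r:=\{\bm{H}:\|\bm{H}\|_*\le 2\sqrt{2r}\,\|\bm{H}\|_F\}$.

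Finally I would invoke the SRIP, which supplies an absolute constant $\theta_->0$ so that, with probability at least $1-2\exp(-cp)$ once $p\gtrsim(m+n)r$, the lower bound $\frac{1}{\sqrt{p}}\||\mathcal{A}(\bm{X})|-|\mathcal{A}(\bm{Z})|\|_2\ge\theta_-\min\{\|\bm{X}-\bm{Z}\|_F,\|\bm{X}+\bm{Z}\|_F\}$ holds uniformly over all $\bm{Z}$ of rank $\le r$ and all $\bm{X}$ with $\bm{X}-\bm{Z}\in\mathcal{C}_r$. Applying this with $\bm{X}=\hat{\bm{X}}^{\ell_2}$ and $\bm{Z}=s\bm{X}_0$---where the choice of $s$ makes $\|\bm{H}\|_F$ the smaller of the two Frobenius distances---and chaining with the first display gives $\theta_-\|\bm{H}\|_F\le \frac{2\|\bm{\eta}\|_2}{\sqrt{p}}$. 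As $\|\bm{H}\|_F=\min\{\|\hat{\bm{X}}^{\ell_2}-\bm{X}_0\|_F,\|\hat{\bm{X}}^{\ell_2}+\bm{X}_0\|_F\}$, this is exactly the asserted estimate.

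The only genuinely new difficulty relative to Theorem~\ref{thm:squarank} is that the SRIP lower bound must be uniform over the cone $\mathcal{C}_r$ rather than merely over rank-$2r$ matrices. The obstruction is that the per-entry amplitude identity $(|\langle\bm{A}_i,\bm{X}\rangle|-|\langle\bm{A}_i,\bm{Z}\rangle|)^2=\min(\langle\bm{A}_i,\bm{X}-\bm{Z}\rangle^2,\langle\bm{A}_i,\bm{X}+\bm{Z}\rangle^2)$ does not distribute over the usual singular-value block decomposition of a cone element, so the clean low-rank-to-cone transfer available for linear measurements does not apply verbatim. I expect this to be handled exactly as in the SRIP proof used for the rank-constrained case: the bound $\|\bm{H}\|_*\le 2\sqrt{2r}\,\|\bm{H}\|_F$ caps the stable rank at $O(r)$, so $\mathcal{C}_r\cap\{\|\cdot\|_F=1\}$ has Gaussian width $\lesssim\sqrt{(m+n)r}$, and the same covering-net plus concentration argument then delivers the uniform lower bound at the advertised rate $p\gtrsim(m+n)r$ and with failure probability $2\exp(-cp)$.
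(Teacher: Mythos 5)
Your feasibility step, your cone condition (which is exactly the paper's claim \eqref{e:setH}, derived there from Lemma~\ref{lemma 3.6} and Lemma~\ref{lemma3.7}), and your target rate are all right, but the bridging lemma you invoke is false as stated, and this is a genuine gap. You claim that, with high probability, $\frac{1}{\sqrt{p}}\||\mathcal{A}(\bm{X})|-|\mathcal{A}(\bm{Z})|\|_2\geq\theta_{-}\min\{\|\bm{X}-\bm{Z}\|_F,\|\bm{X}+\bm{Z}\|_F\}$ uniformly over all $\bm{Z}$ of rank at most $r$ and all $\bm{X}$ with $\bm{X}-\bm{Z}\in\mathcal{C}_r:=\{\bm{H}:\|\bm{H}\|_{*}\leq 2\sqrt{2r}\|\bm{H}\|_F\}$. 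Take $\bm{Z}$ of rank $r$ and $\bm{X}=-\bm{Z}+\bm{G}$ with $\bm{G}\in\ker\mathcal{A}$. Since $p\asymp(m+n)r\ll mn$, the kernel has dimension at least $mn-p$ and a generic element satisfies $\|\bm{G}\|_{*}\asymp\sqrt{\min\{m,n\}}\,\|\bm{G}\|_F\gg\sqrt{2r}\,\|\bm{G}\|_F$, so $\bm{G}\notin\mathcal{C}_r$; meanwhile $\bm{X}-\bm{Z}=-2\bm{Z}+\bm{G}$ does lie in $\mathcal{C}_r$ once $\|\bm{G}\|_F\lesssim\sqrt{r/\min\{m,n\}}\,\|\bm{Z}\|_F$, so your hypothesis is met. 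But then $|\mathcal{A}(\bm{X})|=|\mathcal{A}(\bm{Z})|$ exactly, the left-hand side vanishes, and the right-hand side equals $\theta_{-}\|\bm{G}\|_F>0$. The structural reason is the one you yourself flag: the pointwise identity $(|a|-|b|)^2=\min\{(a-b)^2,(a+b)^2\}$ splits the residual between $\bm{H}^{-}=\bm{X}-\bm{Z}$ and $\bm{H}^{+}=\bm{X}+\bm{Z}$ according to the measurement sign pattern, so the lower bound needs a cone condition on whichever of the two carries at least half the indices --- your hypothesis constrains only $\bm{H}^{-}$. Relatedly, fixing $s$ as the Frobenius minimizer is the wrong selector: the sign the argument can control is dictated by the majority sign pattern (Remark~\ref{remark3.2}), not by which distance is smaller; this is harmless only because the theorem's conclusion is a minimum over both signs.

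The fix is immediate and lands you essentially on the paper's proof: since $\|\hat{\bm{X}}^{\ell_2}\|_{*}\leq\|\bm{X}_0\|_{*}=\|-\bm{X}_0\|_{*}$, your decomposition argument yields the cone condition for \emph{both} $\bm{H}^{-}=\hat{\bm{X}}^{\ell_2}-\bm{X}_0$ and $\bm{H}^{+}=\hat{\bm{X}}^{\ell_2}+\bm{X}_0$ simultaneously; then pigeonhole on the sign pattern and apply the subset-uniform SRIP over $N_r^{*}$ (Lemma~\ref{rmk1}, built on the covering bound of Lemma~\ref{cover2}) to the majority side, giving $p\theta_{-}\|\bm{H}^{\pm}\|_F^2\leq\||\mathcal{A}(\hat{\bm{X}}^{\ell_2})|-|\mathcal{A}(\bm{X}_0)|\|_2^2\leq 4\|\bm{\eta}\|_2^2$ via your triangle-inequality reduction. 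Once repaired, your route is actually leaner than the paper's, which instead expands the square through the sign-set inequality of Lemma~\ref{lemma3.1} and controls the cross terms $\sum_{i\in T}\eta_i\langle\bm{A}_i,\bm{H}\rangle$ with the uniform supremum estimate of Lemma~\ref{lem:upper1}; your version dispenses with that lemma entirely. One final caution on your closing heuristic: a Gaussian-width bound for $\mathcal{C}_r\cap\{\|\cdot\|_F=1\}$ only controls $\|\mathcal{A}(\cdot)\|_2$ over the full index set, whereas the ``strong'' part of the SRIP requires uniformity over all $I\subseteq[p]$ with $\#I\geq p/2$; that uniformity comes from the net-plus-order-statistics argument of Lemma~\ref{spaceSRIP} (via Lemmas~\ref{lem:xu1} and \ref{lem:xu2}), not from a plain width computation, so cite or reproduce that argument.
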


Next, we present theorems that establish the performance of two nuclear norm minimization models: the constrained model \eqref{leastproblemcon} and the unconstrained model \eqref{leastproblem}.
\begin{thm}\label{thm:ell2}
Let $\mathcal{A}:\mathbb{R}^{n\times m}\rightarrow \mathbb{R}^p$ be a Gaussian measurement ensemble.
For all $\bm{X}_0\in \mathbb{R}^{n\times m}$  with $\mathrm{rank}(\bm{X}_0)\leq r$ and  noise vector $\bm{\eta}\in {\mathbb R}^p$ with $\|\bm{\eta}\|_2\leq \varepsilon$, consider the measurement
 $\bm{y}=|\mathcal{A}(\bm{X}_0)|+\bm{\eta}$.
If $p\gtrsim (m+n)r$ and  $\hat{\bm{X}}^{\ell_*}\in \mathbb{R}^{n\times m}$ is any solution to \eqref{leastproblemcon}, i.e.,
\begin{align*}
\hat{\bm{X}}^{\ell_{*}}\in\mathop{\argmin}\limits_{\bm{X} \in \mathbb{R}^{n \times m}} \ \ \|\bm{X}\|_{*} \quad \text{s.t.} \quad \||\mathcal{A}(\bm{X})| - \bm{y}\|_2 \leq \epsilon,
\end{align*}
then
$$
\min\{\|\hat{\bm{X}}^{\ell_{*}}-\bm{X}_0\|_F, \|\hat{\bm{X}}^{\ell_{*}}+\bm{X}_0\|_F\}\lesssim \frac{\varepsilon}{\sqrt{p}}
$$
holds with probability at least $1-2\exp(-cp)$, where \(c\) is a positive constant.
\end{thm}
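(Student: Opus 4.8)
The plan is to combine a deterministic reduction with the strong restricted isometry property (SRIP) of the amplitude map that underlies the earlier theorems. First I would check feasibility of the ground truth: since $\bm{y}=|\mathcal{A}(\bm{X}_0)|+\bm{\eta}$ and $\|\bm{\eta}\|_2\le\varepsilon$, we have $\||\mathcal{A}(\bm{X}_0)|-\bm{y}\|_2=\|\bm{\eta}\|_2\le\varepsilon$, so $\bm{X}_0$ satisfies the constraint in \eqref{leastproblemcon}. Because $\hat{\bm{X}}^{\ell_*}$ minimizes the nuclear norm over the feasible set, this forces $\|\hat{\bm{X}}^{\ell_*}\|_*\le\|\bm{X}_0\|_*$, which is the crucial consequence of optimality.

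Next I would convert this nuclear-norm inequality into a cone condition on the error $\bm{H}:=\hat{\bm{X}}^{\ell_*}-\bm{X}_0$. Let $T$ denote the tangent space to the rank-$r$ matrices at $\bm{X}_0$, and decompose $\bm{H}=\bm{H}_T+\bm{H}_{T^{\perp}}$. The standard subgradient inequality for the nuclear norm gives $\|\bm{X}_0+\bm{H}\|_*\ge\|\bm{X}_0\|_*+\|\bm{H}_{T^{\perp}}\|_*-\|\bm{H}_T\|_*$, and combining this with $\|\bm{X}_0+\bm{H}\|_*=\|\hat{\bm{X}}^{\ell_*}\|_*\le\|\bm{X}_0\|_*$ yields $\|\bm{H}_{T^{\perp}}\|_*\le\|\bm{H}_T\|_*$. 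Since every matrix in $T$ has rank at most $2r$, it follows that $\|\bm{H}\|_*\le 2\|\bm{H}_T\|_*\le 2\sqrt{2r}\,\|\bm{H}\|_F$; that is, $\bm{H}$ is effectively of rank $O(r)$. Hence both $\hat{\bm{X}}^{\ell_*}=\bm{X}_0+\bm{H}$ and $\bm{X}_0$ lie in the cone of matrices with bounded effective rank on which the SRIP is operative.

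I would then bound the amplitude misfit and invoke the SRIP. By the triangle inequality,
\[
\||\mathcal{A}(\hat{\bm{X}}^{\ell_*})|-|\mathcal{A}(\bm{X}_0)|\|_2\le\||\mathcal{A}(\hat{\bm{X}}^{\ell_*})|-\bm{y}\|_2+\|\bm{y}-|\mathcal{A}(\bm{X}_0)|\|_2\le\varepsilon+\|\bm{\eta}\|_2\le 2\varepsilon.
\]
The lower SRIP bound for the amplitude map, valid uniformly over the effective-rank-$O(r)$ cone with probability at least $1-2\exp(-cp)$ when $p\gtrsim(m+n)r$, gives $c\sqrt{p}\,\min\{\|\bm{H}\|_F,\|\hat{\bm{X}}^{\ell_*}+\bm{X}_0\|_F\}\le\||\mathcal{A}(\hat{\bm{X}}^{\ell_*})|-|\mathcal{A}(\bm{X}_0)|\|_2$. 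Chaining the two displays and dividing by $c\sqrt{p}$ produces $\min\{\|\hat{\bm{X}}^{\ell_*}-\bm{X}_0\|_F,\|\hat{\bm{X}}^{\ell_*}+\bm{X}_0\|_F\}\lesssim\varepsilon/\sqrt{p}$, as claimed.

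The main obstacle is the SRIP itself, not the deterministic chain above. In contrast with linear measurements, $|\mathcal{A}(\hat{\bm{X}}^{\ell_*})|-|\mathcal{A}(\bm{X}_0)|$ is \emph{not} equal to $|\mathcal{A}(\bm{H})|$, so the lower isometry must be controlled directly in terms of the sign pattern of $\langle\bm{A}_i,\hat{\bm{X}}^{\ell_*}\rangle$ against $\langle\bm{A}_i,\bm{X}_0\rangle$; this sign indeterminacy is precisely what manifests as the minimum over $\|\cdot\pm\bm{X}_0\|_F$. Establishing this uniform lower bound requires pointwise concentration of $\tfrac{1}{p}\sum_{i}(|\langle\bm{A}_i,\bm{X}\rangle|-|\langle\bm{A}_i,\bm{Z}\rangle|)^2$ about its expectation, followed by a covering argument over the effective-rank cone whose metric entropy scales like $(m+n)r$ — which is exactly what dictates the sample size $p\gtrsim(m+n)r$ and the failure probability $2\exp(-cp)$. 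Granting this SRIP (the same instrument behind Theorem~\ref{thm:squarank} and Theorem~\ref{thm:rankmin}, now taken over the cone rather than the rank-restricted set), the present theorem follows immediately.
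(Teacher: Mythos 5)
Your proposal is correct in substance, but it assembles the final argument along a genuinely different --- and in fact shorter --- path than the paper. The paper follows the template of Theorem \ref{thm:rankmin}: it keeps the noise cross terms from Lemma \ref{lemma3.1}, working with $\sum_{i\in T_1\cup T_2}\langle\bm{A}_i,\bm{H}^{-}\rangle^2\le 2\sum_{i\in T_1}\eta_i\langle\bm{A}_i,\bm{H}^{-}\rangle-2\sum_{i\in T_2}\eta_i\langle\bm{A}_i,\bm{H}^{-}\rangle+\varepsilon^2$, bounds the cross terms by $C\sqrt{p}\,\|\bm{\eta}\|_2\|\bm{H}^{-}\|_F$ via the Gaussian-width estimate of Lemma \ref{lem:upper1}, and then solves the resulting quadratic inequality in $\|\bm{H}^{-}\|_F$. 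You instead use feasibility of both $\bm{X}_0$ and $\hat{\bm{X}}^{\ell_*}$ plus the triangle inequality to get $\||\mathcal{A}(\hat{\bm{X}}^{\ell_*})|-|\mathcal{A}(\bm{X}_0)|\|_2\le 2\varepsilon$, followed by a single lower-isometry step; this bypasses Lemma \ref{lem:upper1} entirely and requires only the SRIP event, which is a real simplification specific to the constrained model (it would not work for Theorems \ref{thm:squarank} or \ref{thm:unell2}, where no a priori $\varepsilon$-bound on the residual is available). The ``amplitude SRIP'' that you correctly flag as the main obstacle is exactly what the paper supplies, though not by the route you sketch: rather than proving lower-tail concentration of $\tfrac1p\sum_i(|\langle\bm{A}_i,\bm{X}\rangle|-|\langle\bm{A}_i,\bm{Z}\rangle|)^2$ directly, the paper observes that $||a|-|b||$ equals $|a-b|$ or $|a+b|$ according to sign agreement, so Lemma \ref{lemma3.1} and Remark \ref{remark3.2} reduce the amplitude misfit to the \emph{linear} map restricted to a subset of at least $p/2$ indices, and Lemma \ref{rmk1} (SRIP over $N_r^{*}$, i.e.\ over your effective-rank cone, uniformly over half-subsets) then gives $\sum_{i\in T_1\cup T_2}\langle\bm{A}_i,\bm{H}^{-}\rangle^2\ge p\theta_{-}\|\bm{H}^{-}\|_F^2$; this sign-split is cleaner than a direct empirical-process lower bound for the nonlinear quantity. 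Two small repairs to your write-up: the cone condition must be asserted for the error matrices $\bm{H}^{\pm}$, not for $\hat{\bm{X}}^{\ell_*}$ and $\bm{X}_0$ themselves (neither of which need satisfy $\|\cdot\|_{*}\lesssim\sqrt{r}\|\cdot\|_F$, and neither is what the SRIP is applied to); and your tangent-space argument verifies it only for $\bm{H}^{-}=\hat{\bm{X}}^{\ell_*}-\bm{X}_0$. For the opposite-sign branch you also need $\|\bm{H}^{+}\|_{*}\le 2\sqrt{2r}\|\bm{H}^{+}\|_F$, which follows by rerunning your subgradient argument at $-\bm{X}_0$ (using $\|-\bm{X}_0\|_{*}=\|\bm{X}_0\|_{*}$), precisely as the paper substitutes $-\bm{X}_0$ into the decomposition of Lemma \ref{lemma 3.6}. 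With these two points filled in, your argument is complete and yields the stated $1-2\exp(-cp)$ guarantee.
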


\begin{thm}\label{thm:unell2}
Let $\mathcal{A}:\mathbb{R}^{n\times m}\rightarrow \mathbb{R}^p$ be a Gaussian measurement ensemble.
For all matrix  $\bm{X}_0\in \mathbb{R}^{n\times m}$  with $\mathrm{rank}(\bm{X}_0)\leq r$, consider the measurement
 $\bm{y}=|\mathcal{A}(\bm{X}_0)|+\bm{\eta}$, where  $\bm{\eta}\in {\mathbb R}^p$  is the noise vector.
 If $p\gtrsim (m+n)r$ and
 $\hat{\bm{X}}^{u}\in \mathbb{R}^{n\times m}$ is any solution to \eqref{leastproblem}, i.e.,
\begin{align*}
\hat{\bm{X}}^{u}\in \mathop{\arg\min}\limits_{\bm{X} \in \mathbb{R}^{n\times m}} \||\mathcal{A}(\bm{X})|-\bm{y}\|_2^2+\lambda\|\bm{X}\|_{*},
\end{align*}
where $\lambda$ is a parameter satisfying $\lambda>2\sqrt{2p}\|\bm{\eta}\|_2$,
then
$$
\min\{\|\hat{\bm{X}}^{u}-\bm{X}_0\|_F, \|\hat{\bm{X}}^{u}+\bm{X}_0\|_F\}\lesssim \frac{\lambda \sqrt{r}}{p} +\frac{\|\bm{\eta}\|_2}{\sqrt{p}}
$$
holds with probability at least $1-2\exp(-cp)$, where $c>0$ is an  absolute constant.
\end{thm}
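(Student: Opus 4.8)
The plan is to play the optimality of $\hat{\bm X}^{u}$ against the strong restricted isometry property (SRIP), with a uniform bound on a noise cross-term as the bridge. Set $\bm H:=\hat{\bm X}^{u}-\bm X_0$, $\bm w:=|\mathcal A(\hat{\bm X}^{u})|-|\mathcal A(\bm X_0)|$, and $D:=\|\bm w\|_2$. Because $|\mathcal A(-\bm X_0)|=|\mathcal A(\bm X_0)|$ and $\|-\bm X_0\|_{*}=\|\bm X_0\|_{*}$, the matrix $-\bm X_0$ is an equally valid ground truth for the same $\bm y$ and $\bm\eta$; I may therefore assume without loss of generality that the minus sign attains the minimum, i.e.\ $\min\{\|\hat{\bm X}^{u}-\bm X_0\|_F,\|\hat{\bm X}^{u}+\bm X_0\|_F\}=\|\bm H\|_F$. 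Substituting $\bm y=|\mathcal A(\bm X_0)|+\bm\eta$ into the optimality inequality $\||\mathcal A(\hat{\bm X}^{u})|-\bm y\|_2^2+\lambda\|\hat{\bm X}^{u}\|_{*}\le\|\bm\eta\|_2^2+\lambda\|\bm X_0\|_{*}$ and cancelling $\|\bm\eta\|_2^2$ gives
\[
D^2+\lambda\big(\|\hat{\bm X}^{u}\|_{*}-\|\bm X_0\|_{*}\big)\le 2\langle\bm w,\bm\eta\rangle .
\]
Throughout I will use the entrywise estimate $|w_i|=\big||\langle\bm A_i,\hat{\bm X}^{u}\rangle|-|\langle\bm A_i,\bm X_0\rangle|\big|\le|\langle\bm A_i,\bm H\rangle|$.

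First I show that $\bm H$ lies in a descent cone. Let $T$ be the tangent space at $\bm X_0$ to the rank-$r$ matrices, with projections $\mathcal P_T,\mathcal P_{T^{\perp}}$; recall $\mathrm{rank}(\mathcal P_T\bm H)\le 2r$ and the decomposition inequality $\|\bm X_0+\bm H\|_{*}\ge\|\bm X_0\|_{*}+\|\mathcal P_{T^{\perp}}\bm H\|_{*}-\|\mathcal P_T\bm H\|_{*}$. To control the cross-term I write $\langle\bm w,\bm\eta\rangle\le\sum_i|\eta_i||\langle\bm A_i,\bm H\rangle|=\big\langle\sum_i|\eta_i|\,\mathrm{sign}(\langle\bm A_i,\bm H\rangle)\,\bm A_i,\ \bm H\big\rangle\le\big\|\sum_i|\eta_i|s_i\bm A_i\big\|_{\mathrm{op}}\|\bm H\|_{*}$, where $s_i$ is the realized sign. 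The realized signs depend on $\bm H$, and hence on all of the data, so I take a uniform bound over the $2^p$ sign vectors: for each fixed $\bm s\in\{\pm1\}^p$ the matrix $\sum_i|\eta_i|s_i\bm A_i$ has i.i.d.\ $\mathcal N(0,\|\bm\eta\|_2^2)$ entries, whence $\|\sum_i|\eta_i|s_i\bm A_i\|_{\mathrm{op}}\lesssim\|\bm\eta\|_2(\sqrt m+\sqrt n+t)$ off an event of probability $2e^{-t^2/2}$; taking $t\asymp\sqrt p$, union-bounding over $2^p$ patterns, and using $p\gtrsim m+n$ yields $\max_{\bm s}\|\sum_i|\eta_i|s_i\bm A_i\|_{\mathrm{op}}\le\sqrt{2p}\,\|\bm\eta\|_2$ with probability $1-2e^{-cp}$. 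Feeding $\langle\bm w,\bm\eta\rangle\le\sqrt{2p}\,\|\bm\eta\|_2\|\bm H\|_{*}$ into the displayed inequality, discarding $D^2\ge0$, and invoking $\lambda>2\sqrt{2p}\,\|\bm\eta\|_2$ gives, with $a:=\|\mathcal P_T\bm H\|_{*}$ and $b:=\|\mathcal P_{T^{\perp}}\bm H\|_{*}$, the relation $\lambda(b-a)\le2\sqrt{2p}\,\|\bm\eta\|_2(a+b)$, so $b\le C\,a$ for an absolute constant $C$. Hence $\|\bm H\|_{*}\le(1+C)\,a\le(1+C)\sqrt{2r}\,\|\bm H\|_F$, i.e.\ $\bm H$ is effectively of rank $O(r)$.

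I then convert the displayed inequality into a scalar one and apply the SRIP. Bounding the cross-term a second time by Cauchy--Schwarz, $\langle\bm w,\bm\eta\rangle\le\|\bm\eta\|_2 D$, and using $\|\hat{\bm X}^{u}\|_{*}-\|\bm X_0\|_{*}\ge b-a\ge-\sqrt{2r}\,\|\bm H\|_F$, I obtain $D^2\le2\|\bm\eta\|_2 D+\lambda\sqrt{2r}\,\|\bm H\|_F$, so completing the square gives $D\le2\|\bm\eta\|_2+\sqrt{\lambda\sqrt{2r}\,\|\bm H\|_F}$. Since $\bm X_0$ has rank $\le r$ and $\bm H$ is effectively rank $O(r)$, the pair $(\hat{\bm X}^{u},\bm X_0)$ lies in the regime governed by the SRIP, which supplies $D\ge\alpha\sqrt p\,\min\{\|\bm H\|_F,\|\hat{\bm X}^{u}+\bm X_0\|_F\}=\alpha\sqrt p\,\|\bm H\|_F$ for an absolute constant $\alpha$ (the SRIP absorbs the global sign ambiguity, and the WLOG reduction makes $\|\bm H\|_F$ the active distance). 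Combining the two and writing $u:=\|\bm H\|_F$, the inequality $\alpha\sqrt p\,u\le2\|\bm\eta\|_2+\sqrt{\lambda\sqrt{2r}\,u}$ splits into the case $\alpha\sqrt p\,u\lesssim\|\bm\eta\|_2$, giving $u\lesssim\|\bm\eta\|_2/\sqrt p$, and the case $\alpha\sqrt p\,u\lesssim\sqrt{\lambda\sqrt r\,u}$, giving $u\lesssim\lambda\sqrt r/p$; together these yield $u\lesssim\frac{\|\bm\eta\|_2}{\sqrt p}+\frac{\lambda\sqrt r}{p}$, and a union bound over the SRIP event and the operator-norm event delivers the stated probability $1-2\exp(-cp)$.

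The step I expect to be the crux is the uniform operator-norm control of the noise cross-term over all sign patterns. In a linear model one would bound $\langle\mathcal A^{*}\bm\eta,\bm H\rangle\le\|\mathcal A^{*}\bm\eta\|_{\mathrm{op}}\|\bm H\|_{*}$ with $\|\mathcal A^{*}\bm\eta\|_{\mathrm{op}}\asymp\sqrt{m+n}\,\|\bm\eta\|_2$, but the amplitudes destroy this linearity: the effective measurement signs are data-dependent, forcing a union bound over $2^p$ patterns and thereby inflating the threshold to $\sqrt p\,\|\bm\eta\|_2$. This is exactly what dictates the hypothesis $\lambda>2\sqrt{2p}\,\|\bm\eta\|_2$, and (since that hypothesis already forces $\frac{\lambda\sqrt r}{p}\gtrsim\frac{\|\bm\eta\|_2}{\sqrt p}$) it is the $\frac{\lambda\sqrt r}{p}$ term that dominates the final bound.
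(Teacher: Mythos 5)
Your architecture (optimality inequality, uniform noise cross-term bound, cone/decomposition step, SRIP lower bound) parallels the paper's, and two of your ingredients are legitimate alternatives: your union bound over the $2^p$ sign patterns, using that $\sum_i|\eta_i|s_i\bm{A}_i$ has i.i.d.\ $\mathcal{N}(0,\|\bm{\eta}\|_2^2)$ entries for each fixed $\bm{s}$, replaces the paper's duality argument in \eqref{oparetornorm}, which bounds $\|\mathcal{A}^{*}(\bm{\eta}_{T_1}-\bm{\eta}_{T_2})\|$ by testing against its top singular pair and invoking the rank-one case of Lemma \ref{spaceSRIP}; and your global sign WLOG replaces the four-set partition of Lemma \ref{lemma3.1}. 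However, there is a genuine gap at your cone step. From $\lambda(b-a)\le 2\sqrt{2p}\,\|\bm{\eta}\|_2(a+b)$ you conclude ``$b\le Ca$ for an absolute constant $C$,'' but what the inequality actually yields is $b\le \frac{\lambda+2\sqrt{2p}\|\bm{\eta}\|_2}{\lambda-2\sqrt{2p}\|\bm{\eta}\|_2}\,a$, and the theorem assumes only $\lambda>2\sqrt{2p}\|\bm{\eta}\|_2$: as $\lambda$ approaches this threshold the ratio blows up, so your claim $\|\bm{H}\|_*\le(1+C)\sqrt{2r}\,\|\bm{H}\|_F$ --- exactly the membership in the effective-rank-$O(r)$ class that licenses the SRIP lower bound $D\ge\alpha\sqrt{p}\,\|\bm{H}\|_F$ --- is unjustified near the threshold. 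As written, your proof covers only a regime such as $\lambda\ge 4\sqrt{2p}\,\|\bm{\eta}\|_2$, not the full range of $\lambda$ permitted by the statement.

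This is precisely the point where the paper's proof does something different: it keeps the degenerate factor explicit, retaining $(\lambda-2\sqrt{2p}\|\bm{\eta}\|_2)\|\bm{H}_c\|_{*}$ on the left of \eqref{e:t122a}, derives the nuclear-norm bound \eqref{e:h}, and then rescales, defining $\hat{\bm{H}}$ proportional to $(\lambda-2\sqrt{2p}\|\bm{\eta}\|_2)\,\bm{H}^{-}$ so that $\hat{\bm{H}}\in N_r^{*}$ no matter how small $\lambda-2\sqrt{2p}\|\bm{\eta}\|_2$ is; Lemma \ref{rmk1} is applied to $\hat{\bm{H}}$, and since both sides of the SRIP inequality are homogeneous of degree two, the scaling factor cancels, giving $\|\mathcal{A}_{T_1\cup T_2}(\bm{H}^{-})\|_2^2\ge p\theta_{-}\|\bm{H}^{-}\|_F^2$ with no dependence on $\lambda-2\sqrt{2p}\|\bm{\eta}\|_2$. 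To repair your argument you would need either this rescaling device (note it leans on Lemma \ref{rmk1} holding on all of $N_r^{*}$, including matrices of very small Frobenius norm --- a point worth scrutinizing, since matrices with unbounded ratio $\|\bm{X}\|_{*}/\|\bm{X}\|_F$ can be rescaled into $N_r^{*}$) or a strengthened hypothesis on $\lambda$. A secondary, fixable omission: since $|w_i|=\min\{|\langle\bm{A}_i,\bm{H}^{-}\rangle|,|\langle\bm{A}_i,\bm{H}^{+}\rangle|\}$, when the index set on which the minimum is attained by $\bm{H}^{+}$ has at least $p/2$ elements, your bound $D\gtrsim\sqrt{p}\,\|\bm{H}^{-}\|_F$ requires SRIP membership of $\bm{H}^{+}$ as well; this does follow from your cone bound together with $2\bm{X}_0=\bm{H}^{+}-\bm{H}^{-}$ and the WLOG ordering $\|\bm{H}^{-}\|_F\le\|\bm{H}^{+}\|_F$, but it should be argued rather than letting ``the SRIP absorbs the global sign ambiguity'' carry the weight.
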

\subsection{Related Work}
\subsubsection{Phase Retrieval}

As previously mentioned, when $m = 1$, we consider the linear operator $\mathcal{A}:{\mathbb F}^{n\times m}\to{\mathbb R}^p$ defined by
\[
 \mathcal{A}(\bm{X})=(\langle \bm{A}_1,\bm{X}\rangle,\langle \bm{A}_2,\bm{X}\rangle,\ldots,\langle \bm{A}_p,\bm{X}\rangle)^{\top}
\]
In this scenario, by setting $\bm{A}_i=\bm{a}_i\in {\mathbb F}^{n\times 1}$ and $\bm{X}_0=\bm{x}_0\in {\mathbb F}^{n\times 1}$, the amplitude-based matrix recovery problem effectively reduces to the classical vector phase retrieval problem.
 Accordingly, the rank-constrained least-squares formulation \eqref{pro:squarank} specializes to
\begin{align}\label{eq:nonlinear}
\mathop{\argmin}_{\bm{X}\in\mathbb{F}^{n\times 1} }\ \ \|\ |\mathcal{A}(\bm{X})|-\bm{y}\|_2=\mathop{\argmin}_{\bm{x}\in\mathbb{F}^{n} }\ \ \|\ |\bm{A}(\bm{x})|-\bm{y}\|_2,
\end{align}
which coincides with the nonlinear least-squares model \eqref{pro:squavector} for phase retrieval. Here, $\bm{A}:=[\bm{a}_1, \ldots, \bm{a}_p]^{\top}\in {\mathbb F}^{p\times n}$.

The performance of (\ref{eq:nonlinear}) has been analyzed in the real- and complex-valued settings in \cite{huang2020estimation} and \cite{xia2024performance}, respectively.
In the real case, when $\bm{A}=[\bm{a}_1, \ldots, \bm{a}_p]^{\top}\in \mathbb{R}^{p\times n}$ is a Gaussian matrix with entries $a_{ij}\sim \mathcal{N}(0,1)$,
it has been shown in   \cite{huang2020estimation}  that the  solution $\hat{\bm{x}}$ to \eqref{pro:squavector} (or (\ref{eq:nonlinear}))
satisfies
\begin{align}\label{errorrealbound}
\min\{\|\hat{\bm{x}}-\bm{x}_0\|_2,\|\hat{\bm{x}}+\bm{x}_0\|_2\}\lesssim \frac{\|\bm{\eta}\|_2}{\sqrt{p}}
\end{align}
with high probability, provided that $p\gtrsim n$.
This result is consistent with our findings in Theorem \ref{thm:squarank} for the case $m=1$.
Notably, it has been shown in \cite{huang2020estimation}  that the reconstruction error bound $\frac{\|\bm{\eta}\|_2}{\sqrt{p}}$
is tight. Specifically, under the conditions ${(|\sum_{i=1}^p\eta_i|)}/{p}\geq \epsilon_0$
and ${\|\bm{\eta}\|_2}/{\sqrt{p}}\leq \epsilon_1$ for some constants $\epsilon_0, \epsilon_1 > 0$, the following lower bound holds:
$$
\min\{\|\hat{\bm{x}}-\bm{x}_0\|_2,\|\hat{\bm{x}}-\bm{x}_0\|_2\}\gtrsim\frac{\|\bm{\eta}\|_2}{\sqrt{p}}.
$$
This corresponds to our result in Theorem \ref{thm:squarankone} for the case $m=1$.

When the target signal $\bm{x}_0\in \mathbb{F}^{n}$ is $k$-sparse (i.e., $\|\bm{x}_0\|_0\leq k$, where $\|\bm{x}_0\|_0$
denotes the number of nonzero entries in $\bm{x}_0$),
the sparse signal recovery problem can be formulated as the  following nonlinear  least squares model
\begin{align*}
\mathop{\argmin}_{\bm{x}\in \mathbb{F}^{n}, \|\bm{x}\|_0\leq k} \||\bm{A}\bm{x}|-\bm{\eta}\|_2^2.
\end{align*}
Two popular methods for sparse signal recovery are the constrained Lasso model
\begin{align}\label{eq:la1}
\mathop{\argmin}_{\bm{x}\in \mathbb{F}^{n}}\||\bm{A}\bm{x}|-\bm{\eta}\|_2 \ \  \ \ \text{s.t.} \ \ \ \ \|\bm{x}\|_1\leq R
 \end{align}
and its unconstrained version
\begin{align}\label{eq:la2}
\mathop{\argmin}_{\bm{x}\in \mathbb{F}^{n}} \||\bm{A}\bm{x}|-\bm{\eta}\|_2^2+\lambda \|\bm{x}\|_1,
\end{align}
where $R$ specifies the desired sparsity level of the solution, and $\lambda>0$ is the regularization parameter.

It is worth noting that for an $n \times 1$ matrix, the nuclear norm is equivalent to the $\ell_1$
  norm of the corresponding vector in $\mathbb{F}^n$
 . This establishes a natural connection between $\ell_1$
  regularization and nuclear norm regularization when $m=1$. Consequently, the constrained and unconstrained Lasso formulations (\ref{eq:la1}) and (\ref{eq:la2}) can be viewed as special cases of \eqref{leastproblemnu} and \eqref{leastproblem}, respectively, with $m=1$.

For the real case, when $\bm{A}=[\bm{a}_1, \ldots, \bm{a}_p]^{\top}\in \mathbb{R}^{p\times n}$ is a Gaussian matrix with entries $a_{ij}\sim \mathcal{N}(0,1)$, it has been established that  any solution $\hat{\bm{x}}\in \mathbb{R}^{n}$
to the constrained Lasso model (\ref{eq:la1}) with $R:= \|\bm{x}_0\|_1$
satisfies the bound in \eqref{errorrealbound} with high probability if $p\gtrsim k\log(e n/k)$. Furthermore, the solution $\hat{\bm{x}}\in \mathbb{R}^{n}$
to the unconstrained Lasso model (\ref{eq:la2}) with
$\lambda\gtrsim \|\bm{\eta}\|_1+\|\bm{\eta}\|_2\sqrt{\log n}$ and $\mathbb{F}=\mathbb{R}$ obeys
$$\min\{\|\hat{\bm{x}}-\bm{x}_0\|_2,\|\hat{\bm{x}}-\bm{x}_0\|_2\}\lesssim \frac{\lambda\sqrt{k}}{p}+ \frac{\|\bm{\eta}\|_2}{\sqrt{p}}$$
with high probability, provided $p\gtrsim k\log(e n/k)$.
For more details, refer to \cite{huang2020estimation}.
These results differ from  Theorems \ref{thm:rankmin} and \ref{thm:unell2} when $\bm{X}_0=\bm{x}_0$. This distinction arises because the estimation performance analysis in \eqref{leastproblemnu} and \eqref{leastproblem} for recovering $\bm{X}_0 = \bm{x}_0$ relies solely on the low-rank property (i.e., $\text{rank}(\bm{X}_0) = 1$) without leveraging the sparsity $k$ of the vector $\bm{x}_0$ (i.e., $\|\bm{x}_0\|_0 \leq k$).

When the noise vector $\bm{\eta}$ is bounded in the $\ell_2$
  norm by $\epsilon$ (i.e., $\|\bm{\eta}\|_2\leq \epsilon$), the constrained $\ell_1$
  minimization model can be formulated as
\begin{align}\label{eq:phasel1}
\mathop{\argmin}_{\bm{x}\in \mathbb{F}^{n}}\|\bm{x}\|_1 \ \ \ \  \text{s.t.}  \ \ \ \ \||\bm{A}\bm{x}|-\bm{\eta}\|_2\leq \epsilon.
\end{align}
 The performance of this model has been analyzed in both real and complex cases, as presented in \cite{gao2016stable} and \cite{xia2024performance}, respectively. In the real case, specifically, when \(\bm{A} = [\bm{a}_1, \ldots, \bm{a}_p]^{\top} \in \mathbb{R}^{p \times n}\) is a Gaussian matrix with entries \(a_{ij} \sim \mathcal{N}(0,1)\), it has been shown that if \(p \gtrsim k \log(e n / k)\), then with high probability, the following holds:
\begin{align}\label{errorcombound}
\min \{ \|\hat{\bm{x}}-\bm{x}_0\|_2, \|\hat{\bm{x}}+\bm{x}_0\|_2\}\,\,\lesssim\,\,\frac{\epsilon}{\sqrt{p}}
 \end{align}
for any solution \(\hat{\bm{x}} \in \mathbb{R}^n\) to the constrained \(\ell_1\) minimization model. Theorem \ref{thm:ell2} constitutes an extension of the result in (\ref{errorcombound}) to the low-rank phase retrieval setting.

\subsubsection{Low-rank Matrix Recovery}

The recovery of low-rank matrices $\bm{X}_0\in \mathbb{F}^{n\times m}$ from linear measurements $\bm{y} = \mathcal{A}(\bm{X}_0) + \bm{\eta}$, has been a subject of intensive research in recent years.
 This problem, often referred to as affine rank minimization \cite{recht2010guaranteed}, has strong theoretical and algorithmic connections to compressed sensing \cite{foucart2013invitation}.

A widely adopted model for recovering the low-rank matrix $\bm{X}_0$
  with $\mathrm{rank}(\bm{X}_0)\leq r$ is formulated as
\begin{align*}
\mathop{\argmin}_{\bm{X}\in \mathbb{F}^{n\times m}, \mathrm{rank}(\bm{X})\leq r } \|\mathcal{A}(\bm{X})-\bm{y}\|_2^2.
\end{align*}
In scenarios where the noise is bounded, i.e., $\|\bm{\eta}\|_2\leq \epsilon$, an alternative model of particular theoretical and practical interest for recovering $\bm{X}_0$
  is given by
\begin{align*}
\mathop{\argmin}_{\bm{X}\in \mathbb{F}^{n\times m} } \|\bm{X}\|_{*} \ \ \ \ \text{s.t.} \ \ \ \ \|\mathcal{A}(\bm{X})-\bm{y}\|_2\leq \epsilon.
\end{align*}
For this constrained  problem with $\mathbb{F}=\mathbb{R}$, a variety of sufficient conditions based on the restricted isometry property (RIP) for matrices have been established to guarantee the stable recovery of all matrices with rank at most $r$. Comprehensive analyses of these conditions are elucidated in \cite{cai2013sparse, candes2011tight, mo2011new, wang2013bounds}.
Additionally, the following unconstrained minimization problem has been extensively studied:
\begin{align*}
\mathop{\argmin}_{\bm{X} \in \mathbb{F}^{n\times m} }  \|\mathcal{A}(\bm{X})-\bm{y}\|_2^2+\lambda\|\bm{X}\|_{*}.
\end{align*}
In the real case (i.e., $\mathbb{F}=\mathbb{R}$),
when $\mathcal{A}$ is a  Gaussian measurement ensemble, it has been shown in \cite{candes2011tight} that
the solution $\hat{\bm{X}}\in \mathbb{R}^{n\times m}$
to the unconstrained minimization problem
satisfies
\[
\|\hat{\bm{X}} - \bm{X}_0\|_F \lesssim \frac{{\lambda\sqrt{r}}}{p}
\]
 with high probability, where the parameter $\lambda=16\sqrt{\max\{m,n\}}\cdot \sigma$ and the noise
 $\bm{\eta}\in \mathcal{N}(0,\sigma^2\bm{I}_p)$,
 provided $p \gtrsim (m+n+1)r$. Comparing this error bound with that in Theorem \ref{thm:unell2}, we see that although both approaches provide recovery guarantees, the nonlinear Lasso \eqref{leastproblem} achieves a weaker bound for amplitude-based matrix recovery. This is expected, as it has access only to the magnitudes of the measurements (the phase information is lost) and imposes no assumptions on the noise.

We next turn to the rank-one measurements.
Let $\bm{A}_i = \bm{a}_i \bm{a}_i^{*}$ for some $\bm{a}_i \in \mathbb{F}^n$, and let $\bm{X}_0 \in \mathbb{F}^{n \times n}$ be a positive semidefinite Hermitian matrix with rank at most $r$. We then observe that
$$
\langle \bm{A}_i, \bm{X}_0\rangle=\mathrm{Tr}(\bm{a}_i \bm{a}_i^{*}\bm{X}_0)
=\mathrm{Tr}(\bm{a}_{i}^{*}\bm{X}_0\bm{a}_i)=
\bm{a}_{i}^{*}\bm{X}_0\bm{a}_i\geq 0,
$$
which implies $|\langle \bm{A}_i, \bm{X}_0\rangle|=\langle \bm{A}_i, \bm{X}_0\rangle.$ Hence,
 the amplitude-based matrix recovery problem can be reformulated as the recovery of the rank-$r$ Hermitian matrix $\bm{X}_0$ from
 \[
 \mathcal{A}(\bm{X}_0)=(\langle \bm{A}_1, \bm{X}_0 \rangle, \langle \bm{A}_2, \bm{X}_0 \rangle, \ldots, \langle \bm{A}_p, \bm{X}_0 \rangle)^{\top}.
 \]
In this setting, Kueng, Rauhut, and Terstiege \cite{kueng2017low} analyze the performance of the convex program:
\begin{align*}
\mathop{\argmin}_{\bm{X}\in \mathbb{F}^{n\times n}, \bm{X} \succeq \mathbf{0} } \|\bm{X}\|_{*} \ \ \ \ \text{s.t.} \ \ \ \ \|\mathcal{A}(\bm{X})-\bm{y}\|_2\leq \epsilon.
\end{align*}
 The authors in \cite{kueng2017low} demonstrate  that the solution $\hat{\bm{X}}$ to this model with $\mathbb{F}=\mathbb{C}$ satisfies
\[
\|\hat{\bm{X}} - \bm{X}_0\|_F \lesssim \frac{{\epsilon}}{\sqrt{p}}
\]
with high probability, provided  $p \gtrsim nr$ and $\bm{a}_1, \ldots, \bm{a}_p \in \mathbb{C}^{n}$ are independent standard Gaussian  random vectors.
This error bound is analogous to the result presented in Theorem  \ref{thm:ell2} with $m=n$.
A notable special case of this framework is the recovery of the Hermitian matrix $\bm{X}_0 = \bm{x}_0 \bm{x}_0^{*}$, which reduces to phase retrieval based on $y_i = |\langle \bm{a}_i, \bm{x}_0 \rangle|^2 + \eta_i$ for $i \in [p]$. This corresponds to the PhaseLift approach for phase retrieval, which is a topic extensively explored e.g., \cite{candes2014solving, candes2013phaselift}.

\subsubsection{Low-rank Phase Retrieval}

 Vaswani, Nayer and Eldar \cite{vaswani2017low} introduce a seminal magnitude-only (phaseless) measurement model for LRPR, defined as
\[
y_{ik} = |\langle \bm{a}_{i,k}, \bm{x}_k^{0}\rangle|, \quad i \in [p], \quad k \in [m],
\]
where \( \bm{x}_k^{0}\in \mathbb{R}^n \) denotes the \( k \)-th column of \( \bm{X}_0 \in \mathbb{R}^{n\times m}\), and the measurement vectors \( \bm{a}_{i,k}\in \mathbb{R}^n \) are mutually independent.
The authors  in \cite{vaswani2017low} propose two iterative algorithms comprising a spectral initialization step followed by an iterative procedure to maximize the likelihood of the observed data. Furthermore, they establish sample complexity bounds for the initialization approach to ensure a reliable approximation of the true matrix \( \bm{X}_0 \).
Subsequent to this foundational work, the LRPR problem based on this measurement model has attracted considerable scholarly attention e.g., \cite{9897747, pmlr-v97-nayer19a, 9050801,9537806,9912351}. Notably, Nayer and Vaswani \cite{9050801} propose an alternating minimization algorithm for LRPR (AltMinLowRaP) that exhibits geometric convergence. They demonstrate that the AltMinLowRaP estimate converges geometrically to \( \bm{X}_0 \) under the condition that \( mp\geq Cnr^4 \log(1/\varepsilon) \), where \( \bm{a}_{i,k} \) are independent and identically distributed standard Gaussian vectors, and the right singular vectors of \( \bm{X}_0 \) satisfy the incoherence assumption. In their follow-up work \cite{9537806}, an enhanced convergence guarantee for AltMinLowRaP is presented. Specifically, they prove that if the right singular vectors of \( \bm{X}_0 \) satisfy the incoherence assumption, the AltMinLowRaP estimate converges geometrically to \( \bm{X}_0 \) when the total number of measurements satisfies \( m p \geq n r^2 (r + \log(1/\varepsilon)) \).

An alternative phaseless linear measurement model for LRPR in \cite{lee2021phase} is introduced:
\begin{align*}
\bm{y}=|\mathcal {A}(\bm{X}_0)|^2+\bm{\eta},
\end{align*}
which is different from our amplitude-based measurements $\bm{y}=|\mathcal{A}(\bm{X}_0)|+\bm{\eta}$.
  The authors in \cite{lee2021phase} develop recovery algorithms utilizing the concept of anchored regression \cite{bahmani2017phase, bahmani2019solving} to reconstruct \(\bm{X}_0 \in \mathbb{F}^{n\times m}\) from \(p\) quadratic measurements. These proposed algorithms demonstrate the capability to recover the low-rank \(\bm{X}_0\) from phaseless measurements using significantly fewer than \(mn\) measurements.
Specifically, when the matrices
$\bm{A}_i=\bm{a}_i\bm{b}_i^{*}$ with the independent vectors
 $\bm{a}_i \sim \mathcal{N}(0,\bm{I}_n)$ and $\bm{b}_i \sim \mathcal{N}(0,\bm{I}_m)$
 represent rank-$1$ complex Gaussian matrices,  the anchored regression returns an accurate estimate of the  complex rank-$1$ matrix \( \bm{X}_0 \) with high probability,
  provided that $p/\ln^2 p\gtrsim m+n $.
For a rank $r$ matrix \( \bm{X}_0 \in \mathbb{R}^{n\times m} \),
if  the matrices   $\bm{A}_i\in \mathbb{R}^{n\times m} $  are  Gaussian random matrices,
the anchored regression provides an accurate estimate of the  complex rank-$1$ matrix \( \bm{X}_0 \) with high probability,
 given that $p\gtrsim r(m+n)\ln(m+n)$.

Whereas most work on low-rank phase retrieval (LRPR) has focused on developing and analyzing theoretical algorithms, our study examines the performance of established amplitude-based formulations for low-rank matrix recovery. This complements the algorithmic literature and contributes to a more comprehensive understanding of LRPR and its practical applications.

\section {Preliminaries}

\subsection{Covering Numbers of Sets}
To set the stage for our analysis, we begin by defining the following sets of matrices. Let
\begin{align}\label{setur}
U_r:=\{\bm{H}\in \mathbb{R}^{n\times m}:\|\bm{H}\|_F=1, \mathrm{rank}(\bm{H})\leq r\};\\
\label{setkr}
K_r:=\{\bm{H}\in \mathbb{R}^{n\times m}:\|\bm{H}\|_F\leq 1, \mathrm{rank}(\bm{H})\leq r\};\\
\label{setnr}
N_r^{*}:=\{\bm{H}\in \mathbb{R}^{n\times m}:\|\bm{H}\|_F\leq 1, \|\bm{H}\|_{*}\leq \sqrt{r}\}.
\end{align}
It is evident that $U_r\subseteq K_r\subseteq N_r^{*}$, given that $\|\bm{H}\|_{*}\leq \sqrt{\mathrm{rank}(\bm{H})}\cdot \|\bm{H}\|_{F}$.
In what follows, we establish bounds on the covering numbers of these sets.
\begin{lemma}\cite[Lemma 3.1]{candes2011tight}\label{cover}
For  the set $U_r$,
there exists an $\epsilon$-net {$\overline{U_r} \subset U_r$} with respect to the Frobenius norm obeying
$$
\#\overline{U_r}\leq \left(\frac{9}{\epsilon}\right)^{(m+n+1)r}.
$$
\end{lemma}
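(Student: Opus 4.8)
The plan is to reproduce the classical singular-value-decomposition plus volumetric-covering argument. The starting point is that every $\bm{H}\in U_r$ admits a decomposition $\bm{H}=\bm{P}\bm{\Sigma}\bm{Q}^\top$, where $\bm{P}\in\mathbb{R}^{n\times r}$ and $\bm{Q}\in\mathbb{R}^{m\times r}$ have orthonormal columns and $\bm{\Sigma}\in\mathbb{R}^{r\times r}$ is diagonal with $\|\bm{\Sigma}\|_F=\|\bm{H}\|_F=1$. Thus $U_r$ is the image of the product of $\{\bm{P}:\|\bm{P}\|\le 1\}$, the set of unit-Frobenius-norm diagonal matrices, and $\{\bm{Q}:\|\bm{Q}\|\le 1\}$ under the map $(\bm{P},\bm{\Sigma},\bm{Q})\mapsto\bm{P}\bm{\Sigma}\bm{Q}^\top$, where $\|\cdot\|$ denotes the spectral (operator) norm. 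I would therefore build a net for $U_r$ by covering each factor in an appropriate norm and multiplying the pieces together.

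For the factors, a standard volumetric estimate shows that the spectral-norm ball $\{\bm{P}\in\mathbb{R}^{n\times r}:\|\bm{P}\|\le 1\}$, living in an $nr$-dimensional space, admits an $(\epsilon/3)$-net $\overline{B}_n$ in the spectral norm with $\#\overline{B}_n\le(9/\epsilon)^{nr}$; similarly one obtains $\overline{B}_m$ with $\#\overline{B}_m\le(9/\epsilon)^{mr}$. The diagonal matrices of unit Frobenius norm form a copy of $S^{r-1}$, which carries an $(\epsilon/3)$-net $\overline{D}$ of size at most $(9/\epsilon)^{r}$. Defining the candidate net as all products $\overline{\bm{P}}\,\overline{\bm{\Sigma}}\,\overline{\bm{Q}}^\top$ with $\overline{\bm{P}}\in\overline{B}_n$, $\overline{\bm{\Sigma}}\in\overline{D}$, $\overline{\bm{Q}}\in\overline{B}_m$ gives cardinality at most $(9/\epsilon)^{(m+n+1)r}$, exactly the claimed bound.

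The covering guarantee then follows from the telescoping identity
\begin{align*}
\bm{H}-\overline{\bm{P}}\,\overline{\bm{\Sigma}}\,\overline{\bm{Q}}^\top
&=(\bm{P}-\overline{\bm{P}})\bm{\Sigma}\bm{Q}^\top
+\overline{\bm{P}}(\bm{\Sigma}-\overline{\bm{\Sigma}})\bm{Q}^\top
+\overline{\bm{P}}\,\overline{\bm{\Sigma}}(\bm{Q}-\overline{\bm{Q}})^\top.
\end{align*}
Invoking the two submultiplicative bounds $\|\bm{AB}\|_F\le\|\bm{A}\|\,\|\bm{B}\|_F$ and $\|\bm{AB}\|_F\le\|\bm{A}\|_F\,\|\bm{B}\|$, together with $\|\bm{\Sigma}\|_F=1$, $\|\overline{\bm{P}}\|\le 1$, $\|\overline{\bm{\Sigma}}\|_F\le 1$, and the fact that right-multiplication by a matrix with orthonormal columns preserves the Frobenius norm (so $\|\bm{M}\bm{Q}^\top\|_F=\|\bm{M}\|_F$), each of the three terms is bounded by $\epsilon/3$; hence $\|\bm{H}-\overline{\bm{P}}\,\overline{\bm{\Sigma}}\,\overline{\bm{Q}}^\top\|_F\le\epsilon$. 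Matching the first and third terms to the two different forms of the submultiplicative inequality is precisely what makes the factor nets collapse cleanly to $\epsilon/3$ each.

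I expect the only real subtleties to be bookkeeping rather than conceptual. The key point is to cover $\bm{P},\bm{Q}$ in the spectral norm but $\bm{\Sigma}$ in the Frobenius norm, so that each error term can be matched to whichever version of $\|\bm{AB}\|_F\le\cdots$ keeps the surviving factor at norm at most $1$; using the wrong norm for $\bm{P},\bm{Q}$ (e.g.\ Frobenius) would force an extra $\sqrt{r}$ into the third term. A second minor point is that the products $\overline{\bm{P}}\,\overline{\bm{\Sigma}}\,\overline{\bm{Q}}^\top$ are automatically of rank at most $r$ but need not have unit Frobenius norm, so a final normalization onto the sphere is needed to place the net inside $U_r$ as stated; this perturbs the covering radius only by a constant factor that is absorbed by a harmless rescaling of $\epsilon$. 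Since the statement is quoted verbatim from \cite[Lemma 3.1]{candes2011tight}, the cleanest route is simply to cite that reference, but the argument above recovers it directly.
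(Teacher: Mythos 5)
Your argument is correct in substance and follows the same skeleton as the proof the paper gives for the companion Lemma \ref{add:cover} (Lemma \ref{cover} itself is quoted from \cite[Lemma 3.1]{candes2011tight}): an SVD factorization, separate nets for the two orthonormal factors and the diagonal factor, a product net with cardinality exponent $(m+n+1)r$, and the three-term telescoping bound. The genuine difference is the norm on the factors: the paper (following Cand\`es--Plan) covers the Stiefel-type sets $O_n=\{\bm{Q}\in\mathbb{R}^{n\times r}:\bm{Q}^\top\bm{Q}=\bm{I}_r\}$ in the max-column norm $\|\cdot\|_{1,2}$ and exploits the diagonality of $\bm{\Sigma}$ via $\|(\bm{U}-\bar{\bm{U}})\bm{\Sigma}\bm{V}^\top\|_F^2=\sum_{i}\Sigma_{ii}^2\|\bm{U}_i-\bar{\bm{U}}_i\|_2^2$, whereas you cover in the spectral norm and invoke $\|\bm{A}\bm{B}\|_F\le\|\bm{A}\|\,\|\bm{B}\|_F$ and its mirror image. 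Both choices yield $\epsilon/3$ per term, so yours is an equally valid route, and arguably a cleaner one since it needs no coordinatewise computation.

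The one place your write-up falls short of the statement as given is the membership issue you flag at the end. Because you cover the whole spectral ball rather than the orthonormal-column sets themselves, your net points $\bar{\bm{P}}\bar{\bm{\Sigma}}\bar{\bm{Q}}^\top$ generally have Frobenius norm less than $1$ and hence lie outside $U_r$; renormalizing (or replacing each point by a nearest element of $U_r$) doubles the covering radius, and compensating by running the construction at radius $\epsilon/2$ turns the claimed $(9/\epsilon)^{(m+n+1)r}$ into $(18/\epsilon)^{(m+n+1)r}$. That slack is harmless for every use of Lemma \ref{cover} in this paper, but it does not literally recover the stated constant. The clean fix costs nothing: the volumetric estimate \eqref{e:unitball} recorded in Remark \ref{rmk:cover} applies to an arbitrary subset of a unit ball and produces a net \emph{inside} that subset, so you may take your $\epsilon/3$-nets inside $O_n$ and $O_m$ (with respect to the spectral norm, since these sets sit in the spectral unit ball) and inside the unit sphere of nonnegative diagonal matrices. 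Then $\bar{\bm{P}},\bar{\bm{Q}}$ have orthonormal columns and $\|\bar{\bm{\Sigma}}\|_F=1$, so the product lies exactly in $U_r$, no renormalization is needed, and the constant $9$ survives.
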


\begin{remark}\label{rmk:cover}
The proof of Lemma \ref{cover} hinges on a fundamental property of $\varepsilon$-nets
as elucidated in \cite[III.1]{candes2011tight}. Let $S$ denote a unit ball in an $n$-dimensional space with respect to a given norm $\|\cdot\|_F$, or alternatively, the surface of the unit ball or any other subset of the unit ball. In such cases, the cardinality of the $\epsilon$-net $\overline{S}$ of $S$ is bounded above as follows
\begin{align}\label{e:unitball}
\#\overline{S}\leq \left(1+\frac{2}{\epsilon}\right)^n \leq \left(\frac{3}{\epsilon}\right)^n.
\end{align}
The latter inequality holds universally under the assumption that $0<\epsilon \leq 1$.
In the proof of Lemma \ref{cover}, the upper bound $\#\overline{S} \leq \left(\frac{3}{\epsilon}\right)^n$ is utilized.
Importantly, the proof of Lemma \ref{cover} remains valid when utilizing the tighter bound
$\#\overline{S} \leq \left(1+\frac{2}{\epsilon}\right)^n$. Employing this bound, we derive the following  inequalities
\begin{align*}
\#\overline{U_r} \leq \left(1+\frac{6}{\epsilon}\right)^{(m+n+1)r}. 
\end{align*}

\end{remark}

\begin{lemma}\label{add:cover}
For  the set ${K}_r$,
there exists an $\epsilon$-net {$\overline{K_r}  \subset K_r$} with respect to the Frobenius norm obeying
$$\#\overline{K_r}\leq \left(1+\frac{6}{\epsilon}\right)^{(m+n+1)r}.$$
\end{lemma}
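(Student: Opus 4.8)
The plan is to mirror the proof of Lemma \ref{cover} for $U_r$ from \cite{candes2011tight}, changing only the single component that differs when one passes from the Frobenius sphere to the Frobenius ball. The starting point is the singular value decomposition: every $\bm{H}\in K_r$ can be written as $\bm{H}=\bm{U}\bm{\Sigma}\bm{V}^{\top}$ with $\bm{U}\in\mathbb{R}^{n\times r}$, $\bm{V}\in\mathbb{R}^{m\times r}$ having orthonormal columns and $\bm{\Sigma}=\mathrm{diag}(\sigma_1,\dots,\sigma_r)$. Since $\|\bm{H}\|_F=\|\bm{\Sigma}\|_F\leq 1$, the only structural difference from the $U_r$ case (where $\|\bm{\Sigma}\|_F=1$) is that $\bm{\Sigma}$ now ranges over the $r$-dimensional Frobenius \emph{ball} of diagonal matrices rather than its sphere. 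As Remark \ref{rmk:cover} applies verbatim to any subset of a unit ball, this substitution will leave the covering-number exponent untouched.

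I would assemble the net from three pieces, splitting $\epsilon$ as $\epsilon/3$ each: (i) an $\epsilon/3$-net $\overline{D}$ of the diagonal matrices with $\|\bm{\Sigma}\|_F\leq 1$, living in an $r$-dimensional Frobenius ball and hence, by the bound $\#\overline{S}\leq(1+2/\epsilon)^n$ of Remark \ref{rmk:cover}, of cardinality at most $(1+6/\epsilon)^{r}$; (ii) an $\epsilon/3$-net $\overline{Q_1}$ (in operator norm) of the operator-norm unit ball of $\mathbb{R}^{n\times r}$, of cardinality at most $(1+6/\epsilon)^{nr}$; and (iii) the analogous net $\overline{Q_2}$ for $\mathbb{R}^{m\times r}$, of cardinality at most $(1+6/\epsilon)^{mr}$. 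Because $\bm{U},\bm{V}$ with orthonormal columns lie in the respective operator-norm balls, these nets reach them. The candidate net is $\overline{K_r}=\{\bar{\bm{U}}\bar{\bm{\Sigma}}\bar{\bm{V}}^{\top}:\bar{\bm{U}}\in\overline{Q_1},\ \bar{\bm{\Sigma}}\in\overline{D},\ \bar{\bm{V}}\in\overline{Q_2}\}$, whose cardinality is at most the product $(1+6/\epsilon)^{(m+n+1)r}$.

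For the covering property, given $\bm{H}=\bm{U}\bm{\Sigma}\bm{V}^{\top}$ I would select $\bar{\bm{U}},\bar{\bm{\Sigma}},\bar{\bm{V}}$ within $\epsilon/3$ in the relevant norms and estimate $\|\bm{H}-\bar{\bm{U}}\bar{\bm{\Sigma}}\bar{\bm{V}}^{\top}\|_F$ through the telescoping decomposition
\[
\|(\bm{U}-\bar{\bm{U}})\bm{\Sigma}\bm{V}^{\top}\|_F+\|\bar{\bm{U}}(\bm{\Sigma}-\bar{\bm{\Sigma}})\bm{V}^{\top}\|_F+\|\bar{\bm{U}}\bar{\bm{\Sigma}}(\bm{V}-\bar{\bm{V}})^{\top}\|_F.
\]
Invoking the submultiplicative inequalities $\|\bm{A}\bm{B}\|_F\leq\|\bm{A}\|_{\mathrm{op}}\|\bm{B}\|_F$ and $\|\bm{A}\bm{B}\|_F\leq\|\bm{A}\|_F\|\bm{B}\|_{\mathrm{op}}$, together with $\|\bm{\Sigma}\|_F\leq 1$, $\|\bm{V}\|_{\mathrm{op}}=1$, $\|\bar{\bm{U}}\|_{\mathrm{op}}\leq 1$ and $\|\bar{\bm{\Sigma}}\|_F\leq 1$, each of the three terms is bounded by $\epsilon/3$, so $\|\bm{H}-\bar{\bm{U}}\bar{\bm{\Sigma}}\bar{\bm{V}}^{\top}\|_F\leq\epsilon$.

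Finally, I would check that $\overline{K_r}\subseteq K_r$: each element has rank at most $r$ automatically, while $\|\bar{\bm{U}}\bar{\bm{\Sigma}}\bar{\bm{V}}^{\top}\|_F\leq\|\bar{\bm{U}}\|_{\mathrm{op}}\|\bar{\bm{\Sigma}}\|_F\|\bar{\bm{V}}\|_{\mathrm{op}}\leq 1$, so it lies in $K_r$. I do not expect a genuine obstacle, since the argument is a direct transcription of the $U_r$ proof; the only points requiring care are that $\|\bm{\Sigma}\|_F\leq 1$ (rather than $=1$) is what is used in the first telescoping term, and that the passage from the $r$-sphere to the $r$-ball costs nothing in the exponent because the covering bound of Remark \ref{rmk:cover} holds for arbitrary subsets of the unit ball.
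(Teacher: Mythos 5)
Your proposal is correct, and it follows the same skeleton as the paper's proof: SVD factorization $\bm{H}=\bm{U}\bm{\Sigma}\bm{V}^{\top}$, separate $\epsilon/3$-nets for the three factors, a product net of cardinality $(1+6/\epsilon)^{(m+n+1)r}$, and the same telescoping triangle inequality. The one genuine point of divergence is how you cover the factors $\bm{U}$ and $\bm{V}$: the paper, following Cand\`es--Plan, takes nets \emph{inside} the orthonormal-column sets $O_n$, $O_m$ with respect to the max-column norm $\|\cdot\|_{1,2}$, and closes the first and third telescoping terms via the column-wise identity $\|(\bm{U}-\bar{\bm{U}})\bm{\Sigma}\|_F^2=\sum_{i}\Sigma_{ii}^2\|\bm{U}_i-\bar{\bm{U}}_i\|_2^2\le\|\bm{\Sigma}\|_F^2\|\bm{U}-\bar{\bm{U}}\|_{1,2}^2$; you instead cover the full operator-norm unit ball in operator norm and close those terms with the submultiplicative bounds $\|\bm{A}\bm{B}\|_F\le\|\bm{A}\|_{\mathrm{op}}\|\bm{B}\|_F$ and $\|\bm{A}\bm{B}\|_F\le\|\bm{A}\|_F\|\bm{B}\|_{\mathrm{op}}$. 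Both choices are valid since the volumetric bound $(1+2/\epsilon)^d$ applies to the unit ball of any norm covered in that same norm, so the exponents $nr$, $mr$, $r$ come out identically. Your route is arguably cleaner, but it carries one extra obligation that the paper gets for free: because your net points $\bar{\bm{U}},\bar{\bm{V}}$ need not have orthonormal columns, the inclusion $\overline{K_r}\subset K_r$ is not automatic and must be verified, which you correctly do via $\|\bar{\bm{U}}\bar{\bm{\Sigma}}\bar{\bm{V}}^{\top}\|_F\le\|\bar{\bm{U}}\|_{\mathrm{op}}\|\bar{\bm{\Sigma}}\|_F\|\bar{\bm{V}}\|_{\mathrm{op}}\le 1$ together with the rank bound from the inner dimension $r$; in the paper this membership is immediate since its nets live inside $O_n$, $D$, and $O_m$.
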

Although the proof of Lemma \ref{add:cover} closely parallels that of Lemma \ref{cover} in \cite{candes2011tight},
we present a detailed exposition here for the sake of clarity and completeness.
\begin{proof}
Assume that the singular value decomposition of $\bm{X}\in K_r$ is
$\bm{X}=\bm{U}\bm{\Sigma}\bm{V}^{T}$,
where $\bm{\Sigma}\in \mathbb{R}^{r \times r}$ is a diagonal matrix with non-negative singular values  satisfying $\|\bm{\Sigma}\|_F\leq 1$, and $\bm{U} \in \mathbb{R}^{n \times r}$
  and $\bm{V} \in \mathbb{R}^{m \times r}$  are matrices with orthonormal columns comprising the left- and right-singular vectors, respectively. We shall construct  an $\epsilon$-net $\overline{K_r}$ for $K_r$ by separately  covering the set of permissible $\bm{U}$, $\bm{V}$ and $\bm{\Sigma}$.
Let
\[
D := \left\{ \bm{\Lambda} \in \mathbb{R}^{r \times r}: \bm{\Lambda} \text{ is diagonal}, \ \Lambda_{ii} \geq 0 \ \text{for all} \  i, \ \|\bm{\Lambda}\|_F \leq 1 \right\}.
\]
  Evidently, $\bm{\Sigma}\in D$ and
$D$ is a  subset of the unit ball under the norm $\|\cdot\|_F$. By \eqref{e:unitball}, there exists
an $\frac{\epsilon}{3}$-net  $\overline{D}$ for $D$ with cardinality $\#\overline{{D}}\leq \left(1+\frac{6}{\epsilon}\right)^r$.
Consequently, there exists  $\bar{\bm{\Sigma}}\in \overline{D}$ such that $\|\bm{\Sigma}-\bar{\bm{\Sigma}}\|_F\leq \frac{\epsilon}{3}$.
Set
\[
O_n:=\{\bm{Q}\in \mathbb{R}^{n\times r}: \ \bm{Q}^{\top}\bm{Q}=\bm{I}_r\}
\]
 and note that
$\bm{U}\in O_n$. To cover $O_n$, we introduce  the norm $\|\cdot\|_{1,2}$ defined as
$\|\bm{Q}\|_{1,2}=\max_{i\in[r]}\|\bm{Q}_i\|_2$, where $\bm{Q}_i$ denotes the $i$-th column of $\bm{Q}$.
 Observe that $O_n$ is a subset of the unit ball under the norm $\|\cdot\|_{1,2}$, as each column of an orthogonal matrix has unit $\ell_2$ norm.
Thus, by \eqref{e:unitball}, there exists an $\frac{\epsilon}{3}$-net $\overline{O_n}$ for $O_n$ with cardinality  $\#\overline{O_n}\leq \left(1+\frac{6}{\epsilon}\right)^{nr}$.  Hence, there exists $\bar{\bm{U}}\in \overline{O_n}$ such that $\|\bm{U}-\bar{\bm{U}}\|_{1,2}\leq \frac{\epsilon}{3}$.

 Similarly, for the set $O_m=\{\bm{R}\in \mathbb{R}^{m\times r}:\  \bm{R}^{\top}\bm{R}=\bm{I}_r\}$, there exists
 an $\frac{\epsilon}{3}$-net $\overline{O_m}$  with cardinality $\#\overline{O_m}\leq \left(1+\frac{6}{\epsilon}\right)^{mr}$.
  Hence, there exists  $\bar{\bm{V}}\in \overline{O_m}$ such that $\|\bm{V}-\bar{\bm{V}}\|_{1,2}\leq \frac{\epsilon}{3}$.

Set
\[
\overline{K_{r}}\,\,:=\,\,\{\bar{\bm{Q}}\bar{\bm{\Sigma}}\bar{\bm{R}}^{\top}:\
\bar{\bm{Q}}\in \overline{O_n}, \bar{\bm{\Sigma}}\in \overline{D}, \bar{\bm{R}}\in \overline{O_m}\}.
\]
 Then
$\#\overline{K_{r}}\leq (\#\overline{O_n})\cdot (\#\overline{O_m})\cdot(\#\overline{D})\leq \left(1+\frac{6}{\epsilon}\right)^{(m+n+1)r}$,
and, in particular,   $\bar{\bm{X}}=\bar{\bm{U}}\bar{\bm{\Sigma}}\bar{\bm{V}}^{\top}\in \overline{K_{r}}$.
Furthermore, by the triangle inequality, we have
\begin{align*}
\|\bm{X}-\bar{\bm{X}}\|_F
&=\|\bm{U}\bm{\Sigma}\bm{V}^{\top}-\bar{\bm{U}}\bar{\bm{\Sigma}}\bar{\bm{V}}^{\top}\|_F\\
&=\|(\bm{U}-\bar{\bm{U}})\bm{\Sigma}\bm{V}^{\top}+\bar{\bm{U}}\bm{\Sigma}\bm{V}^{\top}
+\bar{\bm{U}}\bar{\bm{\Sigma}}(\bm{V}^{\top}-\bar{\bm{V}}^{\top})-\bar{\bm{U}}\bar{\bm{\Sigma}}\bm{V}^{\top}\|_F\\
&\leq \|(\bm{U}-\bar{\bm{U}})\bm{\Sigma}\bm{V}^{\top}\|_F
+\|\bar{\bm{U}}(\bm{\Sigma}-\bar{\bm{\Sigma}})\bm{V}^{\top}\|_F
+\|\bar{\bm{U}}\bar{\bm{\Sigma}}(\bm{V}^{\top}-\bar{\bm{V}}^{\top})\|_F.
\end{align*}
We bound the three terms separately.
For the first term, using the orthogonality of $\bm{V}$, we obtain
$$\|(\bm{U}-\bar{\bm{U}})\bm{\Sigma}\bm{V}^{\top}\|_F^2=\|(\bm{U}-\bar{\bm{U}})\bm{\Sigma}\|_F^2
=\sum_{i\in[r]}\Sigma_{ii}^2\|\bm{U}_i-\bar{\bm{U}}_i\|_2^2\leq \|\bm{\Sigma}\|_F^2\|\bm{U}-\bar{\bm{U}}\|_{1,2}^2\leq \left(\frac{\epsilon}{3}\right)^2.$$
Similarly, for the last term, we have $\|\bar{\bm{U}}\bar{\bm{\Sigma}}(\bm{V}^{\top}-\bar{\bm{V}}^{\top})\|_F\leq \frac{\epsilon}{3}$.
For the middle term, we observe that
\[
\|\bar{\bm{U}}(\bm{\Sigma}-\bar{\bm{\Sigma}})\bm{V}^{\top}\|_F=\|\bm{\Sigma}-\bar{\bm{\Sigma}}\|_F\leq \frac{\epsilon}{3},
\]
exploiting the orthogonality of $\bar{\bm{U}}$ and $\bm{V}$.

Consequently, for any $\bm{X}\in K_r$ there exists $\bar{\bm{X}}\in \overline{K}_r$ such that $\|\bm{X}-\bar{\bm{X}}\|_F\le \epsilon$. Thus, $\overline{K}_r$ is an $\epsilon$-net for $K_r$, which completes the proof.

\end{proof}

\begin{lemma}\label{cover2}
For the set $N_r^{*}$ , there exists an $\epsilon$-net $\overline{N_r^{*}} \subset N_r^{*}$
with respect to the Frobenius norm, whose cardinality satisfies
$$\#\overline{N_r^{*}}\leq \exp(24(m+n+1)r/\epsilon^3).$$
\end{lemma}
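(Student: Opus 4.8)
The plan is to avoid covering the (possibly full-rank) set $N_r^{*}$ directly and instead reduce the problem to covering the bounded-rank set $K_s$ for a suitably enlarged rank $s$, after which Lemma~\ref{add:cover} applies verbatim. The mechanism that makes this reduction work is the $\ell_1$-type control that the constraint $\|\bm{H}\|_{*}\le\sqrt{r}$ exerts on the singular values: even though a matrix in $N_r^{*}$ may have full rank, its singular values must decay, so it is well approximated in Frobenius norm by a low-rank truncation.

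Concretely, first I would fix $\bm{H}\in N_r^{*}$ with singular value decomposition $\bm{H}=\sum_i\sigma_i\bm{u}_i\bm{v}_i^{\top}$, $\sigma_1\ge\sigma_2\ge\cdots\ge0$, and let $\bm{H}_s=\sum_{i\le s}\sigma_i\bm{u}_i\bm{v}_i^{\top}$ be its best rank-$s$ approximation. The key estimate is the tail bound
\[
\|\bm{H}-\bm{H}_s\|_F^2=\sum_{i>s}\sigma_i^2\le\sigma_{s+1}\sum_{i>s}\sigma_i\le\frac{\|\bm{H}\|_{*}}{s+1}\cdot\|\bm{H}\|_{*}\le\frac{r}{s+1},
\]
where I used $(s+1)\sigma_{s+1}\le\sum_{i\le s+1}\sigma_i\le\|\bm{H}\|_{*}\le\sqrt{r}$. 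Hence $\|\bm{H}-\bm{H}_s\|_F\le\sqrt{r/(s+1)}$, and choosing $s$ to be the least integer with $s+1\ge 4r/\epsilon^2$ (so that $s\le 4r/\epsilon^2$) forces $\|\bm{H}-\bm{H}_s\|_F\le\epsilon/2$. Since truncation does not increase the Frobenius norm, $\bm{H}_s\in K_s$.

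Next I would cover $K_s$. Applying Lemma~\ref{add:cover} with precision $\epsilon/2$ yields an $(\epsilon/2)$-net $\overline{K_s}$ of $K_s$ with $\#\overline{K_s}\le(1+12/\epsilon)^{(m+n+1)s}$. By the triangle inequality, for every $\bm{H}\in N_r^{*}$ there is $\bar{\bm{H}}\in\overline{K_s}$ with $\|\bm{H}-\bar{\bm{H}}\|_F\le\|\bm{H}-\bm{H}_s\|_F+\|\bm{H}_s-\bar{\bm{H}}\|_F\le\epsilon$, so $\overline{K_s}$ is an $\epsilon$-net for $N_r^{*}$. Taking logarithms and using $s\le 4r/\epsilon^2$ together with the elementary inequality $\log(1+12/\epsilon)\le 6/\epsilon$, valid for $\epsilon\in(0,1]$, gives
\[
\log\#\overline{K_s}\le(m+n+1)\,s\,\log\Big(1+\tfrac{12}{\epsilon}\Big)\le(m+n+1)\cdot\frac{4r}{\epsilon^2}\cdot\frac{6}{\epsilon}=\frac{24(m+n+1)r}{\epsilon^3},
\]
which is exactly the claimed bound. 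The one remaining subtlety is that $\overline{K_s}\subset K_s$ need not lie inside $N_r^{*}$ (indeed $\sqrt{s}>\sqrt{r}$). Since $N_r^{*}$ is convex, being the intersection of the Frobenius-norm ball and the nuclear-norm ball, I would replace each net point by its Frobenius-metric projection onto $N_r^{*}$; this projection is nonexpansive and fixes points of $N_r^{*}$, so it preserves the $\epsilon$-covering property without increasing the cardinality, producing a net contained in $N_r^{*}$.

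I expect the tail estimate together with the balancing of $s$ against $\epsilon$ to be the crux: it is precisely the interplay between $\|\bm{H}\|_F\le1$ and $\|\bm{H}\|_{*}\le\sqrt{r}$ that yields the decay $\sigma_{s+1}\le\sqrt{r}/(s+1)$, and it is the conversion of the usual $\log(1/\epsilon)$ factor into an $\epsilon^{-1}$ factor (via $\log(1+12/\epsilon)\le 6/\epsilon$) that accounts for the $\epsilon^{-3}$, rather than the more familiar $\epsilon^{-2}\log(1/\epsilon)$, scaling in the exponent. The containment issue is a minor technical point dispatched by the projection step.
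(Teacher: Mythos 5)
Your proposal is correct and follows essentially the same route as the paper: truncate $\bm{H}\in N_r^{*}$ to a rank-$s$ matrix using the decay of singular values forced by $\|\bm{H}\|_{*}\le\sqrt{r}$, cover the resulting set $K_s$ via Lemma~\ref{add:cover}, and balance $s\sim r/\epsilon^2$ to get the $\epsilon^{-3}$ exponent (the paper instead invokes the Stechkin-type bound $\|\bm{X}-\bm{X}_t\|_F\le\|\bm{X}\|_{*}/(2\sqrt{t})$ from Foucart--Rauhut, so it takes $t\le 2r/\epsilon^2$ and only needs $\log(1+12/\epsilon)\le 12/\epsilon$, whereas your elementary tail bound costs a factor $2$ in $s$ that you recover through $\log(1+12/\epsilon)\le 6/\epsilon$, valid for $\epsilon\le 1$). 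Your final projection step onto the convex set $N_r^{*}$ is a welcome refinement: the paper's net $\overline{K_t}$ is in fact only contained in $K_t$, not in $N_r^{*}$ as the lemma statement asserts, and your nonexpansiveness argument cleanly repairs that omission.
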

\begin{proof}

For any matrix $\bm{X}\in N_r^{*}$, let $(\sigma_1,\ldots,\sigma_n)$ denote its vector of singular values, and let $\bm{X}_t$ be a best rank-$t$ approximation to $\bm{X}$ in the Frobenius norm (i.e., $\bm{X}_t \in \arg\min_{\mathrm{rank}(\bm{Z})\le t}\|\bm{X}-\bm{Z}\|_F$), where $t$ will be specified below.
We have
\begin{align*}
\|\bm{X}-\bm{X}_t\|_F\,\,=\,\,\sqrt{\sum_{i\geq t+1}\sigma_i^2}\,\,\overset{(a)}\leq\,\, \frac{\sum_{i\geq 1}\sigma_i}{2\sqrt{t}}\,\,=\,\,
\frac{\|\bm{X}\|_{*}}{2\sqrt{t}}.
\end{align*}
Here, the inequality ($a$) follows from $s_t(\bm{x})_2\leq \frac{1}{2\sqrt{t}}\|\bm{x}\|_1$,
 where $s_t(\bm{x})_2 $  is the $\ell_2$-error of the best $t$-term approximation to $\bm{x}\in \mathbb{R}^n$ (see \cite[Theorem 2.5]{foucart2013invitation}).
Note that \( \bm{X}_t \in K_t \). To establish
the  covering number estimate of  $N_r^{*}$, we first consider an \( \epsilon/2 \)-net \( \overline{K_t} \) for  \( K_t \).
According to Lemma  \ref{add:cover}, we have  $\# \overline{K_t} \leq (1+6/(\epsilon/2))^{(m+n+1)t}$.
Consequently, there exists \( \bm{X}_t^{\epsilon} \in \overline{K_t} \) such that \( \|\bm{X}_t - \bm{X}_t^{\epsilon}\|_F \leq \frac{\epsilon}{2} \). Thus, we obtain
\begin{align*}
\|\bm{X}-\bm{X}_t^{\epsilon}\|_F\leq \|\bm{X}-\bm{X}_t\|_F+\|\bm{X}_t-\bm{X}_t^{\epsilon}\|_F
\leq \frac{\|\bm{X}\|_{*}}{2\sqrt{t}}+\frac{\epsilon}{2}
\leq\frac{\sqrt{r}}{2\sqrt{t}}+\frac{\epsilon}{2},
\end{align*}
where the last inequality is from the fact $\bm{X} \in N_r^{*}$, i.e., $\|\bm{X}\|_*\leq \sqrt{r}$.
By choosing $t=\lceil r/\epsilon^2\rceil$ (throughout the paper we assume $\epsilon\le \sqrt{r}$), we have $r/\epsilon^2 \le t \le 2r/\epsilon^2$, which ensures that $\|\bm{X}-\bm{X}_t^{\epsilon}\|_F \le \epsilon$.
Therefore, $\overline{K_t}$ serves as an  $\epsilon$-net of $N_r^{*}$ with
\[
\# \overline{K_t} \leq (1+6/(\epsilon/2))^{(m+n+1)t}\leq \exp(24(m+n+1)r/\epsilon^3).
\]
\end{proof}

\subsection{Gaussian-type Concentration Inequalities}

We recall the following Gaussian concentration inequality and two auxiliary lemmas, which will be used in the proof of Lemma \ref{spaceSRIP}.

\textbf{Gaussian Concentration Inequality  \cite{recht2010guaranteed}:} Let  $\mathcal{A}:{\mathbb R}^{n\times m}\to{\mathbb R}^p$ be  a Gaussian measurement ensemble. For any given \(\bm{X} \in \mathbb{R}^{ n\times m}\) and any fixed \(0 < t < 1\), there is
\begin{align}\label{e:ceninequality}
\mathbb{P}\Bigg[\Big|\frac{1}{p}\|\mathcal{A}(\bm{X})\|_2^2 - \|\bm{X}\|_F^2\Big| \geq t\|\bm{X}\|_F^2\Bigg] \leq 2\exp\Big(-\frac{p}{2}\Big(\frac{t^2}{2} - \frac{t^3}{3}\Big)\Big).
\end{align}
\begin{lemma}\cite[Lemma 4.2]{voroninski2016strong}\label{lem:xu1}
Assume that $y_1,\ldots,y_p$ are i.i.d $\mathcal{N}(0,1)$ and set
$$
\mu_p:=\frac{1}{\sqrt{p}}\mathbb{E}\left[\sqrt{\sum_{i=1}^{\lceil p/2\rceil}|y|_{(i)}^2}\right],
$$
where $|y|_{(1)}\leq |y|_{(2)}\leq \cdots \leq |y|_{(p)}$ (i.e., $\{|y|_{(i)}\}_{i=1}^p$ is a rearrangement of $\{|y_{i}|\}_{i=1}^p$).
Then for any $p\geq 1$, there is
$$\mu_{p}\geq\nu_0,$$
where $\nu_0:=\frac{1}{18}\sqrt{\frac{\pi}{2}}\approx0.0696$.
\end{lemma}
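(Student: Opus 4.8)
The plan is to bound $\mu_p$ from below by a deterministic ordered-statistics inequality followed by a single linearity-of-expectation step, deliberately avoiding any concentration or tail estimate so that the bound holds for every $p\ge 1$ (the delicate small-$p$ regime then comes for free). Write $S:=\sum_{i=1}^{\lceil p/2\rceil}|y|_{(i)}^2$. The first step converts the $\ell_2$-type quantity $\sqrt S$ into an $\ell_1$-type sum, which is far easier to lower bound: since the $\lceil p/2\rceil$ summands are nonnegative, Cauchy--Schwarz gives
\[
\sqrt S \;\ge\; \frac{1}{\sqrt{\lceil p/2\rceil}}\sum_{i=1}^{\lceil p/2\rceil}|y|_{(i)}.
\]

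The crux is a deterministic truncation inequality valid for every threshold $\beta\ge 0$:
\[
\sum_{i=1}^{\lceil p/2\rceil}|y|_{(i)}\;\ge\;\sum_{i=1}^{p}\min(|y_i|,\beta)\;-\;\Big\lfloor \tfrac{p}{2}\Big\rfloor\beta .
\]
I would establish this by rewriting $\sum_{i=1}^p\min(|y_i|,\beta)=\sum_{i=1}^p\min(|y|_{(i)},\beta)$, bounding each of the $\lfloor p/2\rfloor$ largest truncated order statistics by $\beta$, and using $\min(|y|_{(i)},\beta)\le|y|_{(i)}$ on the $\lceil p/2\rceil$ smallest. This is the one genuinely combinatorial point and hence the main obstacle: it is precisely what turns the operation ``retain the smallest half'' into a globally controllable truncated sum, and it is where the $\lfloor p/2\rfloor\beta$ loss is paid.

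Taking expectations, using the i.i.d.\ assumption and $\lfloor p/2\rfloor\le p/2$, yields
\[
\mathbb{E}\!\left[\sum_{i=1}^{\lceil p/2\rceil}|y|_{(i)}\right]\;\ge\; p\Big(\mathbb{E}[\min(|Y|,\beta)]-\tfrac{\beta}{2}\Big)=:p\,g(\beta),\qquad Y\sim\mathcal{N}(0,1).
\]
A one-line computation gives $\mathbb{E}[\min(|Y|,\beta)]=\sqrt{2/\pi}\,(1-e^{-\beta^2/2})+\beta\,\mathbb{P}(|Y|>\beta)$, so choosing $\beta$ to be the median of $|Y|$ (where $\mathbb{P}(|Y|>\beta)=\tfrac12$) cancels the $-\beta/2$ term and leaves the explicit positive constant $g(\beta)=\sqrt{2/\pi}\,(1-e^{-\beta^2/2})$.

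Finally, combining the previous displays with the elementary bound $\sqrt{\lceil p/2\rceil}\le\sqrt{p}$ gives, for every $p\ge 1$,
\[
\mu_p=\frac{1}{\sqrt p}\,\mathbb{E}\sqrt S\;\ge\;\frac{1}{\sqrt p\,\sqrt{\lceil p/2\rceil}}\;p\,g(\beta)\;\ge\;g(\beta),
\]
a $p$-independent constant. It then remains only to verify numerically that $g(\beta)\ge\nu_0=\frac{1}{18}\sqrt{\pi/2}$; in fact the median choice gives $g(\beta)\approx 0.162$, comfortably above $\nu_0\approx 0.0696$, so any convenient admissible $\beta$ (e.g.\ $\beta=\tfrac12$) already closes the argument with room to spare.
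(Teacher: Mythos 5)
Your proof is correct, and it takes a genuinely different route from the one the paper relies on: the paper imports this lemma verbatim from Voroninski--Xu \cite{voroninski2016strong} without proof, and the original argument there proceeds probabilistically, lower-bounding the small order statistics of $|y_i|$ via the density bound $f_{|Y|}\leq\sqrt{2/\pi}$ (equivalently, the quantile bound $F^{-1}(u)\geq u\sqrt{\pi/2}$, which is where the $\sqrt{\pi/2}$ in $\nu_0$ originates) together with a counting/concentration estimate on how many samples fall below a fixed quantile --- hence the loose constant $\tfrac{1}{18}$. Your argument replaces all of that with a deterministic rearrangement plus one linearity-of-expectation step, and every step checks out: Cauchy--Schwarz gives $\sqrt{S}\geq \lceil p/2\rceil^{-1/2}\sum_{i\leq\lceil p/2\rceil}|y|_{(i)}$ surely; the truncation inequality is valid because $p=\lceil p/2\rceil+\lfloor p/2\rfloor$, the largest $\lfloor p/2\rfloor$ truncated order statistics are each at most $\beta$, and $\min(|y|_{(i)},\beta)\leq|y|_{(i)}$ on the smallest $\lceil p/2\rceil$; the Gaussian identity $\mathbb{E}[\min(|Y|,\beta)]=\sqrt{2/\pi}\,(1-e^{-\beta^2/2})+\beta\,\mathbb{P}(|Y|>\beta)$ is correct; and with $\beta$ the median of $|Y|$ ($\beta\approx 0.6745$) the $-\beta/2$ loss cancels exactly, leaving $g(\beta)=\sqrt{2/\pi}\,(1-e^{-\beta^2/2})\approx 0.1624$, while $\sqrt{p}/\sqrt{\lceil p/2\rceil}\geq 1$ for all $p\geq 1$ closes the chain with no case analysis.

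Comparing what each approach buys: yours is more elementary (no order-statistics distribution theory, no Chernoff/binomial counting), holds uniformly for every $p\geq 1$, and actually yields a strictly better constant than claimed ($\mu_p\geq 0.162>\nu_0\approx 0.0696$; indeed $\lceil p/2\rceil\leq (p+1)/2$ even gives $\mu_p\geq\sqrt{2p/(p+1)}\,g(\beta)$). The original threshold-and-count machinery, on the other hand, is exactly what also powers the high-probability deviation statement of the companion result (Lemma \ref{lem:xu2}), whereas your argument controls only the expectation --- which is, however, precisely all that this lemma asserts, so nothing is lost here.
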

\begin{lemma}\cite[Lemma 4.3]{voroninski2016strong}\label{lem:xu2}
Under the conditions of Lemma \ref{lem:xu1}, there is
\begin{align*}
\mathbb{P}\left[(\mu_p-t)^2\leq \frac{1}{p}\sum_{i=1}^{\lceil p/2\rceil}|y|_{(i)}^2\leq(\mu_p+t)^2\right]\geq1-2\exp(-pt^2/2),
\end{align*}
where $0\leq t\leq \mu_p$.
\end{lemma}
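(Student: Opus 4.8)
The plan is to recognize the random quantity $\frac{1}{p}\sum_{i=1}^{\lceil p/2\rceil}|y|_{(i)}^2$ as the square of a Lipschitz function of the standard Gaussian vector $\bm{y}=(y_1,\ldots,y_p)^\top\sim\mathcal{N}(\bm{0},\bm{I}_p)$, and then invoke the Gaussian concentration inequality for Lipschitz functions. Writing $k:=\lceil p/2\rceil$, I would define
$$
f(\bm{y}):=\frac{1}{\sqrt{p}}\sqrt{\sum_{i=1}^{k}|y|_{(i)}^2},
$$
so that $f(\bm{y})^2=\frac{1}{p}\sum_{i=1}^{k}|y|_{(i)}^2$ and, by the definition in Lemma \ref{lem:xu1}, $\mathbb{E}[f(\bm{y})]=\mu_p$. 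The whole argument then reduces to pinning down the Lipschitz constant of $f$ and feeding it into a standard concentration estimate.

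First I would establish that $f$ is $\tfrac{1}{\sqrt{p}}$-Lipschitz with respect to the Euclidean norm. The key observation is the variational representation
$$
\sum_{i=1}^{k}|y|_{(i)}^2=\min_{S\subseteq[p],\,|S|=k}\sum_{j\in S}y_j^2,
$$
since summing the $k$ smallest squared coordinates is exactly minimizing $\sum_{j\in S}y_j^2$ over index sets of size $k$. Hence $g(\bm{y}):=\sqrt{\sum_{i=1}^{k}|y|_{(i)}^2}=\min_{|S|=k}\|\bm{y}_S\|_2$, where $\bm{y}_S$ is the restriction of $\bm{y}$ to $S$. For each fixed $S$ the map $\bm{y}\mapsto\|\bm{y}_S\|_2$ is $1$-Lipschitz, being the Euclidean norm of a coordinate projection; and a pointwise minimum of $1$-Lipschitz functions is again $1$-Lipschitz (from $g(\bm{y})\le\|\bm{y}_S\|_2\le\|\bm{y}'_S\|_2+\|\bm{y}-\bm{y}'\|_2$ for every $S$, take the minimum over $S$ on the right, then symmetrize). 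Therefore $g$ is $1$-Lipschitz and $f=g/\sqrt{p}$ is $\tfrac{1}{\sqrt{p}}$-Lipschitz.

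With the Lipschitz bound in hand, I would apply the Gaussian concentration inequality: for an $L$-Lipschitz $F$ and $\bm{y}\sim\mathcal{N}(\bm{0},\bm{I}_p)$, one has $\mathbb{P}[|F(\bm{y})-\mathbb{E}F(\bm{y})|\ge s]\le 2\exp(-s^2/(2L^2))$. Taking $F=f$, $L=1/\sqrt{p}$, $s=t$, and using $\mathbb{E}[f(\bm{y})]=\mu_p$ yields
$$
\mathbb{P}\big[\,|f(\bm{y})-\mu_p|\ge t\,\big]\le 2\exp\!\big(-pt^2/2\big).
$$
On the complementary event $|f(\bm{y})-\mu_p|<t$ we have $\mu_p-t\le f(\bm{y})\le\mu_p+t$; since $f(\bm{y})\ge0$ and the hypothesis $t\le\mu_p$ makes $\mu_p-t\ge0$, squaring preserves these inequalities and gives $(\mu_p-t)^2\le f(\bm{y})^2\le(\mu_p+t)^2$, which is exactly the event in the statement.

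As for the main obstacle: the only genuinely non-routine step is verifying the Lipschitz constant of the order-statistic functional $f$. Once the representation as a minimum of coordinate-projection norms is spotted, the $1$-Lipschitz property is immediate, and the remainder is a textbook application of Gaussian concentration, together with the small amount of care needed to square the two-sided deviation bound — precisely the point at which the condition $t\le\mu_p$ is used. I do not expect the probabilistic part to present any difficulty.
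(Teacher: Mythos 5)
Your proposal is correct and takes essentially the same route as the proof in the cited source \cite{voroninski2016strong} (the paper itself states this lemma by citation only): represent the small-order-statistic functional as a minimum of coordinate-projection norms to get the $1$-Lipschitz property, apply Gaussian concentration for Lipschitz functions, and square the two-sided bound using $0\le t\le\mu_p$. One small point worth flagging: the concentration inequality as restated in the paper (Lemma \ref{lempro}) carries an unspecified constant $c$, so to obtain the exact factor $2\exp(-pt^2/2)$ you need, as you correctly invoked, the classical sharp-constant form (Borell--Tsirelson--Ibragimov--Sudakov) with $c=1/2$.
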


\subsection{Strong Restricted Isometry Property (SRIP) for Matrices}

We begin by revisiting the definition of the restricted isometry property (RIP) for linear maps. Let ${\mathcal A}:{\mathbb R}^{n\times m}\rightarrow {\mathbb R}^p$ denote a linear map. We say that ${\mathcal A}$ satisfies the RIP of order $r$, where  $1 \leq r \leq \min\{m, n\}$, with isometry constant $\delta_r\in[0,1)$ if the following condition holds for all matrices $\bm{X}\in \mathbb{R}^{n\times m}$ of rank at most $r$:
\begin{align}\label{e:matrixRIP}
(1-\delta_{r})\|\bm{X}\|_F^2\leq \frac{1}{p}\|\mathcal{A}(\bm{X})\|_2^2\leq (1+\delta_{r})\|\bm{X}\|_F^2.
\end{align}
It is well-established in the extant literature that  the Gaussian measurement ensemble $\mathcal{A}$  satisfies the RIP of order $r$ with high probability, provided that $p\gtrsim(m+n+1)r$. For a detailed discussion on this topic, the reader is referred to \cite{candes2011tight} and \cite{recht2010guaranteed}.

We now define the strong restricted isometry property (SRIP) for matrices. This definition extends the SRIP concept originally introduced for vectors by Voroninski and Xu \cite{voroninski2016strong}.
\begin{definition}\label{def:SRIPM}
 Suppose  $\mathcal{A}:\mathbb{R}^{n\times m}\rightarrow \mathbb{R}^p$ is a  linear map, and
   $1 \leq r \leq \min\{m,n\}$ is an integer. $\mathcal{A} $ satisfies SRIP
   of  order $r$  and levels $\theta_{-}, \theta_{+} \in (0,2)$ if
  \begin{equation}\label{eq:sripma}
      \theta_{-}\|\bm{X}\|_F^2 \leq \min\limits_{I\subseteq [p],\#I\geq p/2}\frac{1}{p} \|\mathcal{A}_{I} (\bm{X})\|_2^2 \leq \max\limits_{I\subseteq [p],\#I\geq p/2}\frac{1}{p} \|\mathcal{A}_{I} (\bm{X})\|_2^2\leq
      \theta_{+}\|\bm{X}\|_F^2
  \end{equation}
holds  for all  matrices $\bm{X}$ with rank at most $r$.
Here, $\mathcal{A}_{I}:{\mathbb R}^{n\times m}\rightarrow {\mathbb R}^{\# I}$ is defined as
\[
{\mathcal A}_{I}(\bm{X})=(\langle \bm{A}_{i_1}, \bm{X}_0\rangle, \langle \bm{A}_{i_2}, \bm{X}_0\rangle,\ldots,\langle \bm{A}_{i_{\#I}}, \bm{X}_0\rangle)^{\top},
\]
where $ I:=\{i_1,\ldots,i_{\#I}\}\subset [p]$.
\end{definition}
\begin{remark}
For an \( n \times m \) Gaussian matrix \( \bm{A} \) with i.i.d \( \mathcal{N}(0,1) \) entries, it is well-established in \cite{baraniuk2008simple} and \cite{voroninski2016strong} that $\bm{A}$ satisfies both  RIP and SRIP of order \( k \leq m \) with high probability when \( n\geq Ck \log(em/k) \), where  \( k \in \mathbb{Z}_+ \), and \( C \) is an absolute constant.
\end{remark}
Next, we  seek to establish that the SRIP,  with constants \( \theta_{-}, \theta_{+} \in (0,2) \) also holds for the Gaussian measurement ensemble \( \mathcal{A} \) with high probability.
\begin{lemma} \label{spaceSRIP}
Let $\mathcal{A}:\mathbb{R}^{n\times m}\rightarrow \mathbb{R}^p$ be a Gaussian measurement ensemble and
$r\in \mathbb{Z}_{+}$ with $1\leq r\leq \min\{m,n\}$. Assume that
$p\geq C(m+n)r$, where $C>0$ is an absolute constant. Then there exist constants $0<\theta_{-}\leq \theta_+<2$  such that $\mathcal{A}$ satisfies SRIP of order $r$ with  probability
at least $1-\exp(-cp)$, where $c>0$  is also an absolute constant.
\end{lemma}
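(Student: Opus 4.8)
The plan is to establish the two-sided bound \eqref{eq:sripma} only on the sphere $U_r$ defined in \eqref{setur}, since both outer sides are $2$-homogeneous in $\bm{X}$ and the claim is vacuous for $\bm{X}=\bm{0}$. The decisive structural observation is that, for a \emph{fixed} $\bm{X}\in U_r$, the scalars $g_i:=\langle \bm{A}_i,\bm{X}\rangle$ are i.i.d.\ $\mathcal{N}(0,1)$ (because $\|\bm{X}\|_F=1$), and the two inner optimizations over $I$ collapse to explicit order-statistic expressions: since every summand $g_i^2$ is nonnegative, $\max_{\#I\ge p/2}\frac1p\|\mathcal{A}_I(\bm{X})\|_2^2$ is attained at $I=[p]$ and equals $\frac1p\|\mathcal{A}(\bm{X})\|_2^2$, whereas $\min_{\#I\ge p/2}\frac1p\|\mathcal{A}_I(\bm{X})\|_2^2=\frac1p\sum_{i=1}^{\lceil p/2\rceil}|g|_{(i)}^2$ is attained by retaining the $\lceil p/2\rceil$ smallest values $|g|_{(i)}^2$. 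This is exactly the quantity analyzed in Lemmas \ref{lem:xu1} and \ref{lem:xu2}, which is the bridge that makes the lower bound tractable.

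The upper bound is then immediate: because $\max_{\#I\ge p/2}$ reduces to the full sum, I would simply invoke the Gaussian RIP of order $2r$ (valid with probability $\ge 1-2\exp(-c_1p)$ once $p\gtrsim (m+n)r$, as recalled after \eqref{e:matrixRIP}) with, say, $\delta_{2r}=\tfrac12$, giving $\frac1p\|\mathcal{A}(\bm{X})\|_2^2\le(1+\delta_{2r})\|\bm{X}\|_F^2=\tfrac32\|\bm{X}\|_F^2$, so one may take $\theta_+=\tfrac32<2$. For the pointwise lower bound, I would fix $\bm{X}\in U_r$ and apply Lemma \ref{lem:xu2} with $t=\nu_0/2$ (admissible since $\mu_p\ge \nu_0$ by Lemma \ref{lem:xu1}), obtaining $\frac1p\sum_{i=1}^{\lceil p/2\rceil}|g|_{(i)}^2\ge(\mu_p-t)^2\ge(\nu_0/2)^2$ with probability at least $1-2\exp(-p\nu_0^2/8)$.

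To upgrade the pointwise lower bound into the uniform bound demanded by \eqref{eq:sripma}, I would run a covering argument. Writing $f(\bm{X}):=\min_{\#I\ge p/2}\frac1p\|\mathcal{A}_I(\bm{X})\|_2^2$, the key regularity fact is that $\sqrt{f}$ is Lipschitz on low-rank matrices: using $|\min_I a_I-\min_I b_I|\le\max_I|a_I-b_I|$ together with the reverse triangle inequality and $\|\mathcal{A}_I(\bm{Z})\|_2\le\|\mathcal{A}(\bm{Z})\|_2$, one gets $|\sqrt{f(\bm{X})}-\sqrt{f(\bm{Y})}|\le \frac1{\sqrt p}\|\mathcal{A}(\bm{X}-\bm{Y})\|_2$, and since $\bm{X}-\bm{Y}$ has rank at most $2r$ the same order-$2r$ RIP bounds this by $\sqrt{1+\delta_{2r}}\,\|\bm{X}-\bm{Y}\|_F$. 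I would then take an $\epsilon$-net $\overline{U_r}$ of $U_r$ (Lemma \ref{cover}, $\#\overline{U_r}\le(9/\epsilon)^{(m+n+1)r}$), apply the pointwise lower bound at every net point, and union-bound: the failure probability is at most $(9/\epsilon)^{(m+n+1)r}\cdot 2\exp(-p\nu_0^2/8)$, which is $\le 2\exp(-c_2p)$ provided $p\ge C(m+n)r$ with $C$ large enough relative to $\log(9/\epsilon)/\nu_0^2$. Choosing $\epsilon=\nu_0/(4\sqrt{1+\delta_{2r}})$ and transferring from the nearest net point via the Lipschitz bound yields $\sqrt{f(\bm{X})}\ge \nu_0/2-\sqrt{1+\delta_{2r}}\,\epsilon=\nu_0/4$ for all $\bm{X}\in U_r$; thus $\theta_-=(\nu_0/4)^2>0$. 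Intersecting the RIP event with the net event and renaming constants gives the claimed probability $1-\exp(-cp)$.

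I expect the main obstacle to be the uniform (rather than pointwise) control of the \emph{minimum-over-subsets} functional $f$: one must verify that $\sqrt{f}$ is genuinely Lipschitz with a constant governed by the order-$2r$ RIP (so that the rank doubling under differencing is absorbed), and then balance the net cardinality $(9/\epsilon)^{(m+n+1)r}$ against the pointwise failure $2\exp(-p\nu_0^2/8)$ coming from Lemma \ref{lem:xu2}. The reduction of $\min_I$/$\max_I$ to order statistics is precisely what keeps this balance feasible, since it makes the pointwise deviation probability decay like $\exp(-cp)$ uniformly in the dimension.
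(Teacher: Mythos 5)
Your proposal is correct and follows essentially the same route as the paper's proof: pointwise two-sided bounds for fixed $\bm{X}\in U_r$ (the minimum over subsets reduced to the $\lceil p/2\rceil$ smallest order statistics and controlled via Lemmas \ref{lem:xu1} and \ref{lem:xu2} with $t=\nu_0/2$, the maximum reduced to the full sum and controlled by RIP/Gaussian concentration), followed by a union bound over the $\epsilon$-net of Lemma \ref{cover} and a transfer to all of $U_r$ using the order-$2r$ RIP on differences, with $\epsilon$ an absolute constant so that $p\gtrsim (m+n)r$ absorbs the net cardinality. Your explicit verification that $\sqrt{f}$ is Lipschitz is a slightly cleaner packaging of the paper's one-sided transfer inequality, but the decomposition, key lemmas, and constant-balancing are the same.
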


We now prove Lemma \ref{spaceSRIP} by leveraging the results established in Lemmas \ref{cover}, \ref{lem:xu1}, and \ref{lem:xu2}. The proof proceeds in two steps: first, we derive probabilistic bounds ensuring that (\ref{eq:sripma}) holds for a fixed matrix; second, we extend the result uniformly to all matrices of rank at most $r$ via an $\epsilon$-net argument.

\begin{proof}
Step $1$: Probabilistic bounds for a fixed matrix.

Let $\bm{X} \in \mathbb{R}^{n\times m}$
  be fixed. We aim to show that there exist constants $c_{-},c_{+} \in (0,2)$ with $c_{-}\leq c_{+}$ such that
\begin{align}\label{e:SRIPFIXX}
      c_{-}\|\bm{X}\|_F^2 \leq \min\limits_{I\subseteq [p],\#I\geq p/2}\frac{1}{p}\|\mathcal{A}_{I} (\bm{X})\|_2^2
      \leq \max\limits_{I\subseteq [p],\#I\geq p/2}\frac{1}{p}\|\mathcal{A}_{I} (\bm{X})\|_2^2\leq
      c_{+}\|\bm{X}\|_F^2
\end{align}
holds with probability at least $1-4\exp(-cp)$, where $c$ is an absolute constant.
Without loss of generality, assume \( \|\bm{X}\|_F = 1 \). Define \( \bm{y} := \mathcal{A}(\bm{X}) \) and \( \bm{y}_I := \mathcal{A}_I(\bm{X}) \) for \( I \subseteq [p] \). Since each \( \bm{A}_i \in \mathbb{R}^{n \times m} \) for \( i \in [p] \) is a Gaussian matrix with entries \( a_{jk} \sim \mathcal{N}(0, 1) \) and $\|\bm{X}\|_F=1$, we have  \( y_i = \langle \bm{A}_i, \bm{X} \rangle \sim \mathcal{N}(0, 1) \).

Lower bound for \eqref{e:SRIPFIXX}: For independent standard normal variables $y_1, \ldots, y_p$, using  Lemma \ref{lem:xu1}, we have
$$\mu_p=\frac{1}{\sqrt{p}}\mathbb{E}\left[\sqrt{\sum_{i=1}^{\lceil p/2\rceil}|y|_{(i)}^2}\right]\geq\nu_0$$
 for $\nu_0=\frac{1}{18}\sqrt{\frac{\pi}{2}}$, where $\{|y|_{(i)}\}_{i=1}^p$ satisfying $|y|_{(1)}\leq |y|_{(2)}\leq \ldots \leq |y|_{(p)}$ is a rearrangement of $\{|y_{i}|\}_{i=1}^p$.
Noting that $\min\limits_{I\subseteq [p],\#I\geq p/2}\|{\bm{y}}_{I}\|_2^2\geq \sum_{i=1}^{\lceil p/2\rceil}|y|_{(i)}^2$,
we have
\begin{align}\label{e:lowerboundp}
\mathbb{P}\Big[\min\limits_{I\subseteq [p],\#I\geq p/2}\frac{1}{p}\|{\bm{y}}_{I}\|_2^2\geq c_-\Big]
\geq \mathbb{P}\Big[\frac{1}{p}\sum_{i=1}^{\lceil p/2\rceil}|y|_{(i)}^2\geq c_-\Big]\overset{(a)}\geq
\mathbb{P}\Big[\frac{1}{p}\sum_{i=1}^{\lceil p/2\rceil}|y|_{(i)}^2\geq (\mu_p-\frac{\nu_0}{2})^2\Big]\overset{(b)}\geq 1-2\exp(-\nu_0^2p/8)
\end{align}
for $c_{-}=\frac{\nu_0^2}{4}$, where the inequalities $(a)$ and $(b)$  follow from
$\mu_p\geq\nu_0$ and Lemma \ref{lem:xu2} with $t=\frac{\nu_0}{2}$, respectively.

Upper bound for \eqref{e:SRIPFIXX}:  Noting that
$$\max\limits_{I\subseteq [p],\#I\geq p/2}\|\mathcal{A}_I(\bm{X})\|_2^2 \leq \|\mathcal{A}(\bm{X})\|_2^2,$$
we apply the inequality \eqref{e:ceninequality} to obtain
\begin{align}\label{e:upperboundp}
\mathbb{P}\Bigg[\frac{1}{p}\|\mathcal{A}(\bm{X})\|_2^2 \leq c_+\Bigg]
\geq 1 - 2\exp\Bigg(-p\left(\frac{t^2}{4} - \frac{t^3}{6}\right)\Bigg),
\end{align}
where \(c_+ = 1 + t\) with \(t \in (0,1)\).

Consequently,  combining \eqref{e:lowerboundp} and \eqref{e:upperboundp},
 we conclude that for any fixed matrix \(\bm{X} \in \mathbb{R}^{n \times m}\) with \(\|\bm{X}\|_F = 1\),  the inequality
\begin{align}\label{e:boundfixed}
c_{-} \leq \min\limits_{I\subseteq [p],\#I\geq p/2}\frac{1}{p}\|\mathcal{A}_{I} (\bm{X})\|_2^2 \leq \max\limits_{I\subseteq [p],\#I\geq p/2}\frac{1}{p}\|\mathcal{A}_{I} (\bm{X})\|_2^2 \leq c_{+}
\end{align}
holds with probability at least \(1 - 4\exp(-cp)\), where  \(c > 0\) is an absolute constant.
For any fixed \(\bm{X} \in \mathbb{R}^{n \times m}\), substituting $\frac{\bm{X}}{\|\bm{X}\|_F}$ into \eqref{e:boundfixed} yields the desired result \eqref{e:SRIPFIXX}.

Step 2: Extension to all rank-$r$ matrices.

For the upper bound of SRIP, since \(\frac{1}{\sqrt{p}}\mathcal{A}\) satisfies the RIP with probability at least \(1 - \exp(-cp)\) provided \(p \geq C(m+n)r\), there exists \(\theta_{+} \in [1,2)\) such that
$$\max\limits_{I\subseteq [p],\#I\geq p/2}\frac{1}{p}\|\mathcal{A}_I(\bm{X})\|_2^2 \leq \frac{1}{p}\|\mathcal{A}(\bm{X})\|_2^2 \leq \theta_{+}\|\bm{X}\|_F^2$$
holds for all matrices \(\bm{X} \in \mathbb{R}^{n\times m}\) of rank at most \(r\) with probability \(1 - \exp(-cp)\) when \(p \geq C(m+n)r\).

We now establish the lower SRIP bound:
there exists \(\theta_{-} \in (0, 2)\) such that
$$\min\limits_{I\subseteq [p],\#I\geq p/2}\frac{1}{p}\|\mathcal{A}_I(\bm{X})\|_2^2 \geq  \theta_{-}\|\bm{X}\|_F^2$$
holds for all matrices \(\bm{X} \in \mathbb{R}^{n\times m}\) of rank at most \(r\) with probability \(1 - \exp(-cp)\) when \(p \geq C(m+n)r\).
 Equivalently, by homogeneity, it suffices to prove that there exists
\(\theta_{-} \in (0,2)\) such that
$$\min\limits_{I\subseteq [p],\#I\geq p/2}\frac{1}{p}\|\mathcal{A}_I(\bm{X})\|_2^2 \geq  \theta_{-}$$
holds for all matrices \(\bm{X} \in U_{r}:=\{\bm{H}\in \mathbb{R}^{n\times m}:\|\bm{H}\|_F=1, \mathrm{rank}(\bm{H})\leq r\}\),  with probability \(1 - \exp(-cp)\) when \(p \geq C(m+n)r\).
Let \(\overline{U_r} \subset U_r\) be an \(\epsilon\)-net of $U_r$ guaranteed by Lemma \ref{cover}, with cardinality \(\#\overline{U_r}= \left(\frac{9}{\epsilon}\right)^{(m+n+1)r}\).
Applying the union bound and \eqref{e:SRIPFIXX}, there exists \(c_- \in (0,2)\) such that
$$c_{-} \leq \min\limits_{I\subseteq [p],\#I\geq p/2}\frac{1}{p}\|\mathcal{A}_{I} (\bm{Y})\|_2^2$$
holds for all $\bm{Y}\in \overline{U_r}$ with probability at least
$1-4(9/\epsilon)^{(m+n+1)r}\exp(-cp)$.
For any \(\bm{X} \in U_r\), there exists some \(\bm{Y} \in \overline{U_r}\) such that \(\|\bm{X} - \bm{Y}\|_F \leq \epsilon\).
 Note that
\begin{align*}
\min\limits_{I\subseteq [p],\#I\geq p/2}\frac{1}{p}\|\mathcal{A}_{I}(\bm{X})\|_2 &
= \min\limits_{I\subseteq [p],\#I\geq p/2}\frac{1}{p}\|\mathcal{A}_{I}(\bm{X} - \bm{Y}) + \mathcal{A}_{I}(\bm{Y}) \|_2 \\
&\geq \min\limits_{I\subseteq [p],\#I\geq p/2} \frac{1}{p}\|\mathcal{A}_{I}(\bm{Y})\|_2 - \frac{1}{p}\|\mathcal{A}_{I}(\bm{X} - \bm{Y})\|_2 \\
&\geq \sqrt{c_-} - \sqrt{\theta_+}\epsilon
\end{align*}
for \(1 \leq \theta_{+} < 2\). Choosing $\epsilon$ sufficiently small ensures
 \(\sqrt{c_-} - \sqrt{\theta_+}\epsilon > 0\). Thus, there exists \(\theta_- = \sqrt{c_-} - \sqrt{\theta_+}\epsilon \in (0,2)\) such that
$$\min\limits_{I\subseteq [m],\#I\geq p/2}\frac{1}{p}\|\mathcal{A}_{I} (\bm{X})\|_2\geq \theta_-$$
holds for all $\bm{X}\in U_r$ with probability at least
 $$1-4(9/\epsilon)^{(m+n+1)r}\exp(-cp)=1-\exp(-cp+\log 4+(m+n+1)r\log (9/\epsilon)).$$
The desired SRIP property holds for \(\bm{X} \in U_r\) with probability at least \(1 - \exp(-cp)\) provided
$$
p \geq \frac{1}{c}\left(\log 4 + (m+n+1)r \log\left(\frac{9}{\epsilon}\right)\right).
$$

Therefore, \(\mathcal{A}\) satisfies the SRIP property with probability \(1 - \exp(-cp)\) for some absolute constant \(c > 0\), provided \(p \geq C(m+n)r\).
\end{proof}

Lemma \ref{spaceSRIP} assumes that $\mathrm{rank}(\bm{X})\le r$. The same conclusion extends to the broader class $N_r^{*}$ defined in (\ref{setnr}). In particular, by combining the proof strategy of Lemma \ref{spaceSRIP} with the covering-number bound for $N_r^{*}$ established in Lemma \ref{cover2}, we obtain the following result.

\begin{lemma}\label{rmk1}
Let $\mathcal{A}:{\mathbb R}^{n\times m}\to{\mathbb R}^p$ be a Gaussian measurement ensemble and $r \in \mathbb{Z}_{+}$ with $1 \leq r \leq \min\{m,n\}$. For all matrices $\bm{X} \in N_r^{*}:=\{\bm{H}\in \mathbb{R}^{n\times m}:\|\bm{H}\|_F\leq 1, \|\bm{H}\|_{*}\leq \sqrt{r}\}$, there exist constants $0 < \theta_{-} \leq \theta_{+} < 2$ such that the following inequality holds with probability at least $1 - \exp(-cp)$:
$$\theta_{-}\|\bm{X}\|_F^2 \leq \min\limits_{I\subseteq [p],\#I\geq p/2}\frac{1}{p}\| \mathcal{A}_{I}(\bm{X})\|_2^2 \leq \max\limits_{I\subseteq [p],\#I\geq p/2}\frac{1}{p}\|\mathcal{A}_{I}(\bm{X})\|_2^2 \leq \theta_{+}\|\bm{X}\|_F^2,$$
provided that $p \geq C(m+n)r$, where $C$ and $c$ are positive constants. \end{lemma}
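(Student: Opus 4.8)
The plan is to follow the same two-step scheme as in Lemma \ref{spaceSRIP}, but replacing the net on $U_r$ by the net on $N_r^{*}$ furnished by Lemma \ref{cover2}. By the degree-two homogeneity of every term in the asserted inequality, it suffices to prove it on the spherical part $S:=\{\bm{X}:\|\bm{X}\|_F=1,\ \|\bm{X}\|_{*}\le\sqrt{r}\}$, i.e.\ over matrices of effective rank at most $r$; this is precisely the set tailored to the covering bound of Lemma \ref{cover2}. The crucial starting observation is that Step~1 of the proof of Lemma \ref{spaceSRIP}---the pointwise estimate $c_-\le\min_{I}\frac1p\|\mathcal{A}_I(\bm{X})\|_2^2\le\max_I\frac1p\|\mathcal{A}_I(\bm{X})\|_2^2\le c_+$ with probability $1-4\exp(-cp)$---never used $\mathrm{rank}(\bm{X})\le r$; it only used $\|\bm{X}\|_F=1$ and $\langle\bm{A}_i,\bm{X}\rangle\sim\mathcal{N}(0,1)$. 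Hence it applies verbatim to every fixed $\bm{X}\in S$, with the same $c_-=\nu_0^2/4$ and $c_+=1+t$ from Lemmas \ref{lem:xu1} and \ref{lem:xu2}.

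For the uniform \emph{upper} bound I would argue by decomposition rather than through the net. Writing the SVD of $\bm{X}\in S$ and splitting its singular values into consecutive blocks $T_0,T_1,\dots$ of size $r$, the shifting inequality gives $\sum_{j\ge1}\|\bm{X}_{T_j}\|_F\le\frac{1}{\sqrt r}\|\bm{X}_{T_0^{c}}\|_{*}\le1$, so by the triangle inequality together with the order-$r$ RIP of $\frac1{\sqrt p}\mathcal{A}$ one controls $\frac1{\sqrt p}\|\mathcal{A}(\bm{X})\|_2$, and therefore $\max_I\frac1p\|\mathcal{A}_I(\bm{X})\|_2^2\le\frac1p\|\mathcal{A}(\bm{X})\|_2^2$, by a constant multiple of $\|\bm{X}\|_F^2$; sharpening this to a constant $\theta_{+}<2$ uniformly over $S$ uses the set-restricted isometry over the effective-rank set, which holds with probability $1-\exp(-cp)$ once $p\gtrsim(m+n)r$ because the Gaussian width of $S$ is $\lesssim\sqrt{(m+n)r}$.

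The main obstacle is the uniform \emph{lower} bound, and this is where the argument genuinely departs from Lemma \ref{spaceSRIP}. In the rank-$r$ case the net transfer works because $\bm{X}-\bm{Y}$ has rank $\le2r$, so the order-$2r$ RIP bounds the perturbation by $\sqrt{\theta_{+}}\,\epsilon$. For $N_r^{*}$ this fails: if $\bm{X},\bm{Y}\in S$ then $\|\bm{X}-\bm{Y}\|_F\le\epsilon$ but $\|\bm{X}-\bm{Y}\|_{*}$ may be as large as $2\sqrt r$, and the shifting inequality only yields $\frac1{\sqrt p}\|\mathcal{A}(\bm{X}-\bm{Y})\|_2\lesssim\|(\bm{X}-\bm{Y})_{T_0}\|_F+\frac{1}{\sqrt r}\|\bm{X}-\bm{Y}\|_{*}=O(1)$, independently of $\epsilon$ and vastly larger than the pointwise margin $\sqrt{c_-}=\nu_0/2\approx0.035$. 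Thus no choice of $\epsilon$ makes the one-step triangle transfer close, and a telescoping/iterated net diverges as well, since the rescaled residual leaves $N_r^{*}$ for an ever larger nuclear ball.

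To bypass this I would establish the lower bound by a small-ball (Mendelson-type) argument rather than a Lipschitz/net transfer. Fix $u\in(0,\Phi^{-1}(3/4))$, so that $Q:=\mathbb{P}[|\langle\bm{A}_1,\bm{X}\rangle|\ge u]=\mathbb{P}[|\mathcal{N}(0,1)|\ge u]>\tfrac12$ for every $\bm{X}\in S$. Since $\min_{\#I\ge p/2}\frac1p\|\mathcal{A}_I(\bm{X})\|_2^2=\frac1p\sum_{i=1}^{\lceil p/2\rceil}|y|_{(i)}^2$, the sum of the smallest half of the $|\langle\bm{A}_i,\bm{X}\rangle|^2$, it suffices to show that $\#\{i:|\langle\bm{A}_i,\bm{X}\rangle|\ge u\}\ge(\tfrac12+\gamma)p$ uniformly over $S$, which then forces at least $\gamma p$ of the smallest-half measurements to exceed $u$ and gives $\theta_{-}=\gamma u^2$. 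The count $\frac1p\sum_i\mathbf{1}[|\langle\bm{A}_i,\bm{X}\rangle|\ge u]$ is a bounded empirical process whose uniform deviation from its mean $Q$ is controlled, via symmetrization and contraction, by $\frac1{\sqrt p}\,\mathbb{E}\sup_{\bm{X}\in S}\langle\bm{G},\bm{X}\rangle$ for a Gaussian matrix $\bm{G}$, i.e.\ by the Gaussian width of $S$; this width is in turn estimated through the covering numbers of $N_r^{*}$ from Lemma \ref{cover2} (a truncated Dudley bound), yielding a deviation $\lesssim\sqrt{(m+n)r/p}$, which is $o(1)$ exactly when $p\gtrsim(m+n)r$. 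Combining this with the pointwise small-ball inequality gives the uniform lower bound with probability $1-\exp(-cp)$, and rescaling back to $N_r^{*}$ by homogeneity completes the proof. I expect the delicate point to be the final bookkeeping: balancing the truncation scale in the chaining integral against the crude $\epsilon^{-3}$ dependence of the covering number in Lemma \ref{cover2}, so that the sample complexity remains $p\gtrsim(m+n)r$.
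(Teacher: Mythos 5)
Your proposal reuses the paper's Step~1 exactly as intended (the pointwise bound in the proof of Lemma~\ref{spaceSRIP} indeed uses only $\|\bm{X}\|_F=1$ and $\langle\bm{A}_i,\bm{X}\rangle\sim\mathcal{N}(0,1)$, never the rank), but for the uniform lower bound you depart from the paper, and your stated reason for doing so is wrong. The paper's proof of Lemma~\ref{rmk1} is precisely the net transfer you declare impossible: run Step~2 of Lemma~\ref{spaceSRIP} over the $\epsilon$-net of $N_r^{*}$ from Lemma~\ref{cover2}, whose cardinality $\exp(24(m+n+1)r/\epsilon^3)$ is still absorbed by the union bound once $p\gtrsim (m+n)r$ with an $\epsilon$-dependent constant. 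Your obstruction --- that for $\bm{Z}=\bm{X}-\bm{Y}$ with $\|\bm{Z}\|_F\le\epsilon$ and $\|\bm{Z}\|_{*}\le 2\sqrt{r}$ the perturbation $\frac{1}{\sqrt p}\|\mathcal{A}(\bm{Z})\|_2$ is $O(1)$, so ``no choice of $\epsilon$ makes the one-step triangle transfer close'' --- arises only because you froze the block size at $r$ in the shifting inequality. Nothing forces that choice: with blocks of size $t\asymp r/\epsilon^2$ and the RIP of order $2t$, one gets $\frac{1}{\sqrt p}\|\mathcal{A}(\bm{Z})\|_2\le\sqrt{1+\delta}\,\bigl(\|\bm{Z}\|_F+\|\bm{Z}\|_{*}/\sqrt{t}\bigr)\le C'\epsilon$, since the nuclear norm of the residual stays bounded by $2\sqrt{r}$ (it does not grow with $1/\epsilon$), and RIP of order $\asymp r/\epsilon^2$ still only costs $p\gtrsim(m+n)r/\epsilon^2=O((m+n)r)$ for a fixed constant $\epsilon$. (Note also that the net of Lemma~\ref{cover2} consists of matrices of rank at most $\lceil r/\epsilon^2\rceil$, so the same repair is available by splitting $\bm{X}-\bm{Y}$ into a rank-$O(r/\epsilon^2)$ head and an $\epsilon$-small tail.) So the paper's route closes, and your claim that an iterated net necessarily diverges is likewise incorrect.

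That said, your replacement for the lower bound is a legitimate alternative. The small-ball scheme --- choose $u$ with $\mathbb{P}[|\mathcal{N}(0,1)|\ge u]>\tfrac12$, show the count $\#\{i:|\langle\bm{A}_i,\bm{X}\rangle|\ge u\}\ge(\tfrac12+\gamma)p$ uniformly, and deduce that the smallest-half sum exceeds $\gamma p u^2$, giving $\theta_{-}=\gamma u^2$ --- is correct, and it cleanly bypasses Lemmas~\ref{lem:xu1} and \ref{lem:xu2} as well as the $\epsilon^{-3}$ covering exponent, at the price of empirical-process machinery the paper avoids. Two points need care before it is a proof. First, the contraction principle does not apply to indicators; you must interpose a Lipschitz surrogate (equal to $1$ above $u$, $0$ below $u/2$, slope $2/u$) before invoking symmetrization and contraction, after which the Gaussian-width bound $w(S)\lesssim\sqrt{r}\,(\sqrt{m}+\sqrt{n})$ (exactly the computation in the paper's Appendix~\ref{prof:upper1}) and a bounded-difference inequality give the $1-\exp(-cp)$ probability. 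Second, your homogeneity reduction to the slice $S=\{\|\bm{X}\|_F=1,\ \|\bm{X}\|_{*}\le\sqrt{r}\}$ silently proves the statement only on the cone $\{\|\bm{X}\|_{*}\le\sqrt{r}\|\bm{X}\|_F\}$: an element of $N_r^{*}$ with tiny Frobenius norm but nuclear norm near $\sqrt{r}$ is not a positive multiple of a point of $S$. This is not a defect relative to the paper --- the net transfer likewise gives nothing when $\|\bm{X}\|_F\lesssim\epsilon$, and the cone version is what the applications (e.g., via \eqref{e:setH}) actually invoke --- but since you present the reduction as lossless, it should be flagged.
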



\section{Proofs of the Main Results}

To facilitate subsequent proofs, we first introduce the following lemmas. Set
$$\sgn(x):=\left\{
                     \begin{array}{ll}
                       1, & \hbox{$x\geq 0$;} \\
                       -1, & \hbox{$x<0$.}
                     \end{array}
                   \right.
$$
\begin{lemma} \label{lemma3.1}
For $\bm{X}_0\in \mathbb{R}^{n\times m}$
 and $\bm{X}\in \mathbb{R}^{n\times m}$,
we define the following sets:
\begin{align*}
T_1:=&\{i\in[p]: \sgn(\langle \bm{A}_i, \bm{X}\rangle)=1, \sgn(\langle \bm{A}_i, \bm{X}_0\rangle)=1\},\nonumber\\
T_2:=&\{i\in[p]: \sgn(\langle \bm{A}_i, \bm{X}\rangle)=-1, \sgn(\langle \bm{A}_i, \bm{X}_0\rangle)=-1\},\nonumber\\
T_3:=&\{i\in[p]: \sgn(\langle \bm{A}_i, \bm{X}\rangle)=1, \sgn(\langle \bm{A}_i, \bm{X}_0\rangle)=-1\},\nonumber\\
T_4:=&\{i\in[p]: \sgn(\langle \bm{A}_i, \bm{X}\rangle)=-1, \sgn(\langle \bm{A}_i, \bm{X}_0\rangle)=1\}.
\end{align*}
Set   $\bm{y}:=\abs{\mathcal{A}(\bm{X}_0)}+\bm{\eta}$.
Then the following inequalities  hold:
\begin{align}\label{imdent1}
\||\mathcal{A}({\bm{X}})|-\bm{y}\|_2^2
\geq
\sum_{i\in T_1\cup T_2} (\langle \bm{A}_i, \bm{H}^{-}\rangle^2+\eta_i^2)
-\sum_{i\in T_1} 2\eta_i\langle \bm{A}_i, \bm{H}^{-}\rangle+\sum_{i\in T_2}2\eta_i\langle \bm{A}_i, \bm{H}^{-}\rangle
\end{align}
and
\begin{align}\label{imdent1+}
\||\mathcal{A}(\bm{X})|-\bm{y}\|_2^2
\geq \sum_{i\in T_3\cup T_4} (\langle \bm{A}_i, \bm{H}^{+}\rangle^2+\eta_i^2)
-\sum_{i\in T_3} 2\eta_i\langle \bm{A}_i, \bm{H}^{+}\rangle+\sum_{i\in T_4}2\eta_i\langle \bm{A}_i, \bm{H}^+\rangle,
\end{align}
where $\bm{H}^{-}:=\bm{X}-\bm{X}_0$ and $\bm{H}^{+}:=\bm{X}+\bm{X}_0$.
\end{lemma}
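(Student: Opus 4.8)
The plan is to prove both inequalities from a single exact expansion of the squared residual over the partition $[p]=T_1\cup T_2\cup T_3\cup T_4$, followed by discarding one nonnegative group of perfect-square terms. Write $a_i:=\inner{\bm{A}_i,\bm{X}}$ and $b_i:=\inner{\bm{A}_i,\bm{X}_0}$, so that the $i$-th entry of the residual is $\abs{a_i}-y_i=\abs{a_i}-\abs{b_i}-\eta_i$ and hence $\||\mathcal{A}(\bm{X})|-\bm{y}\|_2^2=\sum_{i=1}^p(\abs{a_i}-\abs{b_i}-\eta_i)^2$. Since $\sgn(\cdot)\in\{1,-1\}$ is defined for every real number and is single-valued, each index $i$ falls into exactly one of $T_1,\ldots,T_4$ according to the pair $(\sgn(a_i),\sgn(b_i))$; thus the full sum splits into four partial sums indexed by these sets.

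First I would evaluate $\abs{a_i}-\abs{b_i}$ on each set using the sign information. On $T_1$ (both nonnegative) $\abs{a_i}-\abs{b_i}=a_i-b_i=\inner{\bm{A}_i,\bm{H}^{-}}$; on $T_2$ (both negative) $\abs{a_i}-\abs{b_i}=-a_i+b_i=-\inner{\bm{A}_i,\bm{H}^{-}}$; on $T_3$ ($a_i\ge 0$, $b_i<0$) $\abs{a_i}-\abs{b_i}=a_i+b_i=\inner{\bm{A}_i,\bm{H}^{+}}$; and on $T_4$ ($a_i<0$, $b_i\ge 0$) $\abs{a_i}-\abs{b_i}=-a_i-b_i=-\inner{\bm{A}_i,\bm{H}^{+}}$. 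Substituting and expanding the square gives, termwise, $(\abs{a_i}-\abs{b_i}-\eta_i)^2=\inner{\bm{A}_i,\bm{H}^{-}}^2\mp 2\eta_i\inner{\bm{A}_i,\bm{H}^{-}}+\eta_i^2$ on $T_1$ (sign $-$) and $T_2$ (sign $+$), with the exactly analogous identity in terms of $\bm{H}^{+}$ on $T_3$ (sign $-$) and $T_4$ (sign $+$).

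Summing these exact identities over all $i$ yields
\[
\||\mathcal{A}(\bm{X})|-\bm{y}\|_2^2 = S^{-}+S^{+},
\]
where $S^{-}$ collects the $T_1\cup T_2$ contributions, which is precisely the right-hand side of \eqref{imdent1}, and $S^{+}$ collects the $T_3\cup T_4$ contributions, which is precisely the right-hand side of \eqref{imdent1+}. To obtain \eqref{imdent1} I would note that $S^{+}=\sum_{i\in T_3}(\inner{\bm{A}_i,\bm{H}^{+}}-\eta_i)^2+\sum_{i\in T_4}(\inner{\bm{A}_i,\bm{H}^{+}}+\eta_i)^2\ge 0$, being a sum of perfect squares, so dropping it preserves the inequality; symmetrically, $S^{-}=\sum_{i\in T_1}(\inner{\bm{A}_i,\bm{H}^{-}}-\eta_i)^2+\sum_{i\in T_2}(\inner{\bm{A}_i,\bm{H}^{-}}+\eta_i)^2\ge 0$, and discarding it gives \eqref{imdent1+}. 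There is no genuine analytic obstacle here: the argument is purely algebraic, and the only points demanding care are matching the $\pm$ sign of each cross term to the correct index set and verifying that the two discarded groups really do regroup into perfect squares, so that plain nonnegativity—rather than any concentration or RIP input—suffices.
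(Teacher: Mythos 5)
Your proposal is correct and takes essentially the same route as the paper: the paper likewise evaluates $|\langle \bm{A}_i,\bm{X}\rangle|-|\langle \bm{A}_i,\bm{X}_0\rangle|$ case by case on $T_1,\ldots,T_4$ via the sign information, rewrites the residual on $T_1\cup T_2$ (resp.\ $T_3\cup T_4$) as $(\langle \bm{A}_i,\bm{H}^{-}\rangle\mp\eta_i)^2$ (resp.\ with $\bm{H}^{+}$), and obtains each inequality by discarding the nonnegative squared terms on the complementary index set. Your only cosmetic refinement is making explicit the exact identity $\||\mathcal{A}(\bm{X})|-\bm{y}\|_2^2=S^{-}+S^{+}$, which the paper leaves implicit by simply restricting the sum of squares to the relevant index set.
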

\begin{proof}
The  proof is presented in Appendix \ref{prof:lemma3.1}.
\end{proof}
\begin{remark}\label{remark3.2}
By definition, the sets \(T_i\) \((i=1,2,3,4)\) are pairwise disjoint and satisfy \(T_1 \cup T_2 \cup T_3 \cup T_4 = [p]\). Consequently, either \(|T_1 \cup T_2| \ge p/2\) or \(|T_3 \cup T_4| \ge p/2\).
\end{remark}

\begin{lemma}
\label{lem:upper1}
Let $\mathcal{A}:{\mathbb R}^{n\times m}\to{\mathbb R}^p$ be a Gaussian measurement ensemble and $r \in \mathbb{Z}_{+}$ with $1 \leq r \leq \min\{m,n\}$.
Suppose \( p \gtrsim (\sqrt{m} + \sqrt{n})^2 r \) for any fixed $r$.  Then, with probability at least \( 1 - \exp(-cp) \), the following inequality holds for any fixed \( \bm{\eta} \in \mathbb{R}^p \):
\[
\sup_{\bm{H} \in N_r^{*}, T \subset [p]} \sum_{i \in T} \eta_i \langle \bm{A}_i, \bm{H} \rangle\,\, \lesssim\,\, \sqrt{p} \|\bm{\eta}\|_2,
\]
where $N_r^{*}:=\{\bm{H}\in \mathbb{R}^{n\times m}:\|\bm{H}\|_F\leq 1, \|\bm{H}\|_{*}\leq \sqrt{r}\}$, and  \( c > 0 \) is an absolute constant.
\end{lemma}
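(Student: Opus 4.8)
The plan is to reduce the claim to a uniform upper bound on $\|\mathcal{A}(\bm{H})\|_2$ over $N_r^{*}$, which is precisely the content of Lemma \ref{rmk1}; the rest is elementary. First I would dispose of the supremum over $T$. For any fixed $\bm{H}$, the sum $\sum_{i\in T}\eta_i\langle \bm{A}_i,\bm{H}\rangle$ is maximized over $T\subseteq[p]$ by retaining exactly those indices whose summand is nonnegative, so that
\[
\sup_{T\subseteq[p]}\sum_{i\in T}\eta_i\langle \bm{A}_i,\bm{H}\rangle
=\sum_{i=1}^p\max\{\eta_i\langle \bm{A}_i,\bm{H}\rangle,\,0\}
\leq \sum_{i=1}^p|\eta_i|\,|\langle \bm{A}_i,\bm{H}\rangle|.
\]
Since this last bound no longer depends on $T$, the joint supremum over $(\bm{H},T)$ equals $\sup_{\bm{H}\in N_r^{*}}\sum_{i=1}^p|\eta_i|\,|\langle \bm{A}_i,\bm{H}\rangle|$, and it suffices to estimate this quantity.

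Next I would apply the Cauchy--Schwarz inequality to the vectors $(|\eta_i|)_{i\in[p]}$ and $(|\langle \bm{A}_i,\bm{H}\rangle|)_{i\in[p]}$, which gives
\[
\sum_{i=1}^p|\eta_i|\,|\langle \bm{A}_i,\bm{H}\rangle|
\leq \|\bm{\eta}\|_2\Big(\sum_{i=1}^p\langle \bm{A}_i,\bm{H}\rangle^2\Big)^{1/2}
=\|\bm{\eta}\|_2\,\|\mathcal{A}(\bm{H})\|_2.
\]
It then remains to control $\|\mathcal{A}(\bm{H})\|_2$ uniformly over $\bm{H}\in N_r^{*}$. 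By Lemma \ref{rmk1}, with probability at least $1-\exp(-cp)$ there is a constant $\theta_{+}<2$ such that, retaining $I=[p]$ in the upper estimate there (note $\#[p]=p\geq p/2$), one has $\tfrac{1}{p}\|\mathcal{A}(\bm{H})\|_2^2\leq \theta_{+}\|\bm{H}\|_F^2\leq \theta_{+}$ for all $\bm{H}\in N_r^{*}$, since $\|\bm{H}\|_F\leq 1$ on $N_r^{*}$. Hence $\|\mathcal{A}(\bm{H})\|_2\leq \sqrt{\theta_{+}p}<\sqrt{2p}$ uniformly, and combining the three displays yields
\[
\sup_{\bm{H}\in N_r^{*},\,T\subseteq[p]}\ \sum_{i\in T}\eta_i\langle \bm{A}_i,\bm{H}\rangle
\ \leq\ \sqrt{\theta_{+}p}\,\|\bm{\eta}\|_2\ \lesssim\ \sqrt{p}\,\|\bm{\eta}\|_2
\]
on the same event, which is the desired conclusion.

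The crucial point—and the only place randomness enters—is the uniform RIP-type upper bound furnished by Lemma \ref{rmk1}; everything else is deterministic. In particular, the defining event depends only on $\mathcal{A}$ and not on $\bm{\eta}$, so once it occurs the inequality holds simultaneously for every $\bm{\eta}\in\mathbb{R}^p$, which is exactly why the statement may be phrased for any fixed $\bm{\eta}$. The sample-complexity hypothesis is consistent as well, since $(\sqrt{m}+\sqrt{n})^2 r\asymp (m+n)r$ matches the requirement $p\geq C(m+n)r$ of Lemma \ref{rmk1}. I do not anticipate a genuine obstacle: the heavy lifting has already been absorbed into Lemma \ref{rmk1} (whose proof is the $\epsilon$-net argument for $N_r^{*}$ via Lemma \ref{cover2}), and the reduction through the positive-part identity followed by Cauchy--Schwarz is routine. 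The only mild subtlety worth flagging is that the supremum over $T$ must be controlled uniformly in $\bm{H}$; the positive-part bound achieves this automatically by producing an estimate that is free of $T$, after which the uniform control over $\bm{H}$ is supplied entirely by the RIP event.
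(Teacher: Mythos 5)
Your proof is correct, but it takes a genuinely different route from the paper's. You eliminate the supremum over $T$ deterministically (positive-part identity), apply Cauchy--Schwarz, and then reduce everything to the uniform SRIP-type upper bound over $N_r^{*}$ from Lemma \ref{rmk1} with $I=[p]$; the paper instead fixes $T$, applies Gaussian concentration for Lipschitz functions (Lemma \ref{lempro}) to $f(\bm{B})=\sup_{\bm{h}\in N_r^v}\langle \bm{h},\bm{B}^{\top}\bm{\eta}_T\rangle$, bounds the expectation by the Gaussian width of $N_r^{*}$ via the duality $\langle \bm{H},\bm{X}\rangle\le \|\bm{H}\|_{*}\|\bm{X}\|$ and $\mathbb{E}[\|\bm{X}\|]\lesssim \sqrt{m}+\sqrt{n}$, and finally union-bounds over all $2^p$ subsets $T$, which is precisely why the deviation level $t=C_0\sqrt{p}\|\bm{\eta}\|_2$ with $cC_0^2>2$ is needed there. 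What each approach buys: yours is shorter, avoids the exponential union bound, and yields a formally stronger conclusion, since your good event depends only on $\mathcal{A}$ and hence the bound holds simultaneously for \emph{all} $\bm{\eta}\in\mathbb{R}^p$, not merely each fixed one; the price is that all the randomness is outsourced to Lemma \ref{rmk1}, whose proof requires the covering-number bound for $N_r^{*}$ (Lemma \ref{cover2}), whereas the paper's proof is self-contained given standard sub-Gaussian facts. The paper's route also exposes the sharper expectation-level estimate of order $\sqrt{r}(\sqrt{m}+\sqrt{n})\|\bm{\eta}\|_2$, which your Cauchy--Schwarz step discards, though after the union bound over $T$ both arguments land at the same $\sqrt{p}\|\bm{\eta}\|_2$. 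One small stylistic caution: you quote Lemma \ref{rmk1} in its homogeneous form with $\theta_{+}\|\bm{H}\|_F^2$ on the right, but since $N_r^{*}$ is not a cone the robust consequence (and the only one you actually use, passing immediately to $\|\bm{H}\|_F\le 1$) is the non-homogeneous bound $\frac{1}{p}\|\mathcal{A}(\bm{H})\|_2^2\le\theta_{+}$ uniformly over $N_r^{*}$ --- so there is no gap in your argument. Your observation that $(\sqrt{m}+\sqrt{n})^2 r\asymp (m+n)r$ matches the hypothesis of Lemma \ref{rmk1} is also correct.
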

\begin{proof}
The proof is provided in Appendix \ref{prof:upper1}.
\end{proof}

\subsection{Proofs for Rank-constrained Least Squares Model}
Before turning to the proofs, we establish the following lemma for later use.
\begin{lemma}\label{lem:unbound}
Let $\mathcal{A}:\mathbb{R}^{n\times m}\to \mathbb{R}^p$ be a Gaussian measurement ensemble. Let $r$ be an integer with $1\le r\le \min\{m,n\}$,  and suppose $p \gtrsim (\sqrt{m}+\sqrt{n})^{2} r$. Let $\bm{\eta}=(\eta_1,\ldots,\eta_p)^{\top}\in \mathbb{R}^p$ denote the noise vector.
Then we have

$(\mathrm{I})$ The inequality
\begin{align*}
\Big|\sum_{i=1}^p\eta_i|\langle  \bm{A}_{i},\bm{X}\rangle|\Big|
\lesssim \sqrt{(m+n+1)r\log p}\|\bm{\eta}\|_2+\Big|\sum_{i=1}^p\eta_i\Big|
\end{align*}
holds uniformly for all $\bm{X}\in U_r:=\{\bm{H}\in \mathbb{R}^{n\times m}:\|\bm{H}\|_F=1, \mathrm{rank}(\bm{H})\leq r\}$, with probability at least $1 - \exp(-c p) - 2\exp\big(-(m+n+1)r\log p\big)$.
Here $c>0$ is a universal constant.

$(\mathrm{II})$ If $\Big|\sum_{i=1}^p\eta_i\Big|\geq C_0\sqrt{p}\|\bm{\eta}\|_2$ with $C_0\in (0,1)$, then
$$\Big|\sum_{i=1}^p\eta_i|\langle  \bm{A}_{i},\bm{X}\rangle|\Big|
\gtrsim\sqrt{p}\|\bm{\eta}\|_2
$$
holds for all $\bm{X}\in U_r:=\{\bm{H}\in \mathbb{R}^{n\times m}:\|\bm{H}\|_F=1, \mathrm{rank}(\bm{H})\leq r\}$, with probability at least $1-3\exp(-cp)$. Here $c>0$ is a universal constant.
\end{lemma}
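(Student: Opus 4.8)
The plan is to decompose the quantity $\sum_{i=1}^p \eta_i |\langle \bm{A}_i, \bm{X}\rangle|$ by separating the ``sign-averaged'' part from a fluctuation part, and then handle each via a different tool. The key observation is the elementary identity $|\langle \bm{A}_i, \bm{X}\rangle| = \langle \bm{A}_i, \bm{X}\rangle \cdot \sgn(\langle \bm{A}_i, \bm{X}\rangle)$. I would write
\[
\sum_{i=1}^p \eta_i |\langle \bm{A}_i, \bm{X}\rangle|
= \sum_{i=1}^p \eta_i \langle \bm{A}_i, \bm{X}\rangle \cdot \sgn(\langle \bm{A}_i, \bm{X}\rangle).
\]
For part $(\mathrm{I})$, the natural route is to bound this by a supremum over sign patterns $T = \{i : \sgn(\langle \bm{A}_i, \bm{X}\rangle) = +1\}$: grouping positive and negative terms gives a bound in terms of $\big|\sum_{i \in T} \eta_i \langle \bm{A}_i, \bm{X}\rangle - \sum_{i \notin T} \eta_i \langle \bm{A}_i, \bm{X}\rangle\big|$. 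This is exactly where Lemma~\ref{lem:upper1} applies, since $U_r \subseteq N_r^{*}$: uniformly over $\bm{X} \in N_r^{*}$ and $T \subseteq [p]$, the sum $\sum_{i\in T}\eta_i\langle\bm{A}_i,\bm{H}\rangle \lesssim \sqrt{p}\,\|\bm{\eta}\|_2$. However, that alone would give $\sqrt{p}\|\bm{\eta}\|_2$ rather than the sharper $\sqrt{(m+n+1)r\log p}\,\|\bm{\eta}\|_2 + |\sum_i \eta_i|$ claimed. The sharper bound must come from isolating the ``mean'' contribution: I would write $|\langle\bm{A}_i,\bm{X}\rangle| = \mathbb{E}|\langle\bm{A}_i,\bm{X}\rangle| + \big(|\langle\bm{A}_i,\bm{X}\rangle| - \mathbb{E}|\langle\bm{A}_i,\bm{X}\rangle|\big)$, where $\mathbb{E}|\langle\bm{A}_i,\bm{X}\rangle| = \sqrt{2/\pi}$ for $\|\bm{X}\|_F = 1$. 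The mean part yields $\sqrt{2/\pi}\,\sum_i \eta_i$, accounting for the $|\sum_i \eta_i|$ term, while the centered part is a mean-zero process whose uniform supremum over $U_r$ must be controlled.

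For the centered fluctuation term, I would run a covering/chaining argument over the $\epsilon$-net $\overline{U_r}$ of Lemma~\ref{cover}, whose cardinality is $(9/\epsilon)^{(m+n+1)r}$, giving the $\log(9/\epsilon)^{(m+n+1)r} \asymp (m+n+1)r$ entropy factor. For each fixed $\bm{X}$, the variable $\sum_i \eta_i(|\langle\bm{A}_i,\bm{X}\rangle| - \sqrt{2/\pi})$ is a sum of independent centered sub-Gaussian terms (each $|\langle\bm{A}_i,\bm{X}\rangle|$ is a Lipschitz-1 function of a Gaussian, hence sub-Gaussian with variance proxy $\asymp \eta_i^2$), so a Hoeffding/Bernstein bound gives tail $\exp(-ct^2/\|\bm{\eta}\|_2^2)$. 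Setting $t \asymp \sqrt{(m+n+1)r\log p}\,\|\bm{\eta}\|_2$ and applying the union bound over $\overline{U_r}$ produces the failure probability $2\exp(-(m+n+1)r\log p)$. The residual net-approximation error is controlled using the SRIP upper bound $\theta_+$ from Lemma~\ref{spaceSRIP} together with Cauchy--Schwarz, on the event (probability $1 - \exp(-cp)$) that $\frac1p\|\mathcal{A}(\bm{H})\|_2^2 \le \theta_+\|\bm{H}\|_F^2$ uniformly; combining the net bound with this Lipschitz-in-$\bm{X}$ control closes the uniform estimate and assembles the three probability terms.

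For part $(\mathrm{II})$, the goal is a matching lower bound under the hypothesis $|\sum_i \eta_i| \ge C_0 \sqrt{p}\,\|\bm{\eta}\|_2$. Here I would again split off the mean: $\sum_i \eta_i|\langle\bm{A}_i,\bm{X}\rangle| = \sqrt{2/\pi}\sum_i\eta_i + \sum_i\eta_i(|\langle\bm{A}_i,\bm{X}\rangle| - \sqrt{2/\pi})$. The first term is $\gtrsim \sqrt{p}\,\|\bm{\eta}\|_2$ by hypothesis. The second (centered) term is bounded above uniformly by $\lesssim \sqrt{(m+n+1)r\log p}\,\|\bm{\eta}\|_2$ from the estimate just proved in part $(\mathrm{I})$; since $p \gtrsim (m+n+1)r$ with a large enough constant makes $\sqrt{(m+n+1)r\log p} = o(\sqrt p)$, this fluctuation is a lower-order correction and can be absorbed, leaving the net contribution $\gtrsim \sqrt{p}\,\|\bm{\eta}\|_2$. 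The probability $1 - 3\exp(-cp)$ comes from intersecting the events needed for the mean/fluctuation split and the SRIP event. I expect the main obstacle to be establishing the \emph{sharp} entropy-dependent bound in part $(\mathrm{I})$ rather than the coarse $\sqrt{p}\|\bm{\eta}\|_2$ bound: one must carefully verify the sub-Gaussian (or sub-exponential) concentration of $\sum_i \eta_i|\langle\bm{A}_i,\bm{X}\rangle|$ with the correct variance proxy $\|\bm{\eta}\|_2^2$, and handle the net-approximation error so that the Lipschitz constant of $\bm{X}\mapsto \sum_i\eta_i|\langle\bm{A}_i,\bm{X}\rangle|$ does not reintroduce an undesirable $\sqrt{p}$ factor—this is where the SRIP control from Lemma~\ref{spaceSRIP} is essential.
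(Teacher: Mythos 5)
Your part (I) is, in substance, the paper's own argument: center $|\langle \bm{A}_i,\bm{X}\rangle|$ at its mean $\sqrt{2/\pi}$ (valid since $\langle\bm{A}_i,\bm{X}\rangle\sim\mathcal{N}(0,1)$ for $\|\bm{X}\|_F=1$), apply a sub-Gaussian Hoeffding bound with variance proxy $\|\bm{\eta}\|_2^2$ at each fixed point (the paper's Lemmas \ref{lem:subG}, \ref{lem:sub}, \ref{lem:HTI}), union-bound over the $\epsilon$-net $\overline{U_r}$ of Lemma \ref{cover}, and control the off-net error by Cauchy--Schwarz together with the RIP/SRIP upper bound of Lemma \ref{spaceSRIP}, which gives $\sum_i|\eta_i||\langle\bm{A}_i,\bm{X}-\bm{X}^{\epsilon}\rangle|\le\theta_{+}\sqrt{p}\,\|\bm{\eta}\|_2\,\epsilon$. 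The paper takes $\epsilon=9/\sqrt{p}$ and $t\asymp\sqrt{(m+n+1)r\log p}\,\|\bm{\eta}\|_2$, exactly the tuning you describe, and your probability accounting for (I) matches.

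Part (II), however, has a genuine gap. You recycle the part-(I) fluctuation level $\sqrt{(m+n+1)r\log p}\,\|\bm{\eta}\|_2$ and assert that $p\gtrsim(m+n+1)r$ makes this $o(\sqrt{p}\,\|\bm{\eta}\|_2)$. That is false in the stated sampling regime: when $p\asymp(m+n)r$ one has $(m+n+1)r\log p\asymp p\log p\gg p$, so your ``lower-order correction'' can exceed the mean contribution $\sqrt{2/\pi}\,\big|\sum_i\eta_i\big|\ge\sqrt{2/\pi}\,C_0\sqrt{p}\,\|\bm{\eta}\|_2$ and the lower bound evaporates; your route would only close under a strictly stronger hypothesis such as $p\gtrsim(m+n)r\log p$. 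It also cannot yield the stated probability $1-3\exp(-cp)$, since your fluctuation event holds only with probability $1-2\exp(-(m+n+1)r\log p)$, which is weaker than $1-\exp(-cp)$ precisely when $(m+n+1)r\log p\ll p$. The paper instead reruns the net argument in part (II) with different parameters: deviation level $t=C_0\sqrt{p/(2\pi)}\,\|\bm{\eta}\|_2$ (a fixed fraction of the mean term, so the gap $\sqrt{2/\pi}\,C_0\sqrt{p}\,\|\bm{\eta}\|_2-t=C_0\sqrt{p/(2\pi)}\,\|\bm{\eta}\|_2$ stays positive) and a constant-size $\epsilon<\min\{C_0/(\sqrt{2\pi}C),\,9\exp(-C_1C_0^2/(2\pi C_2C_{\psi}^2))\}$, so that the union-bound exponent becomes $-c'p+(m+n+1)r\log(9/\epsilon)\le -cp$ once $p\ge C(m+n)r$, and the off-net error $\theta_{+}\sqrt{p}\,\|\bm{\eta}\|_2\,\epsilon$ is likewise a small fraction of the mean term. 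This re-tuning of $(t,\epsilon)$ for the lower bound is the missing idea; the part-(I) estimate cannot simply be quoted.
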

\begin{proof}
The proof is provided in Appendix \ref{prof:unbound}.
\end{proof}
\subsection*{A. Proof of Theorem \ref{thm:squarank}}
\begin{proof}
Given that $\hat{\bm{X}}^r$ is a solution to \eqref{pro:squarank} and $\bm{y}=|\mathcal{A}(\bm{X}_0)|+\bm{\eta}$, it follows that
\begin{align}\label{2}
\||\mathcal{A}(\hat{\bm{X}}^r)|-\bm{y}\|_2^2\leq \||\mathcal{A}(\bm{X}_0)|-\bm{y}\|_2^2=\|\bm{\eta}\|_2^2
\end{align}
and $\mathrm{rank}(\hat{\bm{X}}^r)\leq r$.
When applying Lemma \ref{lemma3.1}, we set $\bm{X}:=\hat{\bm{X}}^r$. This choice then consistently defines $\bm{H}^{-}=\hat{\bm{X}}^r-\bm{X}_0$ and $\bm{H}^{+}=\hat{\bm{X}}^r+\bm{X}_0$ according to the lemma's specifications.
 Given that  $\mathrm{rank}(\bm{X}_0)\leq r$ and $\mathrm{rank}(\hat{\bm{X}}^r)\leq r$,
  we have
  \[
  \frac{\bm{H}^{-}}{\|\bm{H}^{-}\|_F} \in U_{2r}:=\{\bm{H}\in \mathbb{R}^{n\times m}:\|\bm{H}\|_F=1, \mathrm{rank}(\bm{H})\leq 2r\}.
  \]
In accordance with  Remark \ref{remark3.2},  we assume, without loss of generality, that $\sharp(T_1\cup T_2)\geq \frac p 2$.
Combining  \eqref{2} and \eqref{imdent1}, we obtain that
\begin{align}\label{eq:le1}
\sum_{i\in T_1\cup T_2} \langle \bm{A}_i, \bm{H}^{-}\rangle^2\leq
\sum_{i\in T_1} 2\eta_i\langle \bm{A}_i, \bm{H}^{-}\rangle-\sum_{i\in T_2}2\eta_i\langle \bm{A}_i, \bm{H}^{-}\rangle+\sum_{i\in (T_1\cup T_2)^c}\eta_i^2.
\end{align}
Applying Lemma \ref{spaceSRIP} to the rank $2r$ matrix $\bm{H}^{-}$, we obtain
\begin{align}\label{eq:le15}
 \sum_{i\in T_1\cup T_2} \langle \bm{A}_i, \bm{H}^{-}\rangle^2=\|\mathcal{A}_{T_1\cup T_2}(\bm{H}^{-})\|_2^2\geq p\theta_-\|\bm{H}^{-}\|_F^2
 \end{align}
 holds with probability at least $1-\exp(-cp)$, provided $p\gtrsim (m+n)r$.
Note that
\begin{align}\label{eq:le2}
 \sum_{i\in T_1}\eta_i\langle \bm{A}_i, \bm{H}^{-}\rangle-\sum_{i\in T_2}\eta_i\langle \bm{A}_i, \bm{H}^{-}\rangle
\leq&  \Big|\sum_{i\in T_1}\eta_i\langle \bm{A}_i, \bm{H}^{-}\rangle\Big|+\Big|\sum_{i\in T_2}\eta_i\langle \bm{A}_i, \bm{H}^{-}\rangle\Big|\nonumber \\
\overset{(a)}{\leq}& 2\|\bm{H}^{-}\|_F\sup_{\bm{H} \in N_{2r}^{*}, T \subset [p]} \sum_{i \in T} \eta_i \langle \bm{A}_i, \bm{H} \rangle \nonumber \\
 \overset{(b)}\leq& C\sqrt{p} \|\bm{\eta}\|_2\|\bm{H}^{-}\|_F
 \end{align}
holds with probability at least $1-\exp(-cp)$, provided $ p \gtrsim (\sqrt{m} + \sqrt{n})^2 r$.
 Here, ($a$) follows from the fact that
 \[
 \frac{\bm{H}^{-}}{\|\bm{H}^{-}\|_F} \in U_{2r}\subset N_{2r}^{*}:=\{\bm{H}\in \mathbb{R}^{n\times m}:\|\bm{H}\|_F\leq 1, \|\bm{H}\|_{*}\leq \sqrt{2r}\},
  \]
  while ($b$) follows from Lemma \ref{lem:upper1}.

 Combining (\ref{eq:le1}), (\ref{eq:le15}) and (\ref{eq:le2}), we conclude that
 $$
 p\theta_-\|\bm{H}^{-}\|_F^2\leq C\sqrt{p}\|\bm{\eta}\|_2\|\bm{H}^{-}\|_F+\sum_{i\in (T_1\cup T_2)^c}\eta_i^2
 $$
holds  with probability at least  $1-2\exp(-cp)$, provided $p\gtrsim (m+n)r$.
 This implies
 \begin{align}\label{e:H-}
 \|\bm{H}^{-}\|_F\lesssim \frac{\|\bm{\eta}\|_2}{\sqrt{p}}.
 \end{align}
 For the case where $\sharp(T_3 \cup T_4) \geq \frac{p}{2}$, employing a similar line of reasoning as in the previous scenario, we can deduce
\begin{align}\label{e:H+}
 \|\bm{H}^{+}\|_F \lesssim \frac{\|\bm{\eta}\|_2}{\sqrt{p}}.
 \end{align}
Combining \eqref{e:H-} and \eqref{e:H+}, we have
$$\min\{\|\hat{\bm{X}}^r-\bm{X}_0\|_F, \|\hat{\bm{X}}^r+\bm{X}_0\|_F\}\lesssim \frac{\|\bm{\eta}\|_2}{\sqrt{p}},$$
 which is \eqref{eq:upper}.
\end{proof}

\subsection*{B. Proof of Theorems \ref{thm:squarankone}}

We now present the proof of Theorem \ref{thm:squarankone}, inspired by the proof of \cite[Theorem 1.4]{xia2024performance}.
\begin{proof}
For the case where \(\hat{\bm{X}}^r = \bm{0}\in \mathbb{R}^{n\times m}\), i.e., \(\mathrm{rank}(\hat{\bm{X}}^r) = 0\),
the conclusion follows directly from the condition \(\|\bm{X}_0\|_{F} \gtrsim \frac{\|\bm{\eta}\|_2}{\sqrt{p}}\).

Next, we consider the case where \(1 \leq \mathrm{rank}(\hat{\bm{X}}^r) \leq r\). Without loss of generality, we assume that $\mathrm{rank}(\hat{\bm{X}}^r) = r$. In this case,
the rank-$1$ decomposition of $\hat{\bm{X}}^r\in \mathbb{R}^{n\times m}$ can be expressed as $\hat{\bm{X}}^r=\sum_{i=1}^{r}\hat{\bm{u}}_i\hat{\bm{v}}_i^{\top}$, where $\hat{\bm{u}}_i\in \mathbb{R}^n$ and $\hat{\bm{v}}_i\in \mathbb{R}^m$ are nonzero vectors for all  \(i \in [r]\).
We now claim  that $\{\hat{\bm{u}}_i,\hat{\bm{v}}_i\}_{i\in [r]}$ forms a solution to the following optimization problem
\begin{align}\label{pro:squarankv}
\min_{\bm{u}_i\in \mathbb{R}^n, \bm{v}_i\in \mathbb{R}^m, i\in [r]} \ \ \ \Big\|\Big|\mathcal{A}\Big(\sum_{i=1}^{r}\bm{u}_i\bm{v}_i^{\top}\Big)\Big|-\bm{y}\Big\|_2^2.
\end{align}
The proof of this claim  is provided at the end of this argument.

Consider the function
\begin{align*}
f(\bm{u}_1,\bm{v}_1,\ldots,\bm{u}_r,\bm{v}_r)=\Big\|\Big|\mathcal{A}\Big(\sum_{i=1}^{r}\bm{u}_i\bm{v}_i^{\top}\Big)\Big|-\bm{y}\Big\|_2^2=\sum_{i=1}^{p}\bigg(\sqrt{\Big \langle \bm{A}_i, \sum_{j=1}^{r}\bm{u}_j\bm{v}_j^{\top}\Big\rangle^2}-y_i\bigg)^2
\end{align*}
for all $\bm{u}_i\in \mathbb{R}^n, \bm{v}_i\in \mathbb{R}^m$.
The sub-gradient of this function is given by
\begin{align*}
\frac{\partial f(\bm{u}_1,\bm{v}_1,\cdots,\bm{u}_r,\bm{v}_r)}{\partial (\bm{u}_1,\bm{v}_1,\cdots,\bm{u}_r,\bm{v}_r)}=\left(
  \begin{array}{c}
 \frac{\partial f(\bm{u}_1,\bm{v}_1,\cdots,\bm{u}_r,\bm{v}_r)}{\partial \bm{u}_1 } \\
 \frac{\partial f(\bm{u}_1,\bm{v}_1,\cdots,\bm{u}_r,\bm{v}_r)}{\partial  \bm{v}_1} \\
\vdots\\
\frac{\partial f(\bm{u}_1,\bm{v}_1,\cdots,\bm{u}_r,\bm{v}_r)}{\partial \bm{u}_r } \\
 \frac{\partial f(\bm{u}_1,\bm{v}_1,\cdots,\bm{u}_r,\bm{v}_r)}{\partial  \bm{v}_r} \\
  \end{array}
\right)=\left(
  \begin{array}{c}
 2\sum_{i=1}^{p}(|\langle \bm{A}_i, \sum_{j=1}^{r}\bm{u}_j\bm{v}_j^{\top}\rangle|-y_i)z_i\bm{A}_{i}\bm{v}_1 \\
 2\sum_{i=1}^{p}(|\langle\bm{A}_i, \sum_{j=1}^{r}\bm{u}_j\bm{v}_j^{\top}\rangle|-y_i)z_i\bm{A}_{i}^{\top}\bm{u}_1 \\
\vdots\\
2\sum_{i=1}^{p}(|\langle \bm{A}_i, \sum_{j=1}^{r}\bm{u}_j\bm{v}_j^{\top}\rangle|-y_i)z_i\bm{A}_{i}\bm{v}_r \\
 2\sum_{i=1}^{p}(|\langle\bm{A}_i, \sum_{j=1}^{r}\bm{u}_j\bm{v}_j^{\top}\rangle|-y_i)z_i\bm{A}_{i}^{\top}\bm{u}_r \\
  \end{array}
\right),
\end{align*}
where
\begin{equation}\label{eq:zii}
z_{i}=\left\{
                \begin{array}{ll}
                  \sgn(\langle \bm{A}_i, \sum_{j=1}^{r}\bm{u}_i\bm{v}_i^{\top}\rangle),
 & \hbox{$\mathrm{if}$ $|\langle \bm{A}_i, \sum_{j=1}^{r}\bm{u}_i\bm{v}_i^{\top}\rangle|\neq 0$;} \\
                 \in  [-1, 1], & \hbox{other wise.}
                \end{array}
              \right.
\end{equation}
Since $\{\hat{\bm{u}}_i,\hat{\bm{v}}_i\}_{i\in [r]}$  with $\hat{\bm{X}}^r=\sum_{i=1}^{r}\hat{\bm{u}}_i\hat{\bm{v}}_i^{\top}$ is a solution to \eqref{pro:squarankv}, we have
\[
\bm{0}\in \frac{\partial f(\bm{u}_1,\bm{v}_2,\cdots,\bm{u}_r,\bm{v}_r)}{\partial (\bm{u}_1,\bm{v}_2,\cdots,\bm{u}_r,\bm{v}_r)}|_{\bm{u}_1=\hat{\bm{u}}_1,\cdots,\bm{v}_r=\hat{\bm{v}}_r}.
\]
Hence, there exists $\hat{z}_i$ with $i\in [r]$, satisfying (\ref{eq:zii}) such that
\begin{align*}
\bm{0}=\left(
  \begin{array}{c}
 2\sum_{i=1}^{p}(|\langle \bm{A}_i, \sum_{j=1}^{r}\hat{\bm{u}}_j\hat{\bm{v}}_j^{\top}\rangle|-y_i)\hat{z}_i\bm{A}_{i}\hat{\bm{v}}_1 \\
 2\sum_{i=1}^{p}(|\langle\bm{A}_i, \sum_{j=1}^{r}\hat{\bm{u}}_j\hat{\bm{v}}_j^{\top}\rangle|-y_i)\hat{z}_i\bm{A}_{i}^{\top}\hat{\bm{u}}_1 \\
\vdots\\
2\sum_{i=1}^{p}(|\langle \bm{A}_i, \sum_{j=1}^{r}\hat{\bm{u}}_j\hat{\bm{v}}_j^{\top}\rangle|-y_i)\hat{z}_i\bm{A}_{i}\hat{\bm{v}}_r \\
 2\sum_{i=1}^{p}(|\langle\bm{A}_i, \sum_{j=1}^{r}\hat{\bm{u}}_j\hat{\bm{v}}_j^{\top}\rangle|-y_i)\hat{z}_i\bm{A}_{i}^{\top}\hat{\bm{u}}_r \\
  \end{array}
\right).
\end{align*}
By direct calculation, we have
\begin{align*}
0=&\Big\langle
 \Big(\sum_{i=1}^{p}(|\langle \bm{A}_i, \sum_{j=1}^{r}\hat{\bm{u}}_j\hat{\bm{v}}_j^{\top}\rangle|-y_i)\hat{z}_i\bm{A}_{i}\hat{\bm{v}}_l\Big)
\Big(\sum_{i=1}^{p}(|\langle\bm{A}_i,\sum_{j=1}^{r}\hat{\bm{u}}_j\hat{\bm{v}}_j^{\top}\rangle|-y_i)\hat{z}_i\bm{A}_{i}^{\top}\hat{\bm{u}}_l\Big)^{\top}, \hat{\bm{u}}_l\hat{\bm{v}}_l^{\top}\Big\rangle\\
=&\mathrm{Tr}\Big(\hat{\bm{v}}_l^{\top}\Big(\sum_{i=1}^{p}(|\langle\bm{A}_i,\sum_{j=1}^{r}\hat{\bm{u}}_j\hat{\bm{v}}_j^{\top}\rangle|-y_i)\hat{z}_i
\bm{A}_{i}^{\top}\hat{\bm{u}}_l\Big)\Big(\sum_{i=1}^{p}(|\langle \bm{A}_i, \sum_{j=1}^{r}\hat{\bm{u}}_j\hat{\bm{v}}_j^{\top}\rangle|-y_i)\hat{z}_i\bm{A}_{i}\hat{\bm{v}}_l\Big)^{\top}\hat{\bm{u}}_l\Big)\\
=&\Big(\sum_{i=1}^{p}(|\langle\bm{A}_i,\sum_{j=1}^{r}\hat{\bm{u}}_j\hat{\bm{v}}_j^{\top}\rangle|-y_i)\hat{z}_i\langle \bm{A}_i,\hat{\bm{u}}_l\hat{\bm{v}}_l^{\top}\rangle
\Big)^2
\end{align*}
for all $l\in[r]$.
Then, we obtain that
\begin{align*}
0=&\sum_{l=1}^r\Big(\sum_{i=1}^{p}(|\langle\bm{A}_i,\sum_{j=1}^{r}\hat{\bm{u}}_j\hat{\bm{v}}_j^{\top}\rangle|-y_i)\hat{z}_i\langle \bm{A}_i,\hat{\bm{u}}_l\hat{\bm{v}}_l^{\top}\rangle\Big)
=\sum_{i=1}^{p}(|\langle\bm{A}_i,\sum_{j=1}^{r}\hat{\bm{u}}_j
\hat{\bm{v}}_j^{\top}\rangle|-y_i)\cdot |\langle \bm{A}_i,\sum_{l=1}^{r}\hat{\bm{u}}_l\hat{\bm{v}}_l^{\top}\rangle|,
\end{align*}
where the last equality is based on  $\hat{z}_i$ satisfying (\ref{eq:zii}).
That is,
 \begin{align}\label{e:equ0}
\sum_{i=1}^{p}(|\langle \bm{A}_i, \hat{\bm{X}}^r\rangle|-y_i)\cdot |\langle \bm{A}_{i},\hat{\bm{X}}^r\rangle|=0.
\end{align}
Substituting $y_i=|\langle \bm{A}_i, \bm{X}_0\rangle|+\eta_i$ with $i\in [p]$ into the  equality \eqref{e:equ0}, we obtain
\[
\sum_{i=1}^{p}\eta_i|\langle \bm{A}_{i},\hat{\bm{X}}^r\rangle|
=\sum_{i=1}^{p}(|\langle \bm{A}_i, \hat{\bm{X}}^r\rangle|-|\langle \bm{A}_i, \bm{X}_0\rangle|)|\langle \bm{A}_{i},\hat{\bm{X}}^r\rangle|.
\]
Hence we obtain that
\begin{align*}
\sqrt{p}\|\bm{\eta}\|_2\|\hat{\bm{X}}^r\|_F\overset{(a)}\lesssim \left|\sum_{i=1}^{p}\eta_i|\langle \bm{A}_{i},\hat{\bm{X}}^r\rangle|\right|
=&\sum_{i=1}^{p}\left ||\langle \bm{A}_i, \hat{\bm{X}}^r\rangle|-|\langle \bm{A}_i, \bm{X}_0\rangle|\right||\langle \bm{A}_{i},\hat{\bm{X}}^r\rangle|\\
{\leq}&\sum_{i=1}^{p}\min\{|\langle \bm{A}_i, \hat{\bm{X}}^r-\bm{X}_0\rangle|,|\langle \bm{A}_i, \hat{\bm{X}}^r+\bm{X}_0\rangle|\}|\langle \bm{A}_{i},\hat{\bm{X}}^r\rangle|\\
\leq&\min\{\|\mathcal{A}(\hat{\bm{X}}^r-\bm{X}_0)\|_2, \|\mathcal{A}(\hat{\bm{X}}^r+\bm{X}_0)\|_2\} \|\mathcal{A}(\hat{\bm{X}}^r)\|_2\\
\overset{(b)}\lesssim & p \min \{\|\hat{\bm{X}}^r-\bm{X}_0\|_F, \|\hat{\bm{X}}^r+\bm{X}_0\|_F\}\|\hat{\bm{X}}^r\|_F,
\end{align*}
holds with probability at least $1-5\exp(-cp)$, where $(a)$ is derived from Lemma \ref{lem:unbound} (II), and $(b)$ is derived from Lemma \ref{spaceSRIP}, respectively.
 Therefore, we have
$$\min\{\|\hat{\bm{X}}^r-\bm{X}_0\|_F, \|\hat{\bm{X}}^r+\bm{X}_0\|_F\} \gtrsim \frac{\|\bm{\eta}\|_2}{\sqrt{p}},$$
which leads to the conclusion.

It remains to prove the claim about the solution to  \eqref{pro:squarankv}.
Suppose there exists a set $\{\tilde{\bm{v}}_i,\tilde{\bm{u}}_i\}_{i\in[r]}$ such that
$$\Big\|\Big|\mathcal{A}\Big(\sum_{i=1}^{r}\tilde{\bm{u}}_i\tilde{\bm{v}}_i^{\top}\Big)\Big|-\bm{y}\Big\|_2^2
<\Big\|\Big|\mathcal{A}\Big(\sum_{i=1}^{r}\hat{\bm{u}}_i\hat{\bm{v}}_i^{\top}\Big)\Big|-\bm{y}\Big\|_2^2.$$
Then, it follows that
$$\||\mathcal{A}(\tilde{\bm{X}})|-\bm{y}\|_2^2<\||\mathcal{A}(\hat{\bm{X}}^r)|-\bm{y}\|_2^2,$$
where $\tilde{\bm{X}}=\sum_{i=1}^r\tilde{\bm{u}}_i\tilde{\bm{v}}_i^{\top}$.
This leads to a contradiction with the assumption that $\hat{\bm{X}}^r$ is the solution to \eqref{pro:squarank}.
\end{proof}

\subsection*{C. Proof of Theorem \ref{thm:squaranks}}
\begin{proof}
Given that $\hat{\bm{X}}^r$ is a solution to \eqref{pro:squarank}, we have
\begin{align}\label{e:rankmin}
\||\mathcal{A}(\hat{\bm{X}}^r)|-\bm{y}\|_2^2\leq \||\mathcal{A}(\bm{X}_0)|-\bm{y}\|_2^2.
\end{align}
Substituting  $\bm{y}=|\mathcal{A}(\bm{X}_0)|+\bm{\eta}$ into \eqref{e:rankmin},  we have
\begin{align}\label{e:ineq1}
\sum_{i=1}^p(|\langle \bm{A}_i, \hat{\bm{X}}^r\rangle|-|\langle \bm{A}_i, \bm{X}_0\rangle|)^2
\leq\sum_{i=1}^p 2\eta_i(|\langle \bm{A}_i, \hat{\bm{X}}^r\rangle|-|\langle \bm{A}_i, \bm{X}_0\rangle|).
\end{align}
Let's consider the left-hand side of \eqref{e:ineq1}. With probability at least $1-e^{-cp}$, we have
\begin{align}\label{e:left}
\sum_{i=1}^p(|\langle \bm{A}_i, \hat{\bm{X}}^r\rangle|-|\langle \bm{A}_i, \bm{X}_0\rangle|)^2
\overset{(a)}\geq \sum_{i\in T_1\cup T_2}\left|\langle \bm{A}_i, \hat{\bm{X}}^r-\bm{X}_0\rangle\right|^2 \overset{(b)}\geq p\theta_{-}\|\hat{\bm{X}}^r-\bm{X}_0\|_F^2.
\end{align}
The sets $T_1$ and $T_2$ are defined as in Lemma \ref{lemma3.1}, by setting $\bm{X}:=\hat{\bm{X}}^r$ as done in the proof of Theorem \ref{thm:squarank}.
  By Remark \ref{remark3.2}, we can, without loss of generality, again assume $\sharp(T_1\cup T_2)\geq \frac p 2$.
  Furthermore, since $\mathrm{rank}(\bm{X}_0)\leq r$ and $\mathrm{rank}(\hat{\bm{X}}^r)\leq r$, we have
  \[
  \frac{\hat{\bm{X}}^r-\bm{X}_0}{\|\hat{\bm{X}}^r-\bm{X}_0\|_F} \in U_{2r}:=\{\bm{H}\in \mathbb{R}^{n\times m}:\|\bm{H}\|_F=1, \mathrm{rank}(\bm{H})\leq 2r\} .
  \]
 The  inequalities $(a)$ and $(b)$
 follow from $T_1\cup T_2\subset[p]$ and Lemma \ref{spaceSRIP}, respectively.

We next turn to  the right-hand side of the inequality \eqref{e:ineq1}.  With probability at least $1- 2e^{-cp}-4e^{-(m+n+1)r\log p}$,  we have
\begin{align}\label{eq:shang1}
\sum_{i=1}^p 2\eta_i(|\langle \bm{A}_i, \hat{\bm{X}}^r\rangle|-|\langle \bm{A}_i, \bm{X}_0\rangle|)
\leq &\Big| \sum_{i=1}^p 2\eta_i|\langle \bm{A}_i, \hat{\bm{X}}^r\rangle|\Big|+ \Big|\sum_{i=1}^p 2\eta_i|\langle \bm{A}_i, \bm{X}_0\rangle|\Big|\nonumber\\
\overset{(a)}\lesssim  &\Big(\sqrt{(m+n+1)r\log p}\|\bm{\eta}\|_2+\Big|\sum_{i=1}^p\eta_i\Big|\Big)(\|\hat{\bm{X}}^r\|_F+\|\bm{X}_0\|_F)
\end{align}
where the inequality $(a)$ is based on Lemma \ref{lem:unbound} (I).
Substituting \eqref{eq:shang1} and \eqref{e:left} into \eqref{e:ineq1}, we obtain
\begin{align}\label{e:ineq3}
p\theta_{-}\|\hat{\bm{X}}^r-\bm{X}_0\|_F^2 \leq C\Big(\sqrt{(m+n+1)r\log p}\|\bm{\eta}\|_2+\Big|\sum_{i=1}^p\eta_i\Big|\Big)(\|\hat{\bm{X}}^r\|_F+\|\bm{X}_0\|_F),
\end{align}
where $C>0$ is some absolute constant.
Thus, we have
\begin{align}\label{e:ineq2}
(\|\hat{\bm{X}}^r\|_F-\|\bm{X}_0\|_F)^2 \leq \frac{C}{p\theta_{-}} \Big(\sqrt{(m+n+1)r\log p}\|\bm{\eta}\|_2+\Big|\sum_{i=1}^p\eta_i\Big|\Big)(\|\hat{\bm{X}}^r\|_F+\|\bm{X}_0\|_F).
\end{align}
Note that the inequality \( a \leq 3b + l \) holds under the condition \( (a - b)^2 \leq l(a + b) \) for \( a, b, l \geq 0 \), as found in \cite[(4.9)]{xia2024performance}. Applying this to  \eqref{e:ineq2} with
$a:=\|\hat{\bm{X}}^r\|_F$ and $b:=\|\bm{X}_0\|_F$, we have
\begin{equation}\label{eq:budeng0}
\|\hat{\bm{X}}^r\|_F\leq \frac{C}{p\theta_{-}} \Big(\sqrt{(m+n+1)r\log p}\|\bm{\eta}\|_2+\Big|\sum_{i=1}^p\eta_i\Big|\Big)+3\|\bm{X}_0\|_F.
\end{equation}
Substituting (\ref{eq:budeng0}) into \eqref{e:ineq3},  we obtain that
\begin{align*}
\|\hat{\bm{X}}^r-\bm{X}_0\|_F^2 \leq
\frac{C}{p\theta_{-}}\Big(\sqrt{(m+n+1)r\log p}\|\bm{\eta}\|_2+\Big|\sum_{i=1}^p\eta_i\Big|\Big)\Big( \frac{C}{p\theta_{-}} \Big(\sqrt{(m+n+1)r\log p}\|\bm{\eta}\|_2+\Big|\sum_{i=1}^p\eta_i\Big|\Big)+4\|\bm{X}_0\|_F\Big).
\end{align*}
Therefore, by substituting condition \eqref{e:noise} into the preceding inequality, we conclude that
\begin{align*}
\|\hat{\bm{X}}^r-\bm{X}_0\|_F \lesssim \max\Big\{ \|\bm{X}_0\|_F,\sqrt{\frac{(m+n+1)r\log p}{p}}\Big\}\sqrt{\frac{(m+n+1)r\log p}{p}}.
\end{align*}
This holds with probability at least $1- 3\exp(-cp)-4\exp(-(m+n+1)r\log p)$, where $c>0$ is an absolute constant.
For the case where $\sharp(T_1\cup T_2)\geq \frac p 2$, following a similar derivation as above, we obtain that
\begin{align*}
\|\hat{\bm{X}}^r+\bm{X}_0\|_F \lesssim \max\Big\{ \|\bm{X}_0\|_F,\sqrt{\frac{(m+n+1)r\log p}{p}}\Big\}\sqrt{\frac{(m+n+1)r\log p}{p}}
\end{align*}
holds with probability at least $1- 3\exp(-cp)-4\exp(-(m+n+1)r\log p)$.
Combining the results from both cases immediately yields the desired conclusion.
\end{proof}

\subsection{Proofs for Nuclear Norm-based Models}
The proofs of several theorems for nuclear norm-based models
rely heavily on key properties of the nuclear norm and specific matrix decomposition techniques. These fundamental concepts are encapsulated in the following lemmas, which are useful in our subsequent analyses.
\begin{lemma} \label{lemma 3.6}\cite[Lemma 3.4]{recht2010guaranteed}
 Let $\bm{B}, \bm{C} \in \mathbb{R}^{n\times m}$. Then there exist
matrices $\bm{C}_1, \bm{C}_2 \in \mathbb{R}^{n\times m}$  such that
$$ \bm{C} = \bm{C}_1 + \bm{C}_2,\ \ \mathrm{rank}(\bm{C}_1)\leq 2\cdot \mathrm{rank}(\bm{B}),\ \
\bm{B}\bm{C}_2^{\top}=\bm{0}, \ \ \bm{B}^{\top}\bm{C}_2=\bm{0},\ \
 \langle \bm{C}_1,\bm{C}_2\rangle=0.$$
\end{lemma}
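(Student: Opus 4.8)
The plan is to construct the splitting explicitly from the singular value decomposition of $\bm{B}$. Write $\bm{B}=\bm{U}\bm{\Sigma}\bm{V}^{\top}$ with $\bm{U}\in\mathbb{R}^{n\times s}$ and $\bm{V}\in\mathbb{R}^{m\times s}$ having orthonormal columns and $s=\mathrm{rank}(\bm{B})$, and introduce the orthogonal projections $P_{\bm{U}}:=\bm{U}\bm{U}^{\top}$ and $P_{\bm{V}}:=\bm{V}\bm{V}^{\top}$ onto the column space and the row space of $\bm{B}$, respectively. The decisive choice is to set $\bm{C}_2:=(\bm{I}_n-P_{\bm{U}})\,\bm{C}\,(\bm{I}_m-P_{\bm{V}})$ and $\bm{C}_1:=\bm{C}-\bm{C}_2$, so that $\bm{C}=\bm{C}_1+\bm{C}_2$ holds by construction. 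Intuitively $\bm{C}_2$ is the component of $\bm{C}$ that lives entirely off both the column and row spaces of $\bm{B}$, which is exactly what will force the two annihilation identities.

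For the rank bound I would first expand $\bm{C}_1=P_{\bm{U}}\bm{C}+(\bm{I}_n-P_{\bm{U}})\bm{C}P_{\bm{V}}$, obtained directly by distributing the definition of $\bm{C}_2$. The first summand has column space inside the range of $\bm{U}$, hence rank at most $s$, while the second has row space inside the range of $\bm{V}$, hence also rank at most $s$. Subadditivity of rank then yields $\mathrm{rank}(\bm{C}_1)\le 2s=2\,\mathrm{rank}(\bm{B})$. The nontrivial point here is to recognize that the complementary piece $\bm{C}_1$ decomposes into two rank-$s$ blocks rather than a single rank-$2s$ block.

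The remaining three properties are verified by short algebra using the identities $\bm{B}(\bm{I}_m-P_{\bm{V}})=\bm{0}$ and $\bm{B}^{\top}(\bm{I}_n-P_{\bm{U}})=\bm{0}$, which follow from $\bm{V}^{\top}\bm{V}=\bm{I}_s$ and $\bm{U}^{\top}\bm{U}=\bm{I}_s$ (equivalently $\bm{B}P_{\bm{V}}=\bm{B}$ and $\bm{B}^{\top}P_{\bm{U}}=\bm{B}^{\top}$). Applying the first identity to $\bm{C}_2^{\top}=(\bm{I}_m-P_{\bm{V}})\bm{C}^{\top}(\bm{I}_n-P_{\bm{U}})$ gives $\bm{B}\bm{C}_2^{\top}=\bm{0}$, and applying the second to $\bm{C}_2$ gives $\bm{B}^{\top}\bm{C}_2=\bm{0}$. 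For the orthogonality $\langle\bm{C}_1,\bm{C}_2\rangle=\mathrm{Tr}(\bm{C}_1^{\top}\bm{C}_2)$, I would substitute the expansion of $\bm{C}_1$ and use the idempotency relations $P_{\bm{U}}(\bm{I}_n-P_{\bm{U}})=\bm{0}$ and $(\bm{I}_n-P_{\bm{U}})^2=\bm{I}_n-P_{\bm{U}}$: the contribution from $P_{\bm{U}}\bm{C}$ vanishes outright, and after cycling the trace the surviving term reduces to $\mathrm{Tr}\big((\bm{I}_m-P_{\bm{V}})P_{\bm{V}}\,\bm{C}^{\top}(\bm{I}_n-P_{\bm{U}})\bm{C}\big)$, which is $0$ because $(\bm{I}_m-P_{\bm{V}})P_{\bm{V}}=\bm{0}$.

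I do not expect a genuine obstacle: every computation is routine once the splitting is fixed, and the single real idea is the choice $\bm{C}_2=(\bm{I}_n-P_{\bm{U}})\bm{C}(\bm{I}_m-P_{\bm{V}})$. The one place demanding care is bookkeeping of sides, namely keeping the column-space projection $P_{\bm{U}}$ acting on the left and the row-space projection $P_{\bm{V}}$ on the right, so that the annihilation identities land on the correct factor of $\bm{B}$ and the trace cancellation goes through cleanly.
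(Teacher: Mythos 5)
Your proof is correct, and it is essentially the same argument as the one behind the paper's citation: the paper states this lemma without proof, quoting \cite[Lemma 3.4]{recht2010guaranteed}, whose proof writes $\bm{B}$ in a full SVD basis and takes $\bm{C}_2$ to be the bottom-right block of $\bm{C}$ in that basis --- exactly your $\bm{C}_2=(\bm{I}_n-P_{\bm{U}})\bm{C}(\bm{I}_m-P_{\bm{V}})$ expressed with projectors instead of block matrices. All of your verifications (the expansion $\bm{C}_1=P_{\bm{U}}\bm{C}+(\bm{I}_n-P_{\bm{U}})\bm{C}P_{\bm{V}}$ giving rank at most $2\,\mathrm{rank}(\bm{B})$, the annihilation identities, and the trace cancellation) are sound.
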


\begin{lemma}\label{lemma3.7}\cite[Lemma 2.3]{recht2010guaranteed}
Let $\bm{B}, \bm{C} \in \mathbb{R}^{n\times m}$.
If $\bm{B}\bm{C}^{\top}=\bm{0}$ and $\bm{C}^{\top}\bm{B}=\bm{0}$,
then $\|\bm{B} +\bm{C}\|_{*} = \|\bm{B}\|_{*}+ \|\bm{C}\|_{*}$.
\end{lemma}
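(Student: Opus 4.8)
The plan is to prove the identity directly from the singular value decomposition (SVD), showing that the two orthogonality hypotheses force the singular systems of $\bm{B}$ and $\bm{C}$ to occupy mutually orthogonal subspaces, so that the nonzero singular values of $\bm{B}+\bm{C}$ are exactly the union (with multiplicity) of those of $\bm{B}$ and of $\bm{C}$. Summing them then yields the claimed additivity of the nuclear norm.

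First I would write the reduced SVDs $\bm{B}=\bm{U}_B\bm{\Sigma}_B\bm{V}_B^{\top}$ and $\bm{C}=\bm{U}_C\bm{\Sigma}_C\bm{V}_C^{\top}$, where $\bm{U}_B\in\mathbb{R}^{n\times r_B}$ and $\bm{V}_B\in\mathbb{R}^{m\times r_B}$ have orthonormal columns, $\bm{\Sigma}_B\in\mathbb{R}^{r_B\times r_B}$ is diagonal with strictly positive entries, $r_B=\mathrm{rank}(\bm{B})$, and analogously for $\bm{C}$. The next step is to translate the hypotheses into statements about the singular vectors. Substituting the SVDs into $\bm{C}^{\top}\bm{B}=\bm{0}$ gives $\bm{V}_C\bm{\Sigma}_C(\bm{U}_C^{\top}\bm{U}_B)\bm{\Sigma}_B\bm{V}_B^{\top}=\bm{0}$; multiplying on the left by $\bm{V}_C^{\top}$ and on the right by $\bm{V}_B$ and using $\bm{V}_C^{\top}\bm{V}_C=\bm{I}$, $\bm{V}_B^{\top}\bm{V}_B=\bm{I}$ reduces this to $\bm{\Sigma}_C(\bm{U}_C^{\top}\bm{U}_B)\bm{\Sigma}_B=\bm{0}$. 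Because $\bm{\Sigma}_B,\bm{\Sigma}_C$ are invertible, this forces $\bm{U}_B^{\top}\bm{U}_C=\bm{0}$. An identical computation applied to $\bm{B}\bm{C}^{\top}=\bm{0}$ yields $\bm{V}_B^{\top}\bm{V}_C=\bm{0}$.

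I would then assemble the SVD of the sum. Setting $\bm{U}:=[\bm{U}_B\ \bm{U}_C]$ and $\bm{V}:=[\bm{V}_B\ \bm{V}_C]$, the orthonormality within each block together with the cross-orthogonality just derived shows that both $\bm{U}$ and $\bm{V}$ have orthonormal columns, and
\[
\bm{B}+\bm{C}=\bm{U}\begin{bmatrix}\bm{\Sigma}_B & \bm{0}\\ \bm{0} & \bm{\Sigma}_C\end{bmatrix}\bm{V}^{\top}
\]
is therefore a valid (reduced) SVD of $\bm{B}+\bm{C}$, since the middle factor is diagonal with positive entries. Consequently the multiset of nonzero singular values of $\bm{B}+\bm{C}$ is precisely the disjoint union of those of $\bm{B}$ and $\bm{C}$, and summing gives $\|\bm{B}+\bm{C}\|_{*}=\|\bm{B}\|_{*}+\|\bm{C}\|_{*}$.

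The only delicate point is the deduction of $\bm{U}_B^{\top}\bm{U}_C=\bm{0}$ and $\bm{V}_B^{\top}\bm{V}_C=\bm{0}$ from the two matrix equations: one must cancel the orthonormal outer factors and the invertible diagonal factors in the correct order, which is exactly where the strict positivity of the singular values, i.e. the use of \emph{reduced} rather than full SVDs, is essential. Everything after that is routine block-matrix bookkeeping and reading off the singular values.
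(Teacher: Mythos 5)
Your proof is correct and takes essentially the same route as the source: the paper states this lemma without proof, citing \cite[Lemma 2.3]{recht2010guaranteed}, and the argument there is precisely yours---the reduced SVDs together with the two orthogonality hypotheses force $\bm{U}_B^{\top}\bm{U}_C=\bm{0}$ and $\bm{V}_B^{\top}\bm{V}_C=\bm{0}$, so the block concatenation is a valid SVD of $\bm{B}+\bm{C}$ and the nuclear norms add. The only (trivial) caveat is the degenerate case $\bm{B}=\bm{0}$ or $\bm{C}=\bm{0}$, where the diagonal factors are empty rather than invertible, but there the identity holds immediately.
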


\subsection*{A. Proof of Theorem \ref{thm:rankmin}}
\begin{proof}
Given  $\bm{y}=|\mathcal{A}(\bm{X}_0)|+\bm{\eta}$ and  $\hat{\bm{X}}^{\ell_2}$ is a solution to \eqref{leastproblemnu} with $R=\|\bm{X}_0\|_{*}$, we have
\begin{align}\label{2nu}
\||\mathcal{A}(\hat{\bm{X}}^{\ell_2})|-\bm{y}\|_2^2\leq \||\mathcal{A}(\bm{X}_0)|-\bm{y}\|_2^2=\|\bm{\eta}\|_2^2\ \ \ \   \mathrm{and}\ \ \ \
\|\hat{\bm{X}}^{\ell_2}\|_{*}\leq \|\bm{X}_0\|_{*}.
\end{align}

Next, we apply Lemma \ref{lemma3.1} by setting $\bm{X}:=\hat{\bm{X}}^{\ell_2}$. This implies $\bm{H}^{-}=\hat{\bm{X}}^{\ell_2}-\bm{X}_0$ and $\bm{H}^{+}=\hat{\bm{X}}^{\ell_2}+\bm{X}_0$, with the sets $T_1, T_2, T_3, T_4$ defined as specified in this lemma.
As noted in Remark \ref{remark3.2}, we assume, without loss of generality, that $\sharp(T_1 \cup T_2) \geq \frac{p}{2}$. We claim that the following holds:
  \begin{align}\label{e:setH}
  \frac{\bm{H}^{-}}{2\sqrt{2}\|\bm{H}^{-}\|_F} \in N_{r}^{*}:
=\{\bm{H}\in \mathbb{R}^{n\times m}:\|\bm{H}\|_F\leq 1, \|\bm{H}\|_{*}\leq \sqrt{r}\}.
\end{align}
This claim will be established at the end of the proof.

Substituting \eqref{2nu} into \eqref{imdent1} in Lemma \ref{lemma3.1}, we obtain
\begin{align}\label{e:AT12}
\sum_{i\in T_1\cup T_2} \langle \bm{A}_i, \bm{H}^{-}\rangle^2\leq
\sum_{i\in T_1} 2\eta_i\langle \bm{A}_i, \bm{H}^{-}\rangle-\sum_{i\in T_2}2\eta_i\langle \bm{A}_i, \bm{H}^{-}\rangle+\sum_{i\in (T_1\cup T_2)^c}\eta_i^2.
\end{align}
For the left-hand side of \eqref{e:AT12}, by applying Lemma \ref{rmk1} and \eqref{e:setH}, the inequality
\begin{align}\label{e:add1}
 \sum_{i\in T_1\cup T_2} \langle \bm{A}_i, \bm{H}^{-}\rangle^2=\|\mathcal{A}_{T_1\cup T_2}(\bm{H}^{-})\|_2^2\geq p\theta_-\|\bm{H}^{-}\|_F^2
 \end{align}
 is satisfied with probability at least $1-\exp(-cp)$, provided $p\gtrsim (m+n)r$.
Next, for the right-hand side of \eqref{e:AT12}, utilizing Lemma \ref{lem:upper1} and \eqref{e:setH}, we derive that
 \begin{align}\label{e:add2}
 \sum_{i\in T_1}\eta_i\langle \bm{A}_i, \bm{H}^{-}\rangle-\sum_{i\in T_2}\eta_i\langle \bm{A}_i, \bm{H}^{-}\rangle
\leq&  \Big|\sum_{i\in T_1}\eta_i\langle \bm{A}_i, \bm{H}^{-}\rangle\Big|+\Big|\sum_{i\in T_2}\eta_i\langle \bm{A}_i, \bm{H}^{-}\rangle\Big|\nonumber\\
\leq& 2\sqrt{2}\|\bm{H}^{-}\|_F\sup_{\bm{H} \in N_r^{*}, T \subset [p]} \sum_{i \in T} \eta_i \langle \bm{A}_i, \bm{H} \rangle\nonumber\\
 \leq& C\sqrt{p} \|\bm{\eta}\|_2\|\bm{H}^{-}\|_F.
 \end{align}
 This holds with probability at least $1-\exp(-cp)$.
Substituting \eqref{e:add1} and \eqref{e:add2} into \eqref{e:AT12},  we obtain that
 $$
 p\theta_-\|\bm{H}^{-}\|_F^2\leq C\sqrt{p}\|\bm{\eta}\|_2\|\bm{H}^{-}\|_F+\sum_{i\in (T_1\cup T_2)^c}\eta_i^2
 $$
holds  with probability at least  $1-2\exp(-cp)$.
This inequality implies
 $$
 \|\bm{H}^{-}\|_F\lesssim \frac{\|\bm{\eta}\|_2}{\sqrt{p}}.
 $$
For the case where $\sharp(T_3 \cup T_4) \geq \frac{p}{2}$, similar to the previous case, using \eqref{imdent1+} in Lemma \ref{lemma3.1}, we can derive that
$$\|\bm{H}^{+}\|_F \lesssim \frac{\|\bm{\eta}\|_2}{\sqrt{p}}$$
with probability at least $1 - 2\exp(-cp)$, where $\bm{H}^{+} = \hat{\bm{X}}^{\ell_2} + \bm{X}_0$.
Combining the results from both cases, we immediately arrive at the desired conclusion.

Now, we proceed to prove \eqref{e:setH}.  For the matrices $\bm{H}^{-} \in \mathbb{R}^{n\times m}$ and $\bm{X}_0\in \mathbb{R}^{n\times m}$, applying Lemma \ref{lemma 3.6},
 there exist matrices $\bm{H}_0\in \mathbb{R}^{n\times m}$ and  $\bm{H}_c\in \mathbb{R}^{n\times m}$  such that
 \begin{align*}
 \bm{H}^{-}=\bm{H}_0+\bm{H}_c, \ \ \mathrm{rank}(\bm{H}_0)\leq 2r, \ \ \bm{X}_0\bm{H}_c^{\top}=\bm{0},\ \
\bm{X}_0^{\top}\bm{H}_c=\bm{0},\ \ \langle \bm{H}_0, \bm{H}_c\rangle=0,
\end{align*}
where $r=\mathrm{rank}(\bm{X}_0)$.
Furthermore, we have
\begin{align*}
 \|\bm{X}_0\|_{*}\overset{(a)}\geq \|\hat{\bm{X}}^{\ell_2}\|_{*}= \|\bm{X}_0+\bm{H}^{-}\|_{*}\overset{(b)}\geq\|\bm{X}_0+\bm{H}_c\|_{*}- \|\bm{H}_0\|_{*}
 \overset{(c)}=\|\bm{X}_0\|_{*}+\|\bm{H}_c\|_{*}-\|\bm{H}_0\|_{*},
 \end{align*}
where the inequality $(a)$ follows from  \eqref{2nu}, the  inequality $(b)$ is due to the triangle inequality on  $\|\cdot\|_{*}$, and the equality $(c)$ is based on Lemma \ref{lemma3.7} with $\bm{X}_0\bm{H}_c^{\top}=\bm{0}$ and
$\bm{X}_0^{\top}\bm{H}_c=\bm{0}$.
From this, we conclude
$$\|\bm{H}_c\|_{*}\leq\|\bm{H}_0\|_{*}.$$
We  then derive the following bound:
$$
\|\bm{H}^{-}\|_{*}\leq  \|\bm{H}_c\|_{*}+\|\bm{H}_0\|_{*}\leq 2\|\bm{H}_0\|_{*}
\overset{(a)}\leq 2\sqrt{2r}\|\bm{H}^-\|_{F},
$$
where the inequality $(a)$ uses $\mathrm{rank}(\bm{H}_0)\leq 2r$ and $\langle \bm{H}_0, \bm{H}_c\rangle=0$, with the latter giving $\|\bm{H}^-\|_{F}^2=\|\bm{H}_0\|_F^2+\|\bm{H}_c\|_F^2$ by Lemma \ref{lemma3.7}.
 Consequently, \eqref{e:setH} is established.
\end{proof}
\subsection*{B. Proof of Theorem \ref{thm:ell2}}
\begin{proof}
For the solution $\hat{\bm{X}}^{\ell_*}$ to  \eqref{leastproblemcon},
we have
\begin{align}\label{eq:bdell2}
\||\mathcal{A}(\hat{\bm{X}}^{\ell_*})|-\bm{y}\|_2\leq \varepsilon \  \ \ \  \mathrm{and} \ \ \ \  \|\hat{\bm{X}}^{\ell_*}\|_{*}\leq \|\bm{X}_0\|_{*}.
\end{align}

When applying Lemma \ref{lemma3.1}, we make the specific assignment $\bm{X} := \hat{\bm{X}}^{\ell_*}$. This substitution then defines $\bm{H}^{-}=\hat{\bm{X}}^{\ell_*}-\bm{X}_0$ and $\bm{H}^{+}=\hat{\bm{X}}^{\ell}+\bm{X}_0$, along with the sets $T_1, T_2, T_3$, and $T_4$, all consistent with the lemma's definitions. In this proof, it is implicitly understood that the symbol $\bm{X}$ (as used in Lemma \ref{lemma3.1}) consistently refers to $\hat{\bm{X}}^{\ell_{*}}$.
In accordance with  Remark \ref{remark3.2},  we assume, without loss of generality, that $\sharp(T_1\cup T_2)\geq \frac p 2$.
 Using (\ref{eq:bdell2})  and
\eqref{imdent1} in Lemma \ref{lemma3.1}, we derive
$$
\sum_{i\in T_1\cup T_2} (\langle \bm{A}_i, \bm{H}^{-}\rangle^2+\eta_i^2)
-\sum_{i\in T_1} 2\eta_i\langle \bm{A}_i, \bm{H}^{-}\rangle+\sum_{i\in T_2}2\eta_i\langle \bm{A}_i, \bm{H}^{-}\rangle\leq \varepsilon^2.
$$
This means
\begin{align}\label{upper}
\sum_{i\in T_1\cup T_2} \langle \bm{A}_i, \bm{H}^{-}\rangle^2\leq
\sum_{i\in T_1} 2\eta_i\langle \bm{A}_i, \bm{H}^{-}\rangle-\sum_{i\in T_2}2\eta_i\langle \bm{A}_i, \bm{H}^{-}\rangle+\varepsilon^2.
\end{align}

Following a methodology analogous to that employed for establishing inequalities \eqref{e:add1} and \eqref{e:add2}, we can show that the following two probabilistic bounds hold.
First, with probability at least $1-\exp(-cp)$, provided $p\gtrsim (m+n)r$, we have
\begin{align}\label{e:left1}
 \sum_{i\in T_1\cup T_2} \langle \bm{A}_i, \bm{H}^{-}\rangle^2\geq p\theta_-\|\bm{H}^{-}\|_F^2.
 \end{align}
Second, the inequality
 \begin{align}\label{e:right}
 \sum_{i\in T_1}\eta_i\langle \bm{A}_i, \bm{H}^{-}\rangle-\sum_{i\in T_2}\eta_i\langle \bm{A}_i, \bm{H}^{-}\rangle
 \leq C\sqrt{p} \|\bm{\eta}\|_2\|\bm{H}^{-}\|_F
\end{align}
also holds with probability at least $1-\exp(-cp)$.

 Thus, by substituting \eqref{e:left1}, \eqref{e:right} and $\|\bm{\eta}\|_2\leq \varepsilon$ into
 \eqref{upper},   we obtain
 $$
 p\theta_-\|\bm{H}^{-}\|_F^2\leq C\sqrt{p} \|\bm{H}^{-}\|_F\varepsilon+\varepsilon^2
 $$
 with probability at least  $1-2\exp(-cp)$.
This implies
 $$\|\bm{H}^{-}\|_F\lesssim \frac{\varepsilon}{\sqrt{p}}.$$

For the case where $\sharp(T_3 \cup T_4) \geq \frac{p}{2}$, the proof of $\|\bm{H}^{+}\|_F \lesssim \frac{\varepsilon}{\sqrt{p}}$ with $\bm{H}^{+} = \hat{\bm{X}}^{\ell_2} + \bm{X}_0$  follows a similar approach. 
Based on \eqref{imdent1+} in Lemma \ref{lemma3.1}, we can derive that
$$ \|\bm{H}^{+}\|_F \lesssim \frac{\varepsilon}{\sqrt{p}} $$
with probability at least $1 - 2\exp(-cp)$, where $\bm{H}^{+} = \hat{\bm{X}}^{\ell_2} + \bm{X}_0$.
Integrating the outcomes from both cases, we  readily deduce the intended conclusion.
\end{proof}

\subsection*{C. \textit{Proof of Theorem \ref{thm:unell2}}}

\begin{proof}
Given that $\hat{\bm{X}}^{u} \in \mathbb{R}^{n\times m}$ is the solution of \eqref{leastproblem} and $\bm{y}=|\mathcal{A}(\bm{X}_0)|+\bm{\eta}$, we have
$$
\||\mathcal{A}(\hat{\bm{X}}^{u})| - \bm{y}\|_2^2 + \lambda \|\hat{\bm{X}}^{u}\|_{*} \leq \||\mathcal{A}(\bm{X}_0)| - \bm{y}\|_2^2 + \lambda \|\bm{X}_0\|_{*} = \|\bm{\eta}\|_2^2 + \lambda \|\bm{X}_0\|_{*},
$$
which implies that
\begin{equation}\label{eq:lamdaeq}
\||\mathcal{A}(\hat{\bm{X}}^{u})|-\bm{y}\|_2^2\leq
\|\bm{\eta}\|_2^2+\lambda(\|\bm{X}_0\|_{*}-\|\hat{\bm{X}}^{u}\|_{*}).
\end{equation}
To apply Lemma \ref{lemma3.1}, we set $\bm{X} := \hat{\bm{X}}^u$. Subsequently, $\bm{H}^{-} = \hat{\bm{X}}^u - \bm{X}_0$, $\bm{H}^{+} = \hat{\bm{X}}^u + \bm{X}_0$, and the sets $T_1, T_2, T_3, T_4$ are defined as specified in the lemma.
According to Remark \ref{remark3.2},  without loss of generality, we assume that $\sharp(T_1\cup T_2)\geq \frac p 2$.
Substituting  \eqref{eq:lamdaeq} into \eqref{imdent1} in Lemma \ref{lemma3.1}, we have that
\begin{align}\label{e:t12}
\|\mathcal{A}_{T_1\cup T_2} (\bm{H}^{-})\|^2_2\leq&\sum_{i\in T_1} 2\eta_i\langle \bm{A}_i, \bm{H}^{-}\rangle-
\sum_{i\in T_2} 2\eta_i\langle \bm{A}_i, \bm{H}^{-}\rangle
+\sum_{i\in (T_1\cup T_2)^c}
\eta_{i}^2+\lambda(\|\bm{X}_0\|_{*}-\|\hat{\bm{X}}^{u}\|_{*})\nonumber\\
=&
2\langle\mathcal{A}(\bm{H}^{-}),\bm{\eta}_{T_1}\rangle
 -2\langle\mathcal{A}(\bm{H}^{-}),\bm{\eta}_{T_2}\rangle+\|\bm{\eta}_{(T_1\cup T_2)^c}\|_2^2+\lambda(\|\bm{X}_0\|_{*}-\|\hat{\bm{X}}^{u}\|_{*})\nonumber\\
 =& 2\langle\bm{H}^{-},\mathcal{A}^{*}(\bm{\eta}_{T_1})-\mathcal{A}^{*}(\bm{\eta}_{T_2})\rangle+\|\bm{\eta}_{(T_1\cup T_2)^c}\|_2^2
 +\lambda(\|\bm{X}_0\|_{*}-\|\bm{H}^{-}+\bm{X}_0\|_{*}),
\end{align}
where
$\mathcal{A}^{*}:\mathbb{R}^{p}\rightarrow \mathbb{R}^{n\times m}$ is the adjoint operator of the liner map $\mathcal{A}$
such that
$$\mathcal{A}^{*}(\bm{b})=\sum_{i=1}^pb_i\bm{A}_i$$
for all vectors $\bm{b}\in \mathbb{R}^{p}$.
For the index set $T\subset [p]$, \( \bm{\eta}_T \in \mathbb{R}^p \) denotes a vector equal to $\bm{\eta}$ on the index set $T$ and zero elsewhere.
Furthermore, applying Lemma \ref{lemma 3.6} with $\bm{B}:=\bm{X}_0$ and $\bm{C}:=\bm{H}^{-}$,  we obtain the decomposition
 $\bm{H}^{-}=\bm{H}_0+\bm{H}_c$ with $\bm{H}_0, \bm{H}_c \in \mathbb{R}^{n\times m}$
 such that
 \begin{align}\label{-x0}
 \mathrm{rank}(\bm{H}_0)\leq 2r, \ \ \bm{X}_0\bm{H}_c^{\top}=\bm{0},\ \
\bm{X}_0^{\top}\bm{H}_c=\bm{0},\ \ \langle \bm{H}_0,\bm{H}_c\rangle=0,
\end{align}
where $r=\mathrm{rank}(\bm{X}_0)$. A simple observation is that
\begin{align}\label{e:Hnuclear}
 \|\bm{H}^{-}\|_{*}\leq \|\bm{H}_0\|_{*}+\|\bm{H}_{c}\|_{*},  \ \ \ \
 \|\bm{H}_{0}\|_{*}\leq \sqrt{2r}\|\bm{H}_{0}\|_F \leq \sqrt{2r}\|\bm{H}^{-}\|_F.
\end{align}
For the first  term of the right side of \eqref{e:t12},  we claim that
\begin{align}\label{innerproduce2}
\langle\bm{H}^{-},\mathcal{A}^{*}(\bm{\eta}_{T_1})-\mathcal{A}^{*}(\bm{\eta}_{T_2})\rangle
\leq \sqrt{2p}(\|\bm{H}_0\|_{*}+\|\bm{H}_{c}\|_{*})\|\bm{\eta}\|_2
\end{align}
holds with probability at least $1-\exp(-cp)$,  and defer its argument to the end of the proof.
For the third term of the right side of \eqref{e:t12},   we deduce
\begin{align}\label{e:inequalitynuclear}
\|\bm{X}_0\|_{*}-\|\bm{H}^{-}+\bm{X}_0\|_{*}
\leq \|\bm{X}_0\|_{*}-(\|\bm{H}_{c}+\bm{X}_0\|_{*}
-\|\bm{H}_{0}\|_{*})
=\|\bm{H}_{0}\|_{*}-\|\bm{H}_{c}\|_{*}.
\end{align}
where the inequality follows from the triangle inequality on $\|\cdot\|_{*}$,
the equality is based on  Lemma \ref{lemma3.7} and the facts $\bm{X}_0\bm{H}_c^{\top}=\bm{0}$ and $\bm{X}_0^{\top}\bm{H}_c=\bm{0}$ in \eqref{-x0}.
Substituting  \eqref{innerproduce2} and \eqref{e:inequalitynuclear} into \eqref{e:t12},  we have
\begin{align*}
\|\mathcal{A}_{T_1\cup T_2} (\bm{H}^{-})\|^2_2+(\lambda-2\sqrt{2p}\|\bm{\eta}\|_2)\|\bm{H}_{c}\|_{*}
 \leq(2\sqrt{2p}\|\bm{\eta}\|_2 +\lambda)\|\bm{H}_0\|_{*}+\|\bm{\eta}_{(T_1\cup T_2)^c}\|_2^2.
\end{align*}
This implies for $\lambda>2\sqrt{2p}\|\bm{\eta}\|_2$ that
\begin{align}\label{e:t122a}
(\lambda-2\sqrt{2p}\|\bm{\eta}\|_2)\|\bm{H}_c\|_{*}
 \leq(2\sqrt{2p}\|\bm{\eta}\|_2 +\lambda)\|\bm{H}_{0}\|_{*}+\|\bm{\eta}_{(T_1\cup T_2)^c}\|_2^2
\end{align}
and
\begin{align}\label{e:t122}
\|\mathcal{A}_{T_1\cup T_2} (\bm{H})\|^2_2
& \leq(2\sqrt{2p}\|\bm{\eta}\|_2 +\lambda)\|\bm{H}_{0}\|_{*}+\|\bm{\eta}_{(T_1\cup T_2)^c}\|_2^2\nonumber\\
 &\overset{(a)}\leq\sqrt{2r}(2\sqrt{2p}\|\bm{\eta}\|_2 +\lambda)\|\bm{H}^{-}\|_{F}+\|\bm{\eta}_{(T_1\cup T_2)^c}\|_2^2,
\end{align}
where $(a)$ follows from \eqref{e:Hnuclear}.
For the inequality \eqref{e:t122a},  using the facts \eqref{e:Hnuclear},
we derive that
\begin{align}\label{e:h}
\|\bm{H}^{-}\|_{*}&\leq
\frac{(2\sqrt{2p}\|\bm{\eta}\|_2 +\lambda)\|\bm{H}_{0}\|_{*}+\|\bm{\eta}_{(T_1\cup T_2)^c}\|_2^2}{\lambda-2\sqrt{2p}\|\bm{\eta}\|_2}+\|\bm{H}_{0}\|_{*}\nonumber\\
 &\leq \frac{2\lambda\sqrt{2r}\|\bm{H}^{-}\|_{F}+\|\bm{\eta}_{(T_1\cup T_2)^c}\|_2^2}{\lambda-2\sqrt{2p}\|\bm{\eta}\|_2}.
\end{align}
Setting the matrix \(\hat{\bm{H}} := \frac{\lambda - 2\sqrt{2p}\|\bm{\eta}\|_2}{2\sqrt{2}\lambda\|\bm{H}^{-}\|_{F} + \frac{\|\bm{\eta}_{(T_1\cup T_2)^c}\|_2^2}{\sqrt{r}}} \bm{H}^{-}\), it follows from \eqref{e:h} that \(\|\hat{\bm{H}}\|_{*} \leq \sqrt{r}\) and \(\|\hat{\bm{H}}\|_F \leq 1\), meaning \(\hat{\bm{H}} \in N_r^{*}:=\{\bm{H}\in \mathbb{R}^{n\times m}:\|\bm{H}\|_F\leq 1, \|\bm{H}\|_{*}\leq \sqrt{r}\}\).
Therefore, from Lemma \ref{rmk1}, it follows that
$$\|\mathcal{A}_{T_1\cup T_2} (\hat{\bm{\bm{H}}})\|^2_2\geq p\theta_{-}\|\hat{\bm{H}}\|_F^2
$$
with probability at least $1-\exp(-cp)$.
This implies that
\begin{align}\label{e:inequalityright}
\|\mathcal{A}_{T_1\cup T_2} (\bm{\bm{H}}^{-})\|^2_2\geq p\theta_{-}\|\bm{H}^{-}\|_F^2.
\end{align}
Substituting  \eqref{e:inequalityright} into \eqref{e:t122},  we have
$$p\theta_{-}\|\bm{H}^{-}\|_F^2\leq\sqrt{2r}(2\sqrt{2p}\|\bm{\eta}\|_2 +\lambda)\|\bm{H}^{-}\|_{F}+\|\bm{\eta}_{(T_1\cup T_2)^c}\|_2^2.
$$
Therefore, we obtain the desired result
$$\|\bm{H}^{-}\|_F \lesssim \frac{\lambda \sqrt{r}}{p} +\frac{\|\bm{\eta}\|_2}{\sqrt{p}}
$$
with probability  at least $1-2\exp(-cp)$.

When \(\sharp(T_3 \cup T_4) \geq \frac{p}{2}\), the proof for the bound \(\|\bm{H}^{+}\|_F \lesssim \frac{\lambda \sqrt{r}}{p} + \frac{|\bm{\eta}|_2}{\sqrt{p}}\) (with \(\bm{H}^{+} = \hat{\bm{X}}^{u} + \bm{X}_0\)) is analogous to the case where \(\sharp(T_1 \cup T_2) \geq \frac{p}{2}\). The only modifications required are to use \(-\bm{X}_0\) instead of \(\bm{X}_0\) in \eqref{-x0} and to use \eqref{imdent1+} in place of \eqref{imdent1}. Combining the results from both cases directly yields the desired conclusion.

It remains  to prove \eqref{innerproduce2}.
Based on the definition of $\mathcal{A}^{*}$ and the fact that $T_1\cap T_2=\varnothing$ as stated in Remark \ref{remark3.2}, there is
\begin{align}\label{e:H-A}
\langle\bm{H}^{-},\mathcal{A}^{*}(\bm{\eta}_{T_1})-\mathcal{A}^{*}(\bm{\eta}_{T_2})\rangle
= &\langle\bm{H}^{-},\mathcal{A}^{*}(\bm{\eta}_{T_1}-\bm{\eta}_{T_2})\rangle
\leq \|\bm{H}^{-}\|_{*}\|\mathcal{A}^{*}(\bm{\eta}_{T_1}-\bm{\eta}_{T_2})\|,
\end{align}
where the inequality is from that the dual norm of the operator norm $\|\cdot\|$ is the nuclear norm $\|\cdot\|_{*}$ \cite[Proposition 2.1]{recht2010guaranteed}.
Recall  the fact for the operator norm $\|\cdot\|$  that
\begin{align}\label{oparetornorm}
\|\mathcal{A}^{*}(\bm{\eta}_{T_1}-\bm{\eta}_{T_2})\|=
\langle \bm{u}\bm{v}^{*}, \mathcal{A}^{*}(\bm{\eta}_{T_1}-\bm{\eta}_{T_2})\rangle
=\sum_{i\in T_1\cup T_2 }\langle \bm{A}_{i},\bm{u}\bm{v}^{*}\rangle \eta_i
\leq  \|\mathcal{A}(\bm{u}\bm{v}^{*})\|_2\|\bm{\eta}_{T_1\cup T_2}\|_2
\overset{(a)}\leq \sqrt{2p}\|\bm{\eta}_{T_1\cup T_2}\|_2
\end{align}
 with high probability at least $1-\exp(-cp)$, where  $\bm{u}$ and $\bm{v}$ are the left and right singular vectors of
the matrix $\mathcal{A}^{*}(\bm{\eta}_{T_1}-\bm{\eta}_{T_2})$ corresponding on the top singular value.
The  inequality $(a)$ above follows from Lemma \ref{spaceSRIP} with $\bm{X}=\bm{u}\bm{v}^{*}$, $\mathrm{rank}{(\bm{u}\bm{v}^{*})}=1$,
$\|\bm{u}\bm{v}^{*}\|_F=1$ and $\theta_{+}\in (0,2)$.

Substituting  \eqref{oparetornorm}, $\|\bm{H}^{-}\|_{*}\leq \|\bm{H}_0\|_{*}+\|\bm{H}_{c}\|_{*}$ and   the fact that $\|\bm{\eta}_{T_1\cup T_2}\|_2\leq \|\bm{\eta}\|_2$ with $T_1\cap T_2=\emptyset$ into \eqref{e:H-A},  we have
\begin{align*}
\langle\bm{H}^{-},\mathcal{A}^{*}(\bm{\eta}_{T_1})-\mathcal{A}^{*}(\bm{\eta}_{T_2})\rangle
\leq \sqrt{2p}(\|\bm{H}_0\|_{*}+\|\bm{H}_{c}\|_{*})\|\bm{\eta}\|_2,
\end{align*}
which is \eqref{innerproduce2}.
\end{proof}

\appendix

\section*{Appendix}

\section{Proof of Lemma \ref{lemma3.1}}\label{prof:lemma3.1}

\begin{proof}
Given that $T_1 \cup T_2 \subseteq [p]$,  $\bm{y}=|\mathcal{A}(\bm{X}_0)|+\bm{\eta}$ and $\bm{H}^{-}=\bm{X}-\bm{X}_0$, we can derive the following inequality
\begin{align*}
\||\mathcal{A}({\bm{X}})|-\bm{y}\|_2^2
\geq &\sum_{i\in T_1} (\langle \bm{A}_i, {\bm{X}}\rangle-\langle \bm{A}_i, \bm{X}_0\rangle-\eta_i)^2
+\sum_{i\in T_2} (-\langle \bm{A}_i, {\bm{X}}\rangle+\langle \bm{A}_i, \bm{X}_0\rangle-\eta_i)^2\nonumber\\
=&\sum_{i\in T_1} (\langle \bm{A}_i, \bm{H}^{-}\rangle-\eta_i)^2
+\sum_{i\in T_2} (\langle \bm{A}_i, \bm{H}^{-}\rangle+\eta_i)^2\nonumber\\
=&\sum_{i\in T_1\cup T_2} (\langle \bm{A}_i, \bm{H}^{-}\rangle^2+\eta_i^2)
-\sum_{i\in T_1} 2\eta_i\langle \bm{A}_i, \bm{H}^{-}\rangle+\sum_{i\in T_2}2\eta_i\langle \bm{A}_i, \bm{H}^{-}\rangle,
\end{align*}
which corresponds to equality \eqref{imdent1}.
Similarly, for $\bm{H}^{+}=\bm{X}+\bm{X}_0$, we obtain
\begin{align}
\||\mathcal{A}(\bm{X})|-\bm{y}\|_2^2
\geq &\sum_{i\in T_3} (\langle \bm{A}_i, \bm{X}\rangle+\langle \bm{A}_i, \bm{X}_0\rangle-\eta_i)^2
+\sum_{i\in T_4} (-\langle \bm{A}_i, \bm{X}\rangle-\langle \bm{A}_i, \bm{X}_0\rangle-\eta_i)^2\nonumber\\
=&\sum_{i\in T_3} (\langle \bm{A}_i, \bm{H}^{+}\rangle-\eta_i)^2
+\sum_{i\in T_4} (\langle \bm{A}_i, \bm{H}^{+}\rangle+\eta_i)^2\nonumber\\
=&\sum_{i\in T_3\cup T_4} (\langle \bm{A}_i, \bm{H}^{+}\rangle^2+\eta_i^2)
-\sum_{i\in T_3} 2\eta_i\langle \bm{A}_i, \bm{H}^{+}\rangle+\sum_{i\in T_4}2\eta_i\langle \bm{A}_i, \bm{H}^+\rangle,\nonumber
\end{align}
which corresponds to equality \eqref{imdent1+}.
\end{proof}
\section{Proof of Lemma \ref{lem:upper1}} \label{prof:upper1}

Before proving Lemma \ref{lem:upper1}, we introduce a  lemma that is useful for its demonstration.

\begin{lemma}\label{lempro} \cite{vershynin2018high}
 Consider a random vector
$\bm{x}\in N(0, \bm{I}_n)$
and a Lipschitz function $f: \mathbb{R}^n\rightarrow \mathbb{R}$ with constant $\|f\|_{\mathrm{Lip}}:
|f(\bm{x})-f(\bm{y})|\leq \|f\|_{\mathrm{Lip}}\| \bm{x}-\bm{y}\|_2$.
Then for every $t \geq 0$,
we have
$$\mathbb{P }[|f(\bm{x})-\mathbb{E}[f(\bm{x})]| \geq t] \leq 2 \exp\Big(\frac{-ct^2}{\|f\|_{\mathrm{Lip}}^2}\Big).$$
\end{lemma}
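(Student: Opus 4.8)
The plan is to prove this as the classical Gaussian concentration inequality for Lipschitz functions, via the interpolation (``smart path'') method, which gives a self-contained argument using only elementary Gaussian computations. Throughout write $L := \|f\|_{\mathrm{Lip}}$ and $\bm{x} \sim N(0,\bm{I}_n)$.

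First I would reduce to the case of a continuously differentiable $f$. Since $f$ is Lipschitz, its mollification $f_\varepsilon := f * \varphi_\varepsilon$, the convolution with a smooth Gaussian kernel of width $\varepsilon$, is smooth, satisfies $\|f_\varepsilon\|_{\mathrm{Lip}} \leq L$ (hence $\|\nabla f_\varepsilon(\bm{x})\|_2 \leq L$ pointwise), and converges to $f$ uniformly as $\varepsilon \to 0$; in particular $f_\varepsilon(\bm{x}) \to f(\bm{x})$ and $\mathbb{E}[f_\varepsilon(\bm{x})] \to \mathbb{E}[f(\bm{x})]$. It therefore suffices to prove the tail bound for smooth $f$ with $\|\nabla f\|_2 \leq L$ and then pass to the limit.

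Next, for smooth $f$, I would bound the moment generating function $M(\lambda) := \mathbb{E}[\exp(\lambda(f(\bm{x}) - \mathbb{E} f(\bm{x})))]$. Let $\bm{x},\bm{y}$ be independent copies of $N(0,\bm{I}_n)$. Conditioning on $\bm{x}$ and applying Jensen gives $M(\lambda) \leq \mathbb{E}_{\bm{x},\bm{y}}[\exp(\lambda(f(\bm{x}) - f(\bm{y})))]$. Introduce the path $\bm{x}_\theta := \bm{x}\sin\theta + \bm{y}\cos\theta$ for $\theta \in [0,\pi/2]$, so $\bm{x}_0 = \bm{y}$, $\bm{x}_{\pi/2} = \bm{x}$, with derivative $\bm{x}_\theta' = \bm{x}\cos\theta - \bm{y}\sin\theta$; the structural fact driving the proof is that for each fixed $\theta$ the pair $(\bm{x}_\theta,\bm{x}_\theta')$ is a rotation of $(\bm{x},\bm{y})$, hence two \emph{independent} $N(0,\bm{I}_n)$ vectors. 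Writing $f(\bm{x}) - f(\bm{y}) = \int_0^{\pi/2} \langle \nabla f(\bm{x}_\theta), \bm{x}_\theta'\rangle\, d\theta$ and applying Jensen to the normalized integral yields
$$\exp\big(\lambda(f(\bm{x}) - f(\bm{y}))\big) \leq \frac{2}{\pi}\int_0^{\pi/2}\exp\Big(\tfrac{\pi\lambda}{2}\langle \nabla f(\bm{x}_\theta), \bm{x}_\theta'\rangle\Big)\, d\theta.$$
Taking expectations (justified by Fubini and the Gaussian moment bounds), using independence of $\bm{x}_\theta$ and $\bm{x}_\theta'$ to integrate out $\bm{x}_\theta'$ via its Gaussian MGF for fixed $\bm{x}_\theta$, and bounding $\|\nabla f(\bm{x}_\theta)\|_2 \leq L$, gives the sub-Gaussian estimate $M(\lambda) \leq \exp(\pi^2\lambda^2 L^2/8)$.

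Finally I would apply the Chernoff bound: for $t,\lambda > 0$, $\mathbb{P}[f(\bm{x}) - \mathbb{E} f(\bm{x}) \geq t] \leq e^{-\lambda t}M(\lambda)$, and optimizing over $\lambda$ (at $\lambda = 4t/(\pi^2 L^2)$) yields $\mathbb{P}[f(\bm{x}) - \mathbb{E} f(\bm{x}) \geq t] \leq \exp(-2t^2/(\pi^2 L^2))$. Running the same argument for $-f$ and taking a union bound produces the two-sided inequality with $c = 2/\pi^2$. The result is standard, so there is no deep obstacle; the only points requiring genuine care are the smoothing reduction in the first step (controlling $\|f_\varepsilon\|_{\mathrm{Lip}}$ and the convergence of both $f_\varepsilon(\bm{x})$ and $\mathbb{E}[f_\varepsilon(\bm{x})]$), since the interpolation identity needs pointwise differentiability, and the Fubini exchange when taking expectations through the $\theta$-integral. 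An alternative route that avoids mollification entirely is the Gaussian logarithmic Sobolev inequality combined with Herbst's argument, which delivers the same sub-Gaussian tail.
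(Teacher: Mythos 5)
Your proof is correct. Note, however, that the paper does not prove this lemma at all: it is imported verbatim as a known result from Vershynin's book, where the standard proof of Gaussian concentration (Theorem 5.2.2 there) is routed through the Gaussian isoperimetric inequality, a genuinely deep input. Your argument is the classical Maurey--Pisier ``smart path'' interpolation, and it is a legitimately different and more elementary route: every step --- the mollification reduction with $\|f_\varepsilon\|_{\mathrm{Lip}}\le L$ and uniform convergence, the symmetrization $M(\lambda)\le \mathbb{E}_{\bm{x},\bm{y}}\exp(\lambda(f(\bm{x})-f(\bm{y})))$ via Jensen, the rotation identity making $(\bm{x}_\theta,\bm{x}_\theta')$ an independent Gaussian pair, Jensen over the normalized $\theta$-integral, integrating out $\bm{x}_\theta'$ through its conditional Gaussian moment generating function to get $M(\lambda)\le\exp(\pi^2\lambda^2L^2/8)$, and the Chernoff optimization at $\lambda=4t/(\pi^2L^2)$ --- checks out, and the Tonelli exchange is unproblematic since the integrand is nonnegative. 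What the two approaches buy: the isoperimetric proof yields the sharp constant ($c=1/2$, and in fact the stronger statement that the median/mean deviations are dominated by a one-dimensional Gaussian), whereas your interpolation gives only $c=2/\pi^2$; since the lemma asserts an unspecified absolute constant $c$, and this is all the paper uses (in the proof of Lemma \ref{lem:upper1}), your weaker constant is fully sufficient, and your proof has the advantage of being self-contained. The only point deserving a word in a fully written version is the passage to the limit in the mollification step (replacing $t$ by $t-\delta$ to handle the closed event $\{|f(\bm{x})-\mathbb{E}f(\bm{x})|\ge t\}$ before letting $\delta\to 0$), which you have implicitly flagged and which is routine.
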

Now, we present the  proof of Lemma \ref{lem:upper1}.
\begin{proof}

For any fixed set $T\subset[p]$, we claim that
\begin{align}\label{e:supfixT}
\sup_{\bm{H}\in N_r^{*}}\sum_{i\in T} \eta_i\langle \bm{A}_i, \bm{H}\rangle
\leq \mathbb{E}\Big[\sup_{\bm{H}\in N_r^{*}}\sum_{i\in T} \eta_i\langle \bm{A}_i, \bm{H}\rangle\Big]+C_0\sqrt{p}\|\bm{\eta}\|_2
\end{align}
holds with probability at least $1- \exp(-p(c\cdot C_0^2 -\frac{\ln 2}{p}))$, where $C_0$ and $c$ are constants with
$C_0>0$ and $c  >2/C_0^2$.
Furthermore, we assert
\begin{align}\label{e:expect}
\mathbb{E}\Big[\sup_{\bm{H}\in N_r^{*}}\sum_{i\in T} \eta_i\langle \bm{A}_i, \bm{H}\rangle\Big]
\leq  C\sqrt{p}\|\bm{\eta}\|_2.
\end{align}
The proofs of these two assertions are presented in sections (I) and (II) that follow.
Consequently, for any fixed set $T\subset[p]$, we can establish that
\begin{align*}
\sup_{\bm{H}\in N_r^{*}}\sum_{i\in T} \eta_i\langle \bm{A}_i, \bm{H}\rangle
\lesssim \sqrt{p}\|\bm{\eta}\|_2
\end{align*}
holds with probability at least $1- \exp(-p(C_0^2 c-\frac{\ln 2}{p}))$. Given that  the  total number of subsets $T\subset[p]$ is $2^p$ and $c\cdot C_0^2 >2$, we can conclude that
$$\sup_{\bm{H}\in N_r^{*}, T\subset [p]} \sum_{i\in T} \eta_i\langle \bm{A}_i, \bm{H}\rangle\lesssim \sqrt{p} \|\bm{\eta}\|_2$$
 with probability at least $1-\exp(-p(C_0^2 c-\ln 2-\frac{\ln 2}{p}))$. This is the desired result.

(I) Proof of \eqref{e:supfixT}:
For the Gaussian measurement ensemble \( \mathcal{A} \), we define the matrix $\bm{A}$ as follows
$$
\bm{A}=\begin{pmatrix}
         \mathrm{vec}(\bm{A}_1)^{\top} \\
 \mathrm{vec}(\bm{A}_2)^{\top} \\
         \vdots \\
         \mathrm{vec}(\bm{A}_p)^{\top} \\
       \end{pmatrix}_{p\times mn},
$$
where $ \mathrm{vec}(\bm{A}_i)\in \mathbb{R}^{mn\times 1}$ is  a column vector obtained by stacking the columns of the matrix $\bm{A}_i$
 and the entries of the matrix $\bm{A}_i$ are independently drawn from a standard normal distribution, i.e., $\mathrm{vec}(\bm{A}_i) \sim \mathcal{N}(\bm{0}, \bm{I}_{mn})$.
For any fixed set \( T \subset [p] \), we have
\begin{equation}\label{e:1}
\sum_{i \in T} \eta_i \langle \bm{A}_i, \bm{H} \rangle = \langle \bm{\eta}_T, \mathcal{A}(\bm{H}) \rangle = \langle \bm{\eta}_T, \bm{A} \mathrm{vec}(\bm{H}) \rangle = \langle \mathrm{vec}(\bm{H}), \bm{A}^{\top} \bm{\eta}_T \rangle,
\end{equation}
where \( \bm{\eta}_T \in \mathbb{R}^p \) denotes a vector equal to $\bm{\eta}$ on the index set $T$ and zero elsewhere,
and \( \mathrm{vec}(\bm{H}) \in \mathbb{R}^{mn \times 1}\).
Define the function
$$f(\bm{B})=\sup_{\bm{h}\in N_r^v}\langle \bm{h}, \bm{B}^{\top}\bm{\eta}_T\rangle,$$
where $N_r^v$ denotes the set comprising the vectorization of all matrices in  $N_r^{*}$, and  $\bm{B}$ is a Gaussian matrix with i.i.d Gaussian entries, i.e., $B_{i,j}\sim \mathcal{N}(0,1)$.
For any Gaussian matrices $\bm{B}_1$ and $\bm{B}_2$, we have
\begin{equation}\label{eq:flip}
\Big|\sup_{\bm{h}\in N_r^v}\langle \bm{h}, \bm{B}_{1}^\top\bm{\eta}_T\rangle
-\sup_{\bm{h}\in N_r^v}\langle \bm{h}, \bm{B}_{2}^\top\bm{\eta}_T\rangle\Big|
\leq\sup_{\bm{h}\in N_r^v} |\langle (\bm{B}_{1}-\bm{B}_{2}) \bm{h}, \bm{\eta}_T\rangle|
\leq \|\bm{\eta}\|_2\|\bm{B}_{1}-\bm{B}_{2}\|_F,
\end{equation}
 where the second inequality follows from $\|\bm{h}\|_2\leq 1$ and
the fact $\|\bm{B}_{1}-\bm{B}_{2}\|\leq \|\bm{B}_{1}-\bm{B}_{2}\|_F$. Here,
$\|\cdot\|$ is the matrix operator norm (i.e., the matrix spectral norm).
The inequality (\ref{eq:flip}) implies $\|f\|_{\mathrm{Lip}}=\|\bm{\eta}\|_2$.
By Lemma \ref{lempro} with $\|f\|_{\mathrm{Lip}}=\|\bm{\eta}\|_2$,
we obtain
$$\mathbb{P}\Big[\Big|\sup_{\bm{h}\in N_r^v}\langle \bm{h}, \bm{B}^{\top}\bm{\eta}_T\rangle
-\mathbb{E}\Big[\sup_{\bm{h}\in N_r^v}\langle \bm{h}, \bm{B}^{\top}\bm{\eta}_T\rangle\Big]\Big|\geq t\Big]\leq 2\exp(-\frac{ct^2}{\|\bm{\eta}\|_2^2})$$
for every $t\geq0$. This
implies
$$\mathbb{P}\Big[\sup_{\bm{h}\in N_r^v}\langle \bm{h}, \bm{B}^{\top}\bm{\eta}_T\rangle
\geq \mathbb{E}\Big[\sup_{\bm{h}\in N_r^v}\langle \bm{h}, \bm{B}^{\top}\bm{\eta}_T\rangle\Big]+t\Big]\leq 2\exp(-\frac{ct^2}{\|\bm{\eta}\|_2^2})$$
for every $t\geq0$. Setting $t=C_0\sqrt{p}\|\bm{\eta}\|_2$ with $C_0>0$ and $C_0^2 \cdot c>2$, we obtain
$$\sup_{\bm{h}\in N_r^v}\langle \bm{h}, \bm{B}^{\top}\bm{\eta}_T\rangle
\leq \mathbb{E}\Big[\sup_{\bm{h}\in N_r^v}\langle \bm{h}, \bm{B}^{\top}\bm{\eta}_T\rangle\Big]+C_0\sqrt{p}\|\bm{\eta}\|_2,$$
with probability at least $1- \exp(-p(C_0^2 c-\frac{\ln 2}{p}))$.
This leads to the result
\begin{align}\label{e:E}
\sup_{\bm{H}\in N_r^{*}}\langle \mathrm{vec}(\bm{H}), \bm{B}^{\top} \bm{\eta}_T \rangle
\leq \mathbb{E}\Big[\sup_{\bm{H}\in N_r^{*}}\langle \mathrm{vec}(\bm{H}), \bm{B}^{\top}\bm{\eta}_T\rangle\Big]+C_0\sqrt{p}\|\bm{\eta}\|_2
\end{align}
with probability at least $1- \exp(-p(C_0^2 c-\frac{\ln 2}{p}))$.
Let $\bm{B}=\bm{A}$.
 Using the inequality \eqref{e:E} and the identity  \eqref{e:1}, we obtain \eqref{e:supfixT}.

(II) Proof of \eqref{e:expect}:
Using the identity \eqref{e:1},  we derive
\begin{align}\label{e:w}
\mathbb{E}\Big[\sup_{\bm{H}\in N_r^*}\sum_{i\in T} \eta_i\langle \bm{A}_i, \bm{H}\rangle\Big]
=\mathbb{E}\Big[\sup_{\mathrm{vec}(\bm{H})\in N_r^v}\langle\mathrm{vec}(\bm{H}), \bm{A}^{\top} \bm{\eta}_T\rangle\Big]
=\|\bm{\eta}_T\|_2w(N_r^v),
\end{align}
where  
$w(N_r^v)$ represents the Gaussian width of the set $N_r^v$. For the set $N_r^v$, we establish the following equalities for its Gaussian width
\begin{align*}
w(N_r^v)=\mathbb{E}\Big[\sup_{\mathrm{vec}(\bm{H})\in N_r^v}\langle \mathrm{vec}(\bm{H}), \bm{x}\rangle\Big]
=\mathbb{E}\Big[\sup_{\bm{H}\in N_r^{*}}\langle\bm{H}, \bm{X}\rangle\Big],
\end{align*}
where \( \bm{x} \in \mathbb{R}^{mn} \) is a random vector whose entries are independently drawn from the standard Gaussian distribution, and \( \bm{X} \in \mathbb{R}^{n \times m} \) is the matrix obtained by reshaping the vector $\bm{x}$ into an $n \times m$ matrix.
Given that the dual norm of the matrix operator norm $\|\cdot\|$ is the matrix nuclear norm $\|\cdot\|_{*}$
  \cite[Proposition 2.1]{recht2010guaranteed}, we establish the following inequality for any $\bm{H}\in N_r^{*}$:
  $$\langle\bm{H}, \bm{X}\rangle\leq \|\bm{X}\|\|\bm{H}\|_*\leq \sqrt{r}\|\bm{X}\|.$$
Furthermore, we derive
 $$\mathbb{E}\Big[\sup_{\bm{H}\in N_r^{*}}\langle\bm{H}, \bm{X}\rangle\Big]\leq \sqrt{r}\mathbb{E}[\|\bm{X}\|].$$
 Therefore, we obtain
 $$
 w(N_r^v)\leq \sqrt{r}\mathbb{E}[\|\bm{X}\|].
 $$
Substituting this inequality into  \eqref{e:w}, we obtain
\begin{align*}
\mathbb{E}\Big[\sup_{\bm{H}\in N_r^{*}}\sum_{i\in T} \eta_i\langle \bm{A}_i, \bm{H}\rangle\Big]
\leq \sqrt{r}\|\bm{\eta}_T\|_2\mathbb{E}[\|\bm{X}\|]\leq  C (\sqrt{m}+\sqrt{n})\sqrt{r}\|\bm{\eta}\|_2
\leq C\sqrt{p}\|\bm{\eta}\|_2,
\end{align*}
 where the second inequality follows from the fact that $\mathbb{E}[\|\bm{X}\|] \leq C (\sqrt{m} + \sqrt{n})$, as stated in \cite[Exercise 4.4.6]{vershynin2018high}, and the third inequality is derived from the condition $\sqrt{p} \geq (\sqrt{m} + \sqrt{n}) \sqrt{r}$.
 This completes the proof.
\end{proof}
\section{Proof of Lemma \ref{lem:unbound}} \label{prof:unbound}
Before proving Lemma \ref{lem:unbound}, we first introduce Lemmas \ref{lem:subG}--\ref{lem:sub} concerning Gaussian and sub-Gaussian random variables.
 A random variable \(\xi\) is characterized as sub-Gaussian if there exists a constant \(K > 0\) such that for all \(t > 0\), the following inequality holds:
\[\mathbb{P}[|\xi| \geq t] \leq 2 \exp(-Kt^2).\]
The sub-Gaussian norm of \(\xi\), denoted by \(\|\xi\|_{\psi_2}\), is defined as
\[\|\xi\|_{\psi_2} = \inf\{t > 0 : \mathbb{E}[\exp(\xi^2/t^2)] \leq 2\}.\]
For a comprehensive treatment of sub-Gaussian random variables, we refer the reader to Vershynin \cite[Definition 2.5.6]{vershynin2018high}.
 It is noteworthy that \(\|\cdot\|_{\psi_2}\) constitutes a norm on the space of sub-Gaussian random variables, as elucidated in \cite[Exercise 2.5.7]{vershynin2018high}.

\begin{lemma}\label{lem:subG}
Let $\xi$ be a random variable following a standard normal distribution, i.e., $\xi \sim \mathcal{N}(0,1)$. Then the following properties hold:

$\mathrm{(I)}$ The random variable $\xi$ is sub-Gaussian \cite[Exercise 2.5.8 (a)]{vershynin2018high}, with its sub-Gaussian norm bounded by an absolute constant $C_{\psi}$, i.e.,
\[\|\xi\|_{\psi_2} \leq C_{\psi}.\]

$\mathrm{(II)}$ For each $q \geq 1$, we have \cite[Exercise 2.5.1]{vershynin2018high}:
\[(\mathbb{E}[|\xi|^q])^{\frac{1}{q}} = \sqrt{2}\left(\frac{\Gamma((q+1)/2)}{\Gamma(1/2)}\right)^{\frac{1}{q}}.\]

$\mathrm{(III)}$  The moment generating function of the random variable \(\xi\), as given by \cite[(2.12)]{vershynin2018high}, is
$$\mathbb{E}[\exp(\lambda \xi)] = \exp(\lambda^2/2)$$
for all \(\lambda \in \mathbb{R}\).

\end{lemma}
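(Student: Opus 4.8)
The plan is to establish all three statements by directly evaluating the relevant integrals against the standard normal density $\phi(x)=\frac{1}{\sqrt{2\pi}}e^{-x^2/2}$, since each item is a classical and self-contained Gaussian computation. No probabilistic machinery beyond the definition of expectation and elementary calculus is required, which is why the statement is attributed to the corresponding exercises in Vershynin.

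For part (III) I would compute the moment generating function by completing the square: writing $\mathbb{E}[\exp(\lambda\xi)]=\frac{1}{\sqrt{2\pi}}\int_{-\infty}^{\infty}e^{\lambda x-x^2/2}\,dx$ and using $\lambda x-\frac{x^2}{2}=-\frac12(x-\lambda)^2+\frac{\lambda^2}{2}$, the factor $e^{\lambda^2/2}$ pulls out and the remaining integral is that of a shifted Gaussian density, hence equal to $1$; this gives $\mathbb{E}[\exp(\lambda\xi)]=e^{\lambda^2/2}$ for all $\lambda\in\mathbb{R}$. For part (II) I would exploit the evenness of $|x|^q e^{-x^2/2}$ and apply the substitution $u=x^2/2$, which turns the absolute moment into a Gamma integral: $\mathbb{E}[|\xi|^q]=\frac{2}{\sqrt{2\pi}}\int_0^{\infty}x^q e^{-x^2/2}\,dx=\frac{2^{q/2}}{\sqrt{\pi}}\int_0^{\infty}u^{(q-1)/2}e^{-u}\,du=\frac{2^{q/2}}{\sqrt{\pi}}\,\Gamma\!\left(\frac{q+1}{2}\right)$. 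Substituting $\sqrt{\pi}=\Gamma(1/2)$ and taking the $q$-th root yields the stated identity.

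For part (I) I would evaluate the quantity appearing in the definition of the sub-Gaussian norm, namely $\mathbb{E}[\exp(\xi^2/t^2)]=\frac{1}{\sqrt{2\pi}}\int_{-\infty}^{\infty}\exp\!\big(-\tfrac12(1-2/t^2)x^2\big)\,dx$. This integral is finite exactly when $t^2>2$, in which case it equals $(1-2/t^2)^{-1/2}$; imposing $(1-2/t^2)^{-1/2}\le 2$ gives $t^2\ge 8/3$, so $\|\xi\|_{\psi_2}=\sqrt{8/3}=:C_\psi$, an absolute constant (and the classical tail bound $\mathbb{P}[|\xi|\ge t]\le 2e^{-t^2/2}$ independently confirms that $\xi$ is sub-Gaussian). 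Since every computation is elementary, there is no genuine obstacle here; the only point meriting care is in part (I), where one must respect the convergence threshold $t^2>2$ before solving the inequality that pins down the explicit, dimension-free constant $C_\psi$.
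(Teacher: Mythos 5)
Your proposal is correct. The paper itself offers no proof of this lemma: all three items are imported verbatim from Vershynin's textbook (Exercise~2.5.8(a), Exercise~2.5.1, and equation~(2.12) of \cite{vershynin2018high}), so the ``paper's proof'' is pure citation. Your computations are exactly the standard derivations behind those citations, and each checks out: completing the square in (III) gives $\mathbb{E}[\exp(\lambda\xi)]=e^{\lambda^2/2}$; the substitution $u=x^2/2$ in (II) yields $\mathbb{E}[|\xi|^q]=\frac{2^{q/2}}{\sqrt{\pi}}\,\Gamma\!\left(\frac{q+1}{2}\right)$, which together with $\Gamma(1/2)=\sqrt{\pi}$ matches the stated identity; and in (I) the Gaussian integral $\mathbb{E}[\exp(\xi^2/t^2)]=(1-2/t^2)^{-1/2}$ for $t^2>2$, combined with the paper's Orlicz-type definition $\|\xi\|_{\psi_2}=\inf\{t>0:\mathbb{E}[\exp(\xi^2/t^2)]\le 2\}$, pins down $\|\xi\|_{\psi_2}=\sqrt{8/3}$ — in fact sharper than the lemma's claim of a bound by an unspecified absolute constant $C_\psi$. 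You also correctly flag the only delicate point, the convergence threshold $t^2>2$, and verify it is respected since $8/3>2$. The one thing your writeup buys beyond the citation is self-containedness and the explicit constant; the one thing the citation buys is brevity. No gaps.
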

\begin{lemma} \label{lem:eqinequa}\cite[Proposition 2.5.2]{vershynin2018high}:
For a random variable $\xi$ with $\mathbb{E}[\xi]=0$,
the moment generating function $\mathbb{E}[\exp(\lambda \xi)] = \exp(K_1^2\lambda^2)$ for all $\lambda\in \mathbb{R}$
 are equivalent to the tail  probability bound \(\mathbb{P}[|\xi| \geq t] \leq 2 \exp(-K_2t^2)\) for all $t\geq 0$,
where the parameters $K_1, K_2>0$.
\end{lemma}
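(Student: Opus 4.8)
The statement is the standard equivalence of two sub-Gaussian characterizations for a centered random variable, and the plan is to prove each implication separately, reading the displayed ``$=$'' as ``$\leq$'' up to absolute constants relating $K_1$ and $K_2$. The assertion is really that a bound of one form with a given constant yields a bound of the other form with a constant depending only on it.

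First I would dispatch the easy direction, that the moment generating function bound implies the tail bound. Assuming $\mathbb{E}[\exp(\lambda\xi)] \leq \exp(K_1^2\lambda^2)$ for all $\lambda$, I would apply Markov's inequality (the Chernoff method) to $\exp(\lambda\xi)$ for $\lambda>0$, giving $\mathbb{P}[\xi \geq t] \leq \exp(-\lambda t + K_1^2\lambda^2)$, and then optimize the right-hand side over $\lambda$ by taking $\lambda = t/(2K_1^2)$, which yields $\mathbb{P}[\xi \geq t] \leq \exp(-t^2/(4K_1^2))$. Since the MGF hypothesis holds for all $\lambda$, the same estimate applies to $-\xi$; combining the two one-sided bounds via a union bound gives $\mathbb{P}[|\xi| \geq t] \leq 2\exp(-K_2 t^2)$ with $K_2 = 1/(4K_1^2)$. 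Note this direction does not use $\mathbb{E}[\xi]=0$.

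The reverse direction, tail bound implies MGF bound, is the main obstacle, and it is precisely here that the hypothesis $\mathbb{E}[\xi]=0$ is indispensable. The plan is to first convert the tail estimate into moment control: integrating $\mathbb{P}[|\xi|\geq t] \leq 2\exp(-K_2 t^2)$ against $q\,t^{q-1}\,dt$ and recognizing a Gamma integral gives $\mathbb{E}[|\xi|^q] \leq q\,K_2^{-q/2}\,\Gamma(q/2)$, hence $(\mathbb{E}[|\xi|^q])^{1/q} \lesssim \sqrt{q/K_2}$ by Stirling's formula. To pass from moments to the MGF while producing a purely quadratic exponent, I would invoke the elementary inequality $e^x \leq x + e^{x^2}$, valid for all real $x$: substituting $x=\lambda\xi$, taking expectations, and using $\mathbb{E}[\xi]=0$ eliminates the linear term and leaves $\mathbb{E}[\exp(\lambda\xi)] \leq \mathbb{E}[\exp(\lambda^2\xi^2)]$. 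The final step bounds $\mathbb{E}[\exp(\lambda^2\xi^2)]$ by a power-series expansion into which the moment estimates are inserted; for $|\lambda|$ below a threshold set by $K_2$ this sums to $\exp(C\lambda^2/K_2)$, while for $|\lambda|$ above that threshold the crude splitting $\lambda\xi \leq \lambda^2/(2c) + c\xi^2/2$ together with the finiteness of $\mathbb{E}[\exp(c\xi^2)]$ (itself a consequence of the tail bound) closes the estimate, yielding $\mathbb{E}[\exp(\lambda\xi)] \leq \exp(K_1^2\lambda^2)$ for a suitable $K_1$ depending only on $K_2$.

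I anticipate the delicate point to be the case analysis on the magnitude of $\lambda$ in the second direction: the inequality $e^x \leq x + e^{x^2}$ is most effective for moderate $\lambda$, and without centering the small-$\lambda$ regime would contribute an unavoidable linear term, degrading the conclusion to the form $\exp(K_1^2\lambda^2 + c|\lambda|)$ rather than the clean quadratic exponent claimed. Tracking how $K_1$ and $K_2$ depend on one another through these constants is routine but I would state it explicitly so that the equivalence is quantitative rather than merely qualitative.
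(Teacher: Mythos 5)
Your proposal is correct, and since the paper gives no proof of this lemma but simply cites it as \cite[Proposition 2.5.2]{vershynin2018high}, the right comparison is with that standard proof — which yours reproduces essentially verbatim: Chernoff's method with the optimal $\lambda = t/(2K_1^2)$ and a union bound for the forward direction (correctly observing that centering is not needed there), and for the converse the tail-to-moment integration $(\mathbb{E}[|\xi|^q])^{1/q}\lesssim \sqrt{q/K_2}$ followed by the inequality $e^x \leq x + e^{x^2}$, where $\mathbb{E}[\xi]=0$ kills the linear term, with the small-/large-$|\lambda|$ case split closed by Young's inequality and $\mathbb{E}[\exp(c\xi^2)]<\infty$. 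Your charitable reading of the displayed ``$=$'' as an equivalence of one-sided bounds up to absolute constants is also the correct interpretation of the (slightly loosely stated) lemma, and it is consistent with how the paper uses it — only the forward direction is invoked, fed by the exact Gaussian identity $\mathbb{E}[\exp(\lambda\langle \bm{A}_i,\bm{X}\rangle)]=\exp(\lambda^2/2)$.
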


\begin{lemma}\label{lem:HTI}\cite[Theorem 2.6.3]{vershynin2018high}
Let $x_1, x_2, \ldots, x_p$ be independent, mean zero, sub-Gaussian  variables,
and $\bm{a}=(a_1,\ldots, a_p)^{\top}\in \mathbb{R}^p$.
Then, for every $t\geq 0$, we have
$$
\mathbb{P}\Big[\Big|\sum_{i=1}^{p} a_ix_i\Big|\geq t\Big]\leq 2\exp\left(-\frac{C_1t^2}{K^2\|\bm{a}\|_2^2}\right),
$$
where $K=\max_{l\in [p]}\|x_l\|_{\psi_2}$ and $C_1$ is an absolute constant.
\end{lemma}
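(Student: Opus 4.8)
The plan is to establish the bound by the classical Chernoff (exponential moment) method, reducing the two-sided deviation to a one-sided one and controlling the moment generating function (MGF) of the sum $S:=\sum_{i=1}^p a_i x_i$ through independence. First I would write $\mathbb{P}[|S|\geq t]\leq \mathbb{P}[S\geq t]+\mathbb{P}[-S\geq t]$; since each $-x_i$ is again mean-zero and sub-Gaussian with $\|-x_i\|_{\psi_2}=\|x_i\|_{\psi_2}$, it suffices to bound $\mathbb{P}[S\geq t]$ and double the resulting estimate. For the one-sided tail I would apply Markov's inequality to $\exp(\lambda S)$ with $\lambda>0$, giving $\mathbb{P}[S\geq t]\leq e^{-\lambda t}\,\mathbb{E}[\exp(\lambda S)]$, and then factor $\mathbb{E}[\exp(\lambda S)]=\prod_{i=1}^p \mathbb{E}[\exp(\lambda a_i x_i)]$ using independence. (The bound is vacuous when $\bm{a}=\bm{0}$, so one may assume $\|\bm{a}\|_2>0$.)

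The crux is a per-term MGF estimate. Each summand $a_i x_i$ is mean-zero and sub-Gaussian with $\|a_i x_i\|_{\psi_2}=|a_i|\,\|x_i\|_{\psi_2}\leq |a_i|K$, by homogeneity of the $\psi_2$-norm. Invoking Lemma~\ref{lem:eqinequa} — the equivalence, for a mean-zero variable, between the sub-Gaussian tail bound and an MGF bound of the form $\mathbb{E}[\exp(\lambda\xi)]\leq \exp(c\,\lambda^2\|\xi\|_{\psi_2}^2)$ — I obtain an absolute constant $c>0$ with $\mathbb{E}[\exp(\lambda a_i x_i)]\leq \exp(c\,\lambda^2 a_i^2 K^2)$ for every $\lambda\in\mathbb{R}$. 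Multiplying over $i$ yields $\mathbb{E}[\exp(\lambda S)]\leq \exp\!\big(c\,\lambda^2 K^2\sum_{i=1}^p a_i^2\big)=\exp(c\,\lambda^2 K^2\|\bm{a}\|_2^2)$.

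Combining the two steps gives $\mathbb{P}[S\geq t]\leq \exp(-\lambda t+c\,\lambda^2 K^2\|\bm{a}\|_2^2)$ for all $\lambda>0$. Optimizing the exponent over $\lambda$ — the minimizer is $\lambda=t/(2cK^2\|\bm{a}\|_2^2)$ — produces $\mathbb{P}[S\geq t]\leq \exp\!\big(-t^2/(4cK^2\|\bm{a}\|_2^2)\big)$. Applying the identical argument to $-S$ and adding the two tails contributes the factor $2$, so that $\mathbb{P}[|S|\geq t]\leq 2\exp\!\big(-C_1 t^2/(K^2\|\bm{a}\|_2^2)\big)$ with $C_1:=1/(4c)$, which is exactly the claimed inequality.

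The main obstacle is the per-term MGF bound, and specifically the bookkeeping of absolute constants when passing between the standard descriptions of sub-Gaussianity. The $\psi_2$-norm is defined here through the Orlicz condition $\mathbb{E}[\exp(\xi^2/t^2)]\leq 2$, whereas Lemma~\ref{lem:eqinequa} is phrased in terms of the tail bound $\mathbb{P}[|\xi|\geq t]\leq 2\exp(-K_2 t^2)$; I would first convert the Orlicz norm into such a tail bound (via Markov applied to $\exp(\xi^2/K^2)$, which yields $K_2\gtrsim 1/K^2$) and only then invoke the equivalence to reach the MGF form. The mean-zero hypothesis is essential precisely at this conversion, since it is what rules out a linear term in $\lambda$ and guarantees a purely quadratic MGF exponent; all constants produced along the way are absolute, so they can be absorbed into the single constant $C_1$.
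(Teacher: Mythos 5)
Your proof is correct and follows essentially the same route as the source from which the paper imports this lemma without proof (\cite[Theorem 2.6.3]{vershynin2018high}): Chernoff's method with the per-term MGF bound $\mathbb{E}[\exp(\lambda a_i x_i)] \le \exp(c\lambda^2 a_i^2 K^2)$, factorization by independence, optimization in $\lambda$, and doubling for the two-sided tail. One detail you handle more carefully than the paper itself: its Lemma \ref{lem:eqinequa} is stated with an equality for the MGF, whereas your inequality form $\mathbb{E}[\exp(\lambda \xi)] \le \exp(c\lambda^2 \|\xi\|_{\psi_2}^2)$, reached after converting the Orlicz-norm definition into a tail bound, is the correct version of the equivalence and is exactly what the factorization step requires.
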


\begin{lemma}\label{lem:sub}\cite[Lemma 2.6.3]{vershynin2018high}
If $\xi$ is a sub-Gaussian variable,  then $\xi-\mathbb{E}\xi$
is sub-gaussian and
$$\|\xi-\mathbb{E}\xi\|_{\psi_2}\leq \sqrt{C_2}\|\xi\|_{\psi_2}$$
where $C_2$ is an absolute constant.
\end{lemma}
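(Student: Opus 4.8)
The plan is to exploit the fact, already recorded in the excerpt, that $\|\cdot\|_{\psi_2}$ is a genuine norm on the space of sub-Gaussian random variables. I would regard the deterministic number $\mathbb{E}\xi$ as a degenerate constant random variable, write $\xi - \mathbb{E}\xi = \xi + (-\mathbb{E}\xi)$, and apply the triangle inequality to obtain
\[
\|\xi - \mathbb{E}\xi\|_{\psi_2} \leq \|\xi\|_{\psi_2} + \|\mathbb{E}\xi\|_{\psi_2}.
\]
Since $\xi$ is sub-Gaussian by hypothesis, $\|\xi\|_{\psi_2}<\infty$, so once I bound the second term by a finite multiple of $\|\xi\|_{\psi_2}$ both assertions of the lemma (sub-Gaussianity of $\xi-\mathbb{E}\xi$ and the quantitative norm bound) follow at once. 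The entire task thus reduces to controlling $\|\mathbb{E}\xi\|_{\psi_2}$.

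Next I would evaluate the $\psi_2$ norm of a constant directly from the definition. For a scalar $c$, the condition $\mathbb{E}[\exp(c^2/t^2)]\le 2$ is just $\exp(c^2/t^2)\le 2$, i.e. $t\ge |c|/\sqrt{\ln 2}$, so $\|c\|_{\psi_2} = |c|/\sqrt{\ln 2}$. Taking $c=\mathbb{E}\xi$ and using Jensen's inequality $|\mathbb{E}\xi|\le \mathbb{E}|\xi|$ gives $\|\mathbb{E}\xi\|_{\psi_2} = |\mathbb{E}\xi|/\sqrt{\ln 2} \le \mathbb{E}|\xi|/\sqrt{\ln 2}$. It therefore remains only to bound the first moment $\mathbb{E}|\xi|$ by $\|\xi\|_{\psi_2}$.

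The key step is this first-moment estimate, and I expect it to be the one requiring care, since it is where one passes from the exponential-moment definition of $\|\cdot\|_{\psi_2}$ to an integrable tail. Writing $K:=\|\xi\|_{\psi_2}$, the definition gives $\mathbb{E}[\exp(\xi^2/K^2)]\le 2$, and Markov's inequality applied to $\exp(\xi^2/K^2)$ yields the tail bound $\mathbb{P}[|\xi|\ge t]\le 2\exp(-t^2/K^2)$ for all $t\ge 0$. Integrating the tail, $\mathbb{E}|\xi| = \int_0^\infty \mathbb{P}[|\xi|\ge t]\,dt \le \int_0^\infty 2\exp(-t^2/K^2)\,dt = \sqrt{\pi}\,K$. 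Combining the three displays gives
\[
\|\xi - \mathbb{E}\xi\|_{\psi_2} \le \Big(1 + \sqrt{\tfrac{\pi}{\ln 2}}\Big)\|\xi\|_{\psi_2},
\]
which is exactly the claim with the absolute constant $\sqrt{C_2} = 1 + \sqrt{\pi/\ln 2}$. By contrast, the triangle inequality and the constant-variable computation are immediate, so the only genuine work lies in the Markov-plus-integration argument for $\mathbb{E}|\xi|\lesssim\|\xi\|_{\psi_2}$.
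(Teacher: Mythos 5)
Your proof is correct, and it follows essentially the same route as the source the paper cites for this lemma (the paper gives no proof of its own, deferring to \cite{vershynin2018high}): centering via the triangle inequality for $\|\cdot\|_{\psi_2}$, the exact computation $\|c\|_{\psi_2}=|c|/\sqrt{\ln 2}$ for constants, and the first-moment bound $\mathbb{E}|\xi|\le\sqrt{\pi}\,\|\xi\|_{\psi_2}$ obtained from the Markov tail bound, which is precisely the standard argument with constants made explicit. The only (harmless) technicality you gloss is that $\mathbb{E}[\exp(\xi^2/K^2)]\le 2$ at $K=\|\xi\|_{\psi_2}$ itself, which follows by monotone convergence as $t\downarrow K$; this also confirms $\mathbb{E}|\xi|<\infty$, so writing $\mathbb{E}\xi$ at the outset is justified.
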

\begin{proof}[Proof of Lemma \ref{lem:unbound}]
Let $\overline{U_r} \subset {U}_r:=\{\bm{H}\in \mathbb{R}^{n\times m}:\|\bm{H}\|_F=1, \mathrm{rank}(\bm{H})\leq r\}$ be an
$\epsilon$-net of $U_r$ with respect to the Frobenius norm.
For any $\bm{X}\in U_r$, there exists  $\bm{X}^{\epsilon}\in \overline{U_r}$ such that  $\|\bm{X}-\bm{X}^{\epsilon}\|_F\leq \epsilon$.
Applying the triangle inequality to the absolute value, we obtain
\begin{align*}
\bigg|\Big|\sum_{i=1}^p\eta_i|\langle  \bm{A}_{i},\bm{X}\rangle|\Big|
-\Big|\sum_{i=1}^p\eta_i|\langle  \bm{A}_{i},\bm{X}^{\epsilon}\rangle|\Big|\bigg|
\leq&\Big|\sum_{i=1}^p\eta_i(|\langle  \bm{A}_{i},\bm{X}\rangle|
-|\langle  \bm{A}_{i},\bm{X}^{\epsilon}\rangle|)\Big|\\
\leq& \sum_{i=1}^p|\eta_i|\Big||\langle  \bm{A}_{i},\bm{X}\rangle|
-|\langle  \bm{A}_{i},\bm{X}^{\epsilon}\rangle|\Big|\\
\leq& \sum_{i=1}^p|\eta_i||\langle  \bm{A}_{i},\bm{X}-\bm{X}^{\epsilon}\rangle|.
\end{align*}
That is
\begin{align}\label{e:add3}
\Big|\sum_{i=1}^p\eta_i|\langle  \bm{A}_{i},\bm{X}^{\epsilon}\rangle|\Big|-\sum_{i=1}^p|\eta_i||\langle  \bm{A}_{i},\bm{X}-\bm{X}^{\epsilon}\rangle|
\leq \Big|\sum_{i=1}^p\eta_i|\langle  \bm{A}_{i},\bm{X}\rangle|\Big|
\leq
\Big|\sum_{i=1}^p\eta_i|\langle  \bm{A}_{i},\bm{X}^{\epsilon}\rangle|\Big|+\sum_{i=1}^p|\eta_i||\langle  \bm{A}_{i},\bm{X}-\bm{X}^{\epsilon}\rangle|.
\end{align}
For the matrix $\bm{X}-\bm{X}^{\epsilon}$ with rank at most $2r$, provided \(p \geq (m + n) r\),
 the following inequality holds with probability at least \(1 - \exp(-cp)\):
\begin{align}\label{e:etaHab}
\sum_{i=1}^p |\eta_i| |\langle \bm{A}_{i}, \bm{X}-\bm{X}^{\epsilon} \rangle|
\leq \| \bm{\eta} \|_2 \| \mathcal{A}(\bm{X}-\bm{X}^{\epsilon}) \|_2 \leq \theta_{+} \sqrt{p} \| \bm{\eta} \|_2\|\bm{X}-\bm{X}^{\epsilon}\|_2\leq
\theta_{+}\sqrt{p} \| \bm{\eta} \|_2\epsilon,
\end{align}
where the second inequality follows from Lemma \ref{spaceSRIP}.

For any $\bm{X}^{\epsilon}\in \overline{U_r}$, we claim that
\begin{align}\label{e:Aeatuppera1}
\Big|\sum_{i=1}^p\eta_i|\langle  \bm{A}_{i},\bm{X}^{\epsilon}\rangle|\Big|
\leq t+\sqrt{\frac{2}{\pi}}\Big|\sum_{i=1}^p\eta_i\Big|
\end{align}
and
\begin{align}\label{e:Aeatuppera2}
\Big|\sum_{i=1}^p\eta_i|\langle  \bm{A}_{i},\bm{X}^{\epsilon}\rangle|\Big|
\geq \sqrt{\frac{2}{\pi}}\Big|\sum_{i=1}^p\eta_i\Big|-t
\end{align}
each holding with probability  at least $1- 2\exp(-\frac{C_1t^2}{C_2C_{\psi}^2\|\bm{\eta}\|_2^2}+ (m+n+1)r\log ({9}/{\epsilon}))$.
We will provide the arguments of \eqref{e:Aeatuppera1} and \eqref{e:Aeatuppera2} at the end of the proof.

Substituting  \eqref{e:Aeatuppera1} and \eqref{e:etaHab} into the right-hand of  inequality \eqref{e:add3}, we obtain that
\begin{align}\label{e:upperbound}
\Big|\sum_{i=1}^p\eta_i|\langle  \bm{A}_{i},\bm{X}\rangle|\Big|
\leq t+\sqrt{\frac{2}{\pi}}\Big|\sum_{i=1}^p\eta_i\Big|
+\theta_{+}\sqrt{p}\|\bm{\eta}\|_2\epsilon
\end{align}
holds with probability at least $1-\exp(-cp)-2\exp(-\frac{C_1t^2}{C_2C_{\psi}^2\|\bm{\eta}\|_2^2}+ (m+n+1)r\log ({9}/{\epsilon}))$.
Similarly, for the left-hand of  the  inequality \eqref{e:add3}, using  \eqref{e:Aeatuppera2} and \eqref{e:etaHab},  we have
\begin{align}\label{e:lowerbound}
\Big|\sum_{i=1}^p\eta_i|\langle  \bm{A}_{i},\bm{X}\rangle|\Big|
&\geq\sqrt{\frac{2}{\pi}}\Big|\sum_{i=1}^p\eta_i\Big|-t
-\theta_{+}\sqrt{p}\|\bm{\eta}\|_2\epsilon
\end{align}
with probability at least $1-\exp(-cp)-2\exp(-\frac{C_1t^2}{C_2C_{\psi}^2\|\bm{\eta}\|_2^2}+ (m+n+1)r\log ({9}/{\epsilon}))$.

Now, we are ready to prove the results $\mathrm{(I)}$ and $\mathrm{(II)}$.

$\mathrm{(I)}$ Taking $t= \sqrt{\frac{3(m+n+1)rC_2\log p}{2C_1}}C_{\psi}\|\bm{\eta}\|_2$ and $\epsilon=9/\sqrt{p}$ in \eqref{e:upperbound}, we obtain that
\begin{align*}
\Big|\sum_{i=1}^p\eta_i|\langle  \bm{A}_{i},\bm{X}\rangle|\Big|
\leq&  \sqrt{\frac{3(m+n+1)rC_2\log p}{2C_1}}C_{\psi}\|\bm{\eta}\|_2+\sqrt{\frac{2}{\pi}}\Big|\sum_{i=1}^p\eta_i\Big|
+9\theta_{+}\|\bm{\eta}\|_2\\
\lesssim& \sqrt{(m+n+1)r\log p}\|\bm{\eta}\|_2+\Big|\sum_{i=1}^p\eta_i\Big|
\end{align*}
holds with probability at least $1- \exp(-cp)-2\exp(-(m+n+1)r\log p)$.

$\mathrm{(II)}$ Substituting  the condition $\Big|\sum_{i=1}^p\eta_i\Big|\geq C_0\sqrt{p}\|\bm{\eta}\|_2$ with $C_0\in (0,1)$ into \eqref{e:lowerbound},  we obtain
that
\begin{align*}
\Big|\sum_{i=1}^p\eta_i|\langle  \bm{A}_{i},\bm{X}\rangle|\Big|
\geq C_0\sqrt{\frac{ 2 p}{\pi}}\|\bm{\eta}\|_2-t-\theta_{+}\sqrt{p}\|\bm{\eta}\|_2\epsilon
\end{align*}
 holds with a probability of at least $1-\exp(-cp)-2\exp(-\frac{C_1t^2}{C_2C_{\psi}^2\|\bm{\eta}\|_2^2}+ (m+n+1)r\log ({9}/{\epsilon}))$.
We can take $t= C_0\sqrt{\frac{p}{2\pi}}\|\bm{\eta}\|_2$ and $\epsilon<\min\Big\{\frac{C_0}{\sqrt{2\pi}C}, 9\exp(-\frac{C_1C_0^2}{2\pi C_2C_{\psi}^2})\Big\}$ in \eqref{e:lowerbound}. This leads to
$$\Big|\sum_{i=1}^p\eta_i|\langle  \bm{A}_{i},\bm{X}\rangle|\Big|
\gtrsim\sqrt{p}\|\bm{\eta}\|_2,
$$
which holds with a probability of at least $1-3\exp(-cp)$.


Now, we prove \eqref{e:Aeatuppera1} and \eqref{e:Aeatuppera2}.
We denote the $(k,l)$-th entries of $\bm{X}$ and $\bm{A}_i$ by $X_{kl}$ and $(A_i)_{kl}$, respectively.
For any fixed $\bm{X}\in U_r$, 
since each elements $(A_i)_{kl}$ of $\bm{A}_i$ is  independently drawn from $\mathcal{N}(0,1)$,
then $X_{kl}\cdot (A_i)_{kl}\sim \mathcal{N}(0,X_{kl}^2)$, which implies
$\langle \bm{A}_{i},\bm{X}\rangle\sim \mathcal{N}(0,1)$. Here, we use $\|\bm{X}\|_F=1$.
Moreover, for all $i \in [p]$, we have
\begin{align}\label{e:etaA}
\|\langle \bm{A}_{i},\bm{X}\rangle\|_{\psi_2} \overset{(a)}{\leq} C_{\psi}, \quad
\mathbb{E}[|\langle \bm{A}_{i},\bm{X}\rangle|]\overset{(b)}{=}\sqrt{2}\frac{\Gamma(1)}{\Gamma(1/2)}=\sqrt{\frac{2}{\pi}},
\quad \mathbb{E}[\exp(\lambda \langle \bm{A}_{i},\bm{X}\rangle)] \overset{(c)}= \exp(\lambda^2/2)\ \  \text{for all } \lambda \in \mathbb{R},
\end{align}
where $C_{\psi}$ is an absolute constant. Here, ($a$), ($b$) and ($c$) follow from Lemma \ref{lem:subG} (I), (II) and (III), respectively.
Combining the third equality in \eqref{e:etaA} with Lemma \ref{lem:eqinequa}, we have
$$
\mathbb{P}[|\langle \bm{A}_{i},\bm{X}\rangle|\geq t]\leq 2\exp(-K_2t^2)
$$
for all $t\geq 0$, where $K_2>0$.
Furthermore, by the definition of a sub-Gaussian variable, it is  evident that  that the random variable $|\langle \bm{A}_i, \bm{X} \rangle|$ is sub-Gaussian.
Consequently, by the symmetry of $\langle \bm{A}_{i},\bm{X}\rangle\sim \mathcal{N}(0,1)$,  we have
 $$\||\langle \bm{A}_{i},\bm{X}\rangle|\|_{\psi_2} =\|\langle \bm{A}_{i},\bm{X}\rangle\|_{\psi_2} \leq C_{\psi}.$$
 Combining this with Lemma \ref{lem:sub}, it is evident that \(|\langle \bm{A}_i, \bm{X} \rangle| - \mathbb{E}[|\langle \bm{A}_i, \bm{X} \rangle|]\) is also sub-Gaussian and
$$
\||\langle \bm{A}_{i},\bm{X}\rangle|-\mathbb{E}[|\langle \bm{A}_{i},\bm{X}\rangle|]\|_{\psi_2}\leq \sqrt{C_2}\||\langle \bm{A}_{i},\bm{X}\rangle|\|_{\psi_2}\leq
\sqrt{C_2}C_{\psi}.
$$
For the sub-Gaussian variables $|\langle \bm{A}_{i},\bm{X}\rangle|-\mathbb{E}[|\langle \bm{A}_{i},\bm{X}\rangle|]$ with $i\in [p]$,
 by Lemma \ref{lem:HTI} with $t\geq 0$, one has
\begin{align}\label{e:lem4e12}
\mathbb{P}\Big[\Big|\sum_{i=1}^p\eta_i(|\langle \bm{A}_{i},\bm{X}\rangle|-\mathbb{E}[|\langle \bm{A}_{i},\bm{X}\rangle|])\Big|
\geq t\Big]
\leq 2\exp\left(-\frac{C_1t^2}{C_2C_{\psi}^2\|\bm{\eta}\|_2^2}\right).
\end{align}
For the term  $\Big|\sum_{i=1}^p\eta_i\mathbb{E}[|\langle \bm{A}_{i},\bm{X}\rangle|]\Big|$, we have
\begin{align}\label{e:meanupper}
\Big|\sum_{i=1}^p\eta_i\mathbb{E}[|\langle \bm{A}_{i},\bm{X}\rangle|]\Big|=\sqrt{\frac{2}{\pi}}\Big|\sum_{i=1}^p\eta_i\Big|,
\end{align}
which follows from  the second  equality in \eqref{e:etaA}.
Substituting \eqref{e:meanupper} into \eqref{e:lem4e12}, we obtain
\begin{align}\label{e:lem4e1}
\mathbb{P}\bigg[\bigg|\Big|\sum_{i=1}^p\eta_i|\langle \bm{A}_{i},\bm{X}\rangle|\Big|-
\sqrt{\frac{2}{\pi}}\Big|\sum_{i=1}^p\eta_i\Big|\bigg|
\geq t\bigg]
\leq 2\exp\left(-\frac{C_1t^2}{C_2C_{\psi}^2\|\bm{\eta}\|_2^2}\right)
\end{align}
for all $t>0$.

The inequality \eqref{e:lem4e1} directly yields the following probabilistic bounds:
\begin{align}\label{e:proeta1}
\mathbb{P}\Big[\Big|\sum_{i=1}^p\eta_i|\langle \bm{A}_{i},\bm{X}\rangle|\Big|
\leq t+\sqrt{\frac{2}{\pi}}\Big|\sum_{i=1}^p\eta_i\Big|\Big]
\geq 1-2\exp\left(-\frac{C_1t^2}{C_2C_{\psi}^2\|\bm{\eta}\|_2^2}\right)
\end{align}
and
\begin{align}\label{e:proeta2}
\mathbb{P}\Big[\Big|\sum_{i=1}^p\eta_i|\langle \bm{A}_{i},\bm{X}\rangle|\Big|
\geq \sqrt{\frac{2}{\pi}}\Big|\sum_{i=1}^p\eta_i\Big|-t\Big]
\geq1- 2\exp\left(-\frac{C_1t^2}{C_2C_{\psi}^2\|\bm{\eta}\|_2^2}\right).
\end{align}

Let  $\overline{U_r} \subset {U}_r$ be an $\epsilon$-net
of $U_r$ in the Frobenius norm as given by Lemma \ref{cover},
 with cardinality $\#\overline{U_r}\leq ({9}/{\epsilon})^{(m+n+1)r}$.
Applying \eqref{e:proeta1} and taking a union bound over $\overline{U_r}$, we obtain that,
with probability at least $1- 2\exp(-\frac{C_1t^2}{C_2C_{\psi}^2\|\bm{\eta}\|_2^2}+ (m+n+1)r\log ({9}/{\epsilon}))$,
the following inequality holds:
\begin{align*}
\Big|\sum_{i=1}^p\eta_i|\langle  \bm{A}_{i},\bm{X}\rangle|\Big|
\leq t+\sqrt{\frac{2}{\pi}}\Big|\sum_{i=1}^p\eta_i\Big|
\end{align*}
which is exactly \eqref{e:Aeatuppera1}.

 Similarly,
by \eqref{e:proeta1} and the same union bound over
the $\epsilon$-net $\overline{U_r}$, we also have
\begin{align*}
\Big|\sum_{i=1}^p\eta_i|\langle  \bm{A}_{i},\bm{X}\rangle|\Big|
\geq \sqrt{\frac{2}{\pi}}\Big|\sum_{i=1}^p\eta_i\Big|-t
\end{align*}
with probability at least $1- 2\exp\left(-\frac{C_1t^2}{C_2C_{\psi}^2\|\bm{\eta}\|_2^2}+ (m+n+1)r\log \left(\frac{9}{\epsilon}\right)\right)$.
This is  exactly\eqref{e:Aeatuppera2}.
\end{proof}

\end{document}